\g@addto@macro\bfseries{\boldmath}
\g@addto@macro\mdseries{\unboldmath}
\g@addto@macro\normalfont{\unboldmath}
\g@addto@macro\rmfamily{\unboldmath}
\g@addto@macro\upshape{\unboldmath}
\renewcommand*{\multicitedelim}{\addcomma\space}
\newcommand{\myhref}[1]{%
  \iffieldundef{doi}
    {\iffieldundef{url}
       {#1}
       {\href{\strfield{url}}{#1}}}
    {\href{http://dx.doi.org/\strfield{doi}}{#1}}%
}
    \newlength{\temp@x}%
    \newlength{\temp@y}%
    \newlength{\temp@w}%
    \newlength{\temp@h}%
    \def\my@coords#1#2#3#4{%
      \setlength{\temp@x}{#1}%
      \setlength{\temp@y}{#2}%
      \setlength{\temp@w}{#3}%
      \setlength{\temp@h}{#4}%
      \adjustlengths{}%
      \my@pdfliteral{\strip@pt\temp@x\space\strip@pt\temp@y\space\strip@pt\temp@w\space\strip@pt\temp@h\space re}}%
      \def\my@pdfliteral#1{\pdfliteral page{#1}}
      \def\adjustlengths{}%
      \def\my@pdfliteral #1{}
      \def\adjustlengths{\setlength{\temp@h}{-\temp@h}\addtolength{\temp@y}{1in}\addtolength{\temp@x}{-1in}}%
    \def\Hy@colorlink#1{%
      \begingroup
        \ifHy@ocgcolorlinks
          \def\Hy@ocgcolor{#1}%
          \my@pdfliteral{q}%
          \my@pdfliteral{7 Tr}
        \else
          \HyColor@UseColor#1%
        \fi
    }%
    \def\Hy@endcolorlink{%
      \ifHy@ocgcolorlinks%
        \my@pdfliteral{/OC/OCPrint BDC}%
        \my@coords{0pt}{0pt}{\pdfpagewidth}{\pdfpageheight}%
        \my@pdfliteral{F}
        %
        \my@pdfliteral{EMC/OC/OCView BDC}%
        \begingroup%
          \expandafter\HyColor@UseColor\Hy@ocgcolor%
          \my@coords{0pt}{0pt}{\pdfpagewidth}{\pdfpageheight}%
          \my@pdfliteral{F}
        \endgroup%
        \my@pdfliteral{EMC}%
        \my@pdfliteral{0 Tr}
        \my@pdfliteral{Q}%
      \fi
      \endgroup
    }%
\colorlet{DarkRed}{red!50!black}
\colorlet{DarkGreen}{green!50!black}
\colorlet{DarkBlue}{blue!50!black}
\declaretheorem[numberwithin=section]{theorem}
\declaretheorem[numberlike=theorem]{lemma}
\declaretheorem[numberlike=theorem]{definition}
\crefname{algorithm}{Procedure}{Procedures}
\Crefname{algorithm}{Procedure}{Procedures}
\newcommand{\cC}{\mathcal{C}\xspace}
\newcommand{\cA}{\mathcal{A}\xspace}
\newcommand{\sssp}{{\sf SSSP}\xspace}
\newcommand{\stsp}{\textsf{stSP}\xspace}
\newcommand{\dist}{d}
\newcommand{\hdist}{\mathit{hop}} 
\newcommand{\distest}{\hat{d}}
\newcommand{\diam}{{D}\xspace}
\newcommand{\wdiam}{\mathit{WD}}
\newcommand{\ball}{B}
\newcommand{\CC}{\mathcal{L}}
\newcommand{\bunch}{\mathit{B}}
\newcommand{\clust}{\mathit{C}}
\newcommand{\eps}{\epsilon}
\newcommand{\congest}{\textsf{CONGEST} }
\newcommand{\poly}{\operatorname{poly}}
\newcommand{\polylog}{\operatorname{polylog}}
\title{A Deterministic Almost-Tight Distributed Algorithm for Approximating Single-Source Shortest Paths\thanks{Accepted to \emph{SIAM Journal on Computing}. A preliminary version of this paper was presented at the \emph{48th ACM Symposium on Theory of Computing (STOC 2016)}. The research leading to these results has received funding from the European Research Council under the European Union's Seventh Framework Programme (FP7/2007-2013) / ERC grant agreement No.\ 340506 and ERC grant agreement No.\ 317532. Supported by Swedish Research Council grant 2015-04659 ``Algorithms and Complexity for Dynamic Graph Problems.'' M.\ Henzinger's work was done in part while visiting the Simons Institute for the Theory of Computing. S.\ Krinninger's work was done in part while at the University of Vienna, Austria, while visiting the Simons Institute for the Theory of Computing, and while at the Max Planck Institute for Informatics, Saarland Informatics Campus, Germany.}}
\author{Monika Henzinger\thanks{University of Vienna, Faculty of Computer Science, Austria} \and Sebastian Krinninger\thanks{Department of Computer Sciences, University of Salzburg, Austria.} \and Danupon Nanongkai\thanks{Department of Theoretical Computer Science, KTH Royal Institute of Technology, Sweden.}}
\date{}
\begin{document}
\maketitle
\begin{abstract}
We present a deterministic $(1+o(1))$-approximation $(n^{1/2+o(1)}+\diam^{1+o(1)})$-time algorithm for solving the single-source shortest paths problem on distributed weighted networks (the \congest model); here $n$ is the number of nodes in the network, $\diam$ is its (hop) diameter, and edge weights are positive integers from $ 1 $ to $ \poly (n) $.
This is the first nontrivial deterministic algorithm for this problem.
It also improves (i) the running time of the randomized $(1+o(1))$-approximation $\tilde O(\sqrt{n} \diam^{1/4} + \diam)$-time\footnote{Throughout, we use $\tilde O (\cdot)$ to hide polylogarithmic factors in $n$.} algorithm of Nanongkai~\citem[STOC 2014]{Nanongkai-STOC14} by a factor of as large as~$n^{1/8}$, and (ii) the $O(\epsilon^{-1}\log \epsilon^{-1})$-approximation factor of Lenzen and Patt-Shamir's $\tilde O(n^{1/2+\epsilon}+\diam)$-time algorithm~\citem[STOC 2013]{LenzenP_stoc13} within the same running time.
Our running time matches the known time lower bound of $\Omega(\sqrt{n / \log{n}} + \diam)$~\citem[Elkin, STOC 2004]{Elkin06} up to subpolynomial factors, thus essentially settling the status of this problem which was raised at least a decade ago \citem[Elkin, SIGACT News 2004]{Elkin04}.
It also implies a $(2+o(1))$-approximation $(n^{1/2+o(1)}+\diam^{1+o(1)})$-time algorithm for approximating a network's weighted diameter which almost matches the lower bound by Holzer and Pinsker~\citem[OPODIS 2015]{HolzerP15}.

In achieving this result, we develop two techniques which might be of independent interest and useful in other settings: (i) a deterministic process that replaces the ``hitting set argument'' commonly used for shortest paths computation in various settings, and (ii) a simple, deterministic construction of an {\em $(n^{o(1)}, o(1))$-hop set} of size $ n^{1+o(1)} $.
We combine these techniques with many distributed algorithmic techniques, some of which are from problems that are not directly related to shortest paths, e.g., ruling sets \citem[Goldberg et al., STOC 1987]{GoldbergPS88}, source detection \citem[Lenzen and Peleg, PODC 2013]{LenzenP_podc13}, and partial distance estimation \citem[Lenzen and Patt-Shamir, PODC 2015]{LenzenP14a-distance}.
Our hop set construction also leads to single-source shortest paths algorithms in two other settings: (i) a $(1+o(1))$-approximation $ n^{o(1)} $-time algorithm on {\em congested cliques}, and (ii) a $(1+o(1))$-approximation  $ n^{o(1)} $-pass $ n^{1+o(1)} $-space {\em streaming} algorithm.
The first result answers an open problem in~\citem[Nanongkai, STOC 2014]{Nanongkai-STOC14}.
The second result partially answers an open problem raised by McGregor in 2006~\citem[{\tt sublinear.info}, Problem 14]{sublinear_open_14}.

\end{abstract}

\section{Introduction}\label{sec:intro}

In the area of {\em distributed graph algorithms} we study the complexity required for a network to compute its own topological properties, such as minimum spanning tree, maximum matching, or distances between nodes.
A fundamental question in this area that has been studied for many years is how much {\em time complexity} is needed to solve a problem in the so-called {\em \congest model} (e.g., \cite{GarayKP98,PelegR00,Elkin06,DasSarmaHKKNPPW12,LenzenP_stoc13}).
In this model (see \Cref{sec:prelim} for details), a network is modeled by a weighted undirected graph $G$, where each node represents a processor that initially only knows its adjacent edges and their weight, and nodes must communicate with each other over \emph{bounded-bandwidth} links to discover global topological properties of the network. The communication between nodes is carried out in {\em rounds}, where in each round each node can send a small, logarithmic-sized message to each neighbor. The time complexity is measured as the number of rounds needed to finish the task. It is usually measured by~$n$, the number of nodes in the network, and $\diam$, the diameter of the communication network (when edge weights are omitted). Typically, $\diam\ll n$. 

In this paper, 
we consider the problem of {\em approximating single-source shortest paths} (\sssp). 
In this problem, a node $s$ is marked as the {\em source node}, and the goal is for every node to know how far it is from $s$. 
The unweighted version---the {\em breadth-first search tree} computation---is one of the most basic tools in distributed computing, and is well known to require $\Theta(\diam)$ time (see, e.g., \cite{Peleg00_book}). In contrast, the only available solution for the weighted case is the distributed version of the Bellman--Ford algorithm \cite{Bellman58,Ford56}, which takes $O(n)$ time to compute an exact solution. 
In 2004, Elkin~\cite{Elkin04} raised the question of whether distributed {\em approximation} algorithms can help in improving this time complexity and showed that any (randomized) $\alpha$-approximation algorithm requires $\Omega(\sqrt{n/(\alpha \log n))} + \diam)$ time~\cite{Elkin06}, which in particular means $\Omega(\sqrt{n  / \log n} + \diam)$ time for any constant-factor approximation.
Das~Sarma~et~al.~\cite{DasSarmaHKKNPPW12} (building on \cite{PelegR00,KorKP13}) later showed that even any (randomized) $\poly(n)$-approximation algorithm requires $\Omega(\sqrt{n} / \log n + \diam)$ time. This lower bound  was later shown to hold even for quantum algorithms \cite{ElkinKNP14}. 

Since running times of the form $\tilde O(\sqrt{n} + \diam)$ show up in many distributed algorithms (e.g., minimum spanning tree~\cite{KuttenP98,PelegR00}, connectivity~\cite{Thurimella97,PritchardT11}, and minimum cut~\cite{NanongkaiS14_disc,GhaffariK13}), it is natural to ask whether the lower bound of~\cite{Elkin06} can be matched. The first answer to this question is a randomized $O(\epsilon^{-1}\log \epsilon^{-1})$-approximation $\tilde O(n^{1/2+\epsilon}+\diam)$-time algorithm by Lenzen and Patt-Shamir~\cite{LenzenP_stoc13}\footnote{Note that the result of Lenzen and Patt-Shamir in fact solves a more general problem.}. The running time of this algorithm is nearly tight if we are satisfied with a large approximation ratio. For a small approximation ratio, Nanongkai~\cite{Nanongkai-STOC14} presented a randomized $(1+o(1))$-approximation $\tilde O(\sqrt{n} \diam^{1/4} + \diam)$-time algorithm. 
The running time of this algorithm is nearly tight when $\diam$ is small, but can be close to $\tilde \Theta(n^{2/3})$ even when $\diam = o(n^{2/3})$. 
This created a rather unsatisfying situation: First, one has to sacrifice a large approximation factor in order to achieve the near-optimal running time, and to achieve a $(1+o(1))$-approximation factor, one must pay an additional running time of $\diam^{1/4}$ which could be as far from the lower bound as $n^{1/8}$ when $\diam$ is large. Because of this, the question of whether we can close the gap between upper and lower bounds for the running time of $(1+o(1))$-approximation algorithms was left as the main open problem in \cite[Problem 7.1]{Nanongkai-STOC14}. 
Second, and more importantly, both these algorithms are randomized. Given that designing deterministic algorithms is an important issue in distributed computing, this leaves the important open problem of whether there is a deterministic algorithm that is faster than the Bellman--Ford algorithm, i.e., that runs in {\em sublinear time}.

\subsection{Our Results} In this paper, we resolve the two issues above. We present a deterministic $(1+o(1))$-approximation $(n^{1/2+o(1)}+\diam^{1+o(1)})$-time algorithm for this problem (the $o(1)$ term in the approximation ratio hides a $1/\polylog n$ factor, and the $o(1)$ term in the running time hides an $O(\sqrt{\log{\log{n}} / \log{n}})$ factor). Our algorithm almost settles the status of this problem as its running time matches the lower bound of Elkin~\cite{Elkin06} up to an $ n^{o(1)} $ factor.

Since an $\alpha$-approximate solution to \sssp gives a $2\alpha$-approximate value of the network's {\em weighted diameter} (cf. \Cref{sec:prelim}), our algorithm can $(2+o(1))$-approximate the weighted diameter within the same running time. Previously, Holzer and Pinsker~\cite{HolzerP15} (building on~\cite{HolzerW12}) showed that for any $\epsilon>0$, a $(2-\epsilon)$-approximation algorithm for this problem requires $\tilde \Omega(n)$ time. Thus, the approximation ratio provided by our algorithm cannot be significantly improved without  increasing the running time.

Using the same techniques, we also obtain a deterministic $(1+o(1))$-approximation $ n^{o(1)} $-time algorithm for the special case of {\em congested clique}, where the underlying network is fully connected.
This gives a positive answer to Problem~7.5 in \cite{Nanongkai-STOC14}. Previous algorithms solved this problem exactly in time $\tilde O(\sqrt{n})$~\cite{Nanongkai-STOC14} and $\tilde O(n^{1/3})$~\cite{CensorHillelKKLPS15}, respectively, and $(1+o(1))$-approximately in time $O(n^{0.158})$~\cite{CensorHillelKKLPS15}\footnote{With this running time, \cite{CensorHillelKKLPS15} can in fact solve the all-pairs shortest paths problem. See also \cite{LeGall16} for further developments in the direction of \cite{CensorHillelKKLPS15}.}.
We can also compute a $(2+o(1))$-approximation of the weighted diameter within the same running time.
The lower bound of Holzer and Pinsker~\cite{HolzerP15} also applies in this setting: Computing a $(2-o(1))$-approximation of the diameter requires $ \tilde \Omega(n) $ time in the worst case.

Our techniques also lead to a (nondistributed) {\em streaming algorithm} for $(1+o(1))$-approximate \sssp, where the edges are presented in an arbitrary-order stream and an algorithm with limited space (preferably $\tilde O(n)$) reads the stream in {\em passes} to determine the answer (see, e.g., \cite{McGregor14} for a recent survey).  
It was known that $\tilde O(n)$ space and one pass are enough to compute an $O(\log n/ \log \log n)$-spanner and therefore approximate all distances up to a factor of $O(\log n/ \log \log n)$  \cite{FeigenbaumKMSZ08} (see also \cite{FeigenbaumKMSZ05,Baswana08,ElkinZ06,Elkin11}). This almost matches a lower bound which holds even for the
$s$-$t$-shortest path problem (\stsp), where we just want to compute the distance between two specific nodes $s$ and~$t$~\cite{FeigenbaumKMSZ08}.
On unweighted graphs one can compute $ (1+\epsilon, \beta) $-spanners in $ \beta $ passes and $ O (n^{1+1/k}) $ space~\cite{ElkinZ06} (for some integer $ \beta $ depending on $ k $ and $ \epsilon $), and get $ (1 + \epsilon) $-approximate \sssp in a total of $ O (\beta / \epsilon) $ passes.
In 2006, McGregor raised the question of whether we can solve \stsp better with a larger number of passes (see \cite{sublinear_open_14}). Very recently Guruswami and Onak \cite{GuruswamiO13} showed that any $p$-pass algorithm on unweighted graphs requires $\tilde \Omega(n^{1+\Omega(1/p)}/O(p))$ space. This does not rule out, for example, an $O(\log n)$-pass $\tilde O(n)$-space algorithm. Our algorithm, which solves the more general  \sssp problem, gets close to this: It takes $ n^{o(1)} $ passes and $ n^{1+o(1)} $ space.

In all of these models, we have formulated our algorithms to compute $ (1 + o(1)) $-approximate \sssp.
More generally, we can, for any $ 0 < \epsilon \leq 1 $, compute a $(1 + \epsilon)$-approximation taking $ {(\sqrt{n} + \diam)} \cdot 2^{O(\sqrt{\log{n} \log{(\epsilon^{-1} \log{n})}})} $ rounds in the \congest model, $ 2^{O(\sqrt{\log{n} \log{(\epsilon^{-1} \log{n})}})} $ rounds in the congested clique model, and $ 2^{O(\sqrt{\log{n} \log{(\epsilon^{-1} \log{n})}})} $ passes with $ n \cdot 2^{O(\sqrt{\log{n} \log{(\epsilon^{-1} \log{n})}})} $ space in the streaming model, respectively.
We provide the necessary details for deriving these numbers in \Cref{sec:hop_set}, but omit them later on for the sake of succinctness.
Our algorithm requires each node to internally store and approximately solve hitting set instances, which can be done in linear time by a greedy algorithm~\cite{Johnson74,AusielloDP80}.
In the \congest model these instances have size $ \sqrt{n} \cdot 2^{O(\sqrt{\log{n} \log{(\epsilon^{-1} \log{n})}})} $, whereas in the congested clique and the multipass streaming model these instances have size $ n \cdot 2^{O(\sqrt{\log{n} \log{(\epsilon^{-1} \log{n})}})} $, respectively.
We assume throughout that the edge weights are positive integers in the range $ \{ 1, \dots, W \} $ where $ W $ is polynomial in $ n $.
More generally, for $ W $ of arbitrary size, all of the above asymptotic bounds need to be multiplied by the factor $ \log{W} $.

\subsection{Overview of Techniques}

Our algorithm builds on two independent contributions: (1) a deterministic process to hit long paths for constructing an overlay network and (2) a deterministic hop set construction for the overlay network.

\subsubsection{Deterministic Path Hitting}

Our crucial new technique is a deterministic process that can replace the following ``path hitting'' argument: For any~$c$, if we pick $\tilde \Theta(c)$ nodes uniformly at random as {\em centers} (typically $c=\sqrt{n}$), then a shortest path containing $n/c$ edges will contain a center with high probability. This allows us to create shortcuts between centers---where we replace each path of length $n/c$ between centers by an edge of the same length---and focus on computing shortest paths between centers. 
This argument has been repetitively used to solve shortest paths problems in various settings (e.g., \cite{UllmanY91,HenzingerK95,DemetrescuI06,BaswanaHS07,RodittyZ11,Sankowski05,DemetrescuFI05,DemetrescuFR09,Madry10,Bernstein13,LenzenP_stoc13,Nanongkai-STOC14}).
In the sequential model a set of centers of size $\tilde \Theta(c)$ can be found deterministically with the greedy hitting set heuristic once the shortest paths containing $n/c$ edges are known~\cite{Zwick02,King99}.
We are not aware of any nontrivial deterministic process that can achieve the same effect in the distributed setting. The main challenge is that the greedy process is heavily sequential, as the selection of the next node depends on all previous nodes, and is thus hard to implement efficiently in the distributed setting\footnote{We note that the algorithm of King \cite{King99} for constructing a blocker can be viewed as an efficient way to greedily pick a hitting set by efficiently computing the scores of nodes. The process is as highly sequential as other greedy heuristics.}.

\paragraph{Approximate Path Hitting via Node Types}

In this paper, we develop a new deterministic process to pick $\tilde \Theta(c)$ centers. The key new idea is to carefully divide nodes into $ O(\log{n})$ {\em types}. Roughly speaking, we associate each type $t$ with a value $w_t$ and make sure that the following properties hold: (i) every path $\pi$ with $\Omega(n/c)$ edges and weight $\Theta(w_t)$ contains a node of type~$t$, and (ii) there is a set of $O(n/c)$ {\em centers of type $t$} such that every node of type $t$ has at least one center at distance $o(w_t)$. We define the set of centers to be the collection of centers of all types. 
The two properties together guarantee that every long path will be {\em almost} hit by a center: For every path~$\pi$ containing at least $n/c$ edges, there is a center whose distance to some node in $\pi$ is $o(w(\pi))$, where $w(\pi)$ is the total weight of~$ \pi $. 
This is already sufficient for us to focus on computing shortest paths only between centers as we would have done after picking centers using the path hitting argument. 
To the best of our knowledge, such a deterministically constructed set of centers that almost hits {\em every} long path was not known to exist before. The process itself is not constrained to the distributed setting and thus might be useful for derandomizing other algorithms that use the path hitting argument.

\paragraph{Distributed Implementation}

To implement the above process in the distributed setting, we use the {\em source detection} algorithm of Lenzen and Peleg~\cite{LenzenP_podc13} to compute the type of each node. We then use the classic {\em ruling set} algorithm of Goldberg, Plotkin, and Shannon~\cite{GoldbergPS88} to compute the set of centers of each type that satisfies the second property above. (A technical note: We also need to compute a bounded-depth shortest-path tree from every center. In \cite{Nanongkai-STOC14}, this was done using the {\em random delay} technique. We also derandomize this step by adapting the {\em partial distance estimation algorithm} of Lenzen and Patt-Shamir \cite{LenzenP14a-distance}.)

\subsubsection{Deterministic Hop Set Construction}

Another tool, which is the key to the improved running time, is a new \emph{hop set} construction. An $(h, \epsilon)$-hop set of a graph $G=(V, E)$ is a set $F$ of weighted edges such that the distance between any pair of nodes in $G$ can be $(1+\epsilon)$-approximated by their $h$-hop distance (given by a path containing at most $h$ edges) on the graph $H=(V, E\cup F)$ (see \Cref{sec:prelim} for details).
The notion of hop set was defined by Cohen \cite{Cohen00} in the context of parallel computing, although it had been used implicitly earlier, e.g., \cite{UllmanY91,KleinS97} (see \cite{Cohen00} for a detailed discussion). 
The previous \sssp algorithm \cite{Nanongkai-STOC14} was able to construct an $(n/k, 0)$-hop set of size $kn$, for any integer $k\geq 1$, as a subroutine (in~\cite{Nanongkai-STOC14} this was called {\em shortest paths diameter reduction}\footnote{This follows the notion of shortest paths diameter used earlier in distributed computing \cite{KhanP08}}). In this paper, we show that this subroutine can be replaced by the construction of an $(n^{o(1)}, o(1))$-hop set of size $ n^{1+o(1)} $.

Our hop set construction is based on computing {\em clusters}, which is the basic subroutine of Thorup and Zwick's distance oracles~\cite{ThorupZ05} and spanners~\cite{ThorupZ05,ThorupZ06}. It builds on a line of work in dynamic graph algorithms. In \cite{Bernstein09}, Bernstein showed that clusters can be used to construct an $(n^{o(1)}, o(1))$-hop set of size $ n^{1+o(1)} $. Later in \cite{HenzingerKNFOCS14}, we showed that the same kind of hop set can be constructed by using a structure similar to clusters while restricting the shortest-path trees involved to some small distance and that such a construction can be used in the dynamic (more precisely, decremental) setting. The construction, however, has to deal with several complications of the dynamic setting and relies heavily on randomization. In this paper, we build on the same idea, i.e., we construct a hop set using bounded-distance clusters. However, our construction is significantly simplified, to the point that we can treat the cluster computation as a black box. This makes it easy to apply on distributed networks and to derandomize. To this end, we derandomize the construction simply by invoking the deterministic clusters construction of Roditty, Thorup, and Zwick \cite{RodittyTZ05} and observe that it can be implemented efficiently on distributed networks\footnote{We note that the Thorup--Zwick distance oracles and spanners were considered before in the distributed setting (e.g., \cite{LenzenP14a-distance,DasSarmaDP12}).}. 
A similar type of derandomization by locally computing approximate hitting sets has been done before by Holzer and Pinsker~\cite{HolzerP15} when derandomizing Nanongkai's exact hop set construction~\cite{Nanongkai-STOC14} on the congested clique.
We note that it might be possible to use Cohen's hop set construction~\cite{Cohen00} instead of Bernstein's~\cite{Bernstein09} in our application. However, Cohen's construction relies heavily on randomness, and derandomizing it seems significantly more difficult.

\subsection{Recent Developments} After the preliminary version of this paper appeared~\cite{HenzingerKN-STOC16}, Becker et al.~\cite{BeckerKKL16} showed that the $n^{o(1)}$ term in our bounds can be eliminated. 
Elkin and Neiman showed the first construction of sparse hop sets with a constant number of hops \cite{ElkinN-FOCS16}, removing also the inherent dependence on $ \log{W} $, the logarithm of the largest edge weight, in their construction.
The latter carries over to the bounds for approximating \sssp in the congested clique model and the multipass streaming model.
They further showed an application of their hop sets in computing approximate shortest paths from $s$ sources. In particular, using our hop set and a modification of the framework in \cite{Nanongkai-STOC14} and this paper\footnote{More precisely, following Elkin and Neiman~\cite{ElkinN-FOCS16}, one constructs an overlay network of size $\sqrt{sn}$ instead of $\sqrt{n}$ as done in this paper.}, this problem can be solved in $ (sn)^{1/2+o(1)}+D^{o{(1)}} $ rounds. Elkin and Neiman showed a hop set which can be used to reduce the bound to $\tilde O(\sqrt{sn} + D)$ when $s=n^{\Omega(1)}$~\cite{ElkinN-FOCS16}.
In \cite{ElkinN-PODC16}, they also showed further applications of hop sets in the distributed construction of routing schemes. 
It was pointed out by Patt-Shamir (see \cite{Tseng15}) that using our algorithm as a black box, one can simplify and obtain improved running time in the construction of compact routing tables in \cite{LenzenP14a-distance}. (On the other hand, we note that our construction is based on many ideas from \cite{LenzenP14a-distance}.) Our hop set construction also found applications in metric-tree embeddings \cite{FriedrichsL16}. 

\subsection{Organization} We start by introducing notation and the main definition in \Cref{sec:prelim}.
Then in \Cref{sec:hop_set} we explain the deterministic hop set construction, which is based on a variation of Thorup and Zwick's clusters~\cite{ThorupZ05}.
In \Cref{sec:dist algo}, we give our main result, namely the $(1+o(1))$-approximation $ (n^{1/2+o(1)}+D^{1+o(1)})$-time algorithm. In that section we explain the deterministic process for selecting centers mentioned above, as well as how to implement the hop set construction in the distributed setting. 
Finally, our remaining results are proved in \Cref{sec:other results}.

\section{Preliminaries}\label{sec:notation}\label{sec:prelim}

\subsection{Notation}
In this paper, we consider weighted undirected graphs with positive integer edge weights in the range $ \{1, 2, \ldots, W \} $.
We usually assume in the following that $ W = \poly (n) $, i.e., the edge weights are \emph{polynomially bounded}.
For a graph $ G = (V, E) $, $ V $ is the set of nodes and $ E $ is the set of edges.
We denote by $ n := |V| $ and $ m := |E| $ the number of nodes and edges of $ G $, respectively.
For a set of edges $ E $, the weight of each edge $ (u, v) \in E $ is given by a function $ w (u, v, E) $.
If $ (u, v) \notin E $, we set $ w (u, v, E) = \infty $.
We define $ w (u, v, G) := w (u, v, E) $.
Whenever we define a set of edges $ E $ as the union of two sets of edges $ E_1 \cup E_2 $, we set the weight of every edge $ (u, v) \in E $ to $ w (u, v, E) := \min (w (u, v, E_1), w (u, v, E_2)) $.
We denote the weight of a path $ \pi $ in a graph $ G $ by $ w (\pi, G) := \sum_{(u, v) \in \pi} w (u, v, G) $ and the number of edges of $\pi$ by $|\pi|$. 

Given a graph $ G = (V, E) $ and a set of edges $ F \subseteq V^2 $, we define $ G \cup F $ as the graph that has $ V $ as its set of nodes and $ E \cup F $ as its set of edges.
The weight of each edge $ (u, v) $ is given by $ w (u, v, G \cup F) = w (u, v, E \cup F) = \min (w (u, v, E), w (u, v, F)) $.

We denote the distance between two nodes $ u $ and $ v $ in $ G $, i.e., the weight of the shortest path between $ u $ and $ v $, by $ \dist (u, v, G) $.
We define the distance between a node $ u $ and a set of nodes $ A \subseteq V $ by $ \dist (u, A, G) = \min_{v \in A} \dist (u, v, G) $.
For every pair of nodes $ u $ and $ v $ we define distance up to range $ R $ by
\begin{equation*}
\dist (u, v, R, G) =
\begin{cases}
\dist (u, v, G) & \text{if $ \dist (u, v, G) \leq R $} \\
\infty & \text{otherwise} \, .
\end{cases}
\end{equation*}
and for a node $ v $ and set of nodes $ A \subseteq V $ by $ \dist (u, A, R, G) = \min_{v \in A} \dist (u, v, R, G) $. 

For any positive integer~$h$ and any nodes $u$ and~$v$, we define the {\em $h$-hop distance} between $u$ and~$v$, denoted by $\dist^h(u, v, G)$, as the weight of the shortest among all $u$-$v$ paths containing at most $h$ edges. More precisely, let $\Pi^h(u, v)$ be the set of all paths between $u$ and $v$ such that each path $\pi\in \Pi^h(u, v)$ contains at most $h$ edges. Then, $\dist^h (u, v, G) = \min_{\pi\in \Pi^h(u, v)} w(\pi, G)$ if $\Pi^h(u, v)\neq \emptyset$, and $\dist^h (u, v, G) =\infty$ otherwise.

We denote the hop distance between two nodes $ u $ and $ v $, i.e., the distance between $u$ and $v$ when we treat $G$ as an unweighted graph, by $\hdist(u, v, G)$. The {\em hop diameter} of graph $G$ is defined as $\diam(G)=\max_{u, v \in V} \hdist(u, v, G)$. When $G$ is clear from the context, we use $\diam$ instead of $\diam(G)$. We note that this is different from the {\em weighted diameter}, which is defined as $\wdiam(G)=\max_{u, v\in V} \dist(u, v, G)$. Throughout this paper we use ``diameter'' to refer to the hop diameter (as is typically done in the literature; see, e.g., \cite{GarayKP98, KuttenP98, KhanKMPT12, LotkerPR09, GhaffariK13}).
We do not consider superlogarithmic values for the bandwidth $ B $ in this paper.

The following definition formalizes the concept of hop sets introduced by Cohen~\cite{Cohen00}.
\begin{definition}\label{def:hopset}
Given any graph $G=(V, E)$, any integer $h$, and $\epsilon\geq 0$, we say that a set of weighted edges $F$ is an {\em $(h, \epsilon)$-hop set} of $G$ if 
\begin{equation*}
\dist(u, v, G)\leq \dist^h(u, v, H)\leq (1+\epsilon)\dist(u, v, G)
\end{equation*} 
for every pair of nodes $ u, v \in V $, where $H=(V, E\cup F)$.
\end{definition}
In this paper we are only interested in $(n^{o(1)}, o(1))$-hop sets of size $ |F| = n^{1+o(1)} $. We refer to them simply as ``hop sets'' (without specifying parameters).

\subsection{\congest Model and Problem Formulation}\label{sec:model}\label{sec:problem}

In the \congest model, a network of processors is modeled by an undirected weighted
graph~$G$, where nodes model the processors and  edges model the bounded-bandwidth links between the processors. 
Nodes are assumed to have unique IDs in the range $\{1, 2, \ldots, \poly(n)\}$ and infinite computational power\footnote{In the algorithms developed in this paper this strong assumption is not necessary as the number of internal computational steps at each node is proportional to the number of messages received in all rounds.} as the primary focus of this model is communication complexity.
We denote by $ \lambda $ the number of bits used to represent each ID, i.e., $ \lambda = O (\log{n}) $.
Each node has limited topological knowledge; in particular, every node~$u$ knows only the IDs of each neighbor $v$ and $w(u, v, G)$, the weight of their connecting edge.
As in \cite{Nanongkai-STOC14}, we assume that edge weights are polynomially bounded, i.e., the largest edge weight of the graph is polynomial in the number of nodes.
This is a typical assumption as it allows us to encode the weight of an edge in one (or a constant number of) messages.

The distributed communication is performed in {\em rounds}. At the beginning of each round, all nodes wake up simultaneously, and then each node $u$ sends an arbitrary message of $B=O(\log n)$ bits through each edge $(u, v)$, and the message will arrive at node $v$ at the end of the round.
For the algorithms presented in this paper, we consider the weaker \emph{broadcast \congest model}, where in every individual round the message sent by each node is the same for all neighbors.
The {\em running time} of a distributed algorithm is the worst-case number of rounds needed to finish a task. It is typically analyzed based on $n$ (the number of nodes) and $\diam$ (the network diameter)~\cite{Peleg00_book}.

\begin{definition}[Single-Source Shortest Paths (\sssp)]\label{def:sssp}
In the {\em single-source shortest paths problem (\/\sssp)}, we are given a weighted network $G$ and a {\em source} node $s$; i.e., each node knows (i) the IDs of its neighbors, (ii) the weight of its incident edges, and (iii) whether it is the source $ s $ or not. We want to find the distance between $s$ and every node $v$ in $G$, denoted by $\dist(s, v, G)$, i.e., we want every node $v$ to know the value of $\dist(s, v, G)$.
In the {\em $ \alpha $-approximate \sssp problem} each node additionally knows the value $ \alpha \geq 1 $, and the goal is for every node $ v $ to know a distance estimate $ \hat{\dist} (s, v) $ such that $ \dist(s, v, G) \leq \hat{\dist} (s, v) \leq \alpha \cdot \dist(s, v, G) $.
\end{definition}

\paragraph{Recovering shortest paths} We note that although we define the problem to be computing the distances, we can easily recover the shortest paths in the sense that every node $u$ knows its neighbor~$v$ that is in the shortest path between $u$ and $s$. 
This is because our algorithm computes a distance estimate that satisfies the following property:
\begin{align}
\mbox{Every node $u\neq s$ has a neighbor $v$ such that } \hat{\dist} (s, v)+w(u, v, G) \leq  \hat{\dist} (s, u),\label{eq:recover shortest paths} 
\end{align}
where $\hat{\dist} (s, v)$ is the approximate distance between $s$ and $v$. For any distance approximation $\hat{\dist} (s, \cdot) $ that satisfies~\eqref{eq:recover shortest paths}, we can recover the approximate shortest paths by assigning $v$ as the intermediate neighbor of $u$ in the approximate shortest path between $u$ and $s$. 

It can be easily checked throughout that the distance estimate that we compute satisfies~\eqref{eq:recover shortest paths}. This is simply because our algorithm always rounds an edge weight $w(u, v, G)$ {\em up} to some value $w'(u, v)$, and computes the approximate distances based on this rounded edge weight. 
For this reason, we can focus only on computing approximate distances in this paper.

\subsection{Toolkit}
In the following we review, in more detail, known results used for designing our algorithm.
The first is a weight-rounding technique~\cite{KleinS97,Cohen98,Zwick02,Bernstein09,Madry10,Bernstein13,Nanongkai-STOC14} for scaling down edge weights at the cost of approximation.
Intuitively, we will use this technique to efficiently compute approximate shortest paths up to a fixed number of hops.
As we will use this technique repeatedly, we give a proof in Appendix~\ref{sec:proof of property of weight rounding} for completeness.

\begin{lemma}[\cite{Nanongkai-STOC14}]\label{thm:property of weight rounding}
	Let $ h \geq 1 $ and let $ G $ be a graph with positive integer edge weights in the range $ \{ 1, \ldots, W \} $.
	For every integer $ 0 \leq i \leq \lfloor \log{(n W)} \rfloor $, set $\rho_i = \frac{\epsilon 2^i}{h}$ and let $G_i$ be the graph with the same nodes and edges as $ G $ and weight $w(u, v, G_i)=\lceil\frac{w(u, v, G)}{\rho_i}\rceil$ for every edge $(u, v)$.
	Then for all pairs of nodes $ u $ and $ v $ and every $ 0 \leq i \leq \lfloor \log{(n W)} \rfloor $
	\begin{equation}
	\rho_i \cdot \dist(u, v, G_i)\geq \dist (u, v, G) \, .\label{eq:apsp approx main one}
	\end{equation}
	Moreover, if $ 2^i \leq \dist^h (u, v, G) \leq 2^{i+1} $, then
	\begin{align}
	\dist(u, v, G_i)&\leq (2 + 2 / \epsilon) h \text{~~~and}\label{eq:apsp approx main two}\\
	\rho_i \cdot \dist(u, v, G_i) &\leq (1+\epsilon) \cdot \dist^h(u, v, G) \, ,\label{eq:apsp approx main three}
	\end{align}
	where $ \dist^h(u, v, G) $ is the $h$-hop distance between $ u $ and $ v $ in $ G $.
\end{lemma}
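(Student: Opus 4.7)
The plan is to prove the three inequalities by working along concrete paths and exploiting the identity $\rho_i \cdot h = \epsilon \cdot 2^i$, which is the only nontrivial numerical fact in play.

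For \eqref{eq:apsp approx main one}, I will argue edge-wise: by definition $w(u,v,G_i) = \lceil w(u,v,G)/\rho_i \rceil \geq w(u,v,G)/\rho_i$, hence $\rho_i \cdot w(u,v,G_i) \geq w(u,v,G)$ for every edge $(u,v)$. Summing this bound along a shortest $u$-$v$ path $\pi^*$ in $G_i$ gives $\rho_i \cdot \dist(u,v,G_i) = \rho_i \cdot w(\pi^*, G_i) \geq w(\pi^*, G) \geq \dist(u,v,G)$, which is exactly \eqref{eq:apsp approx main one}.

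For the other two inequalities, I fix a shortest $h$-hop path $\pi$ from $u$ to $v$ in $G$, so that $|\pi| \leq h$ and $w(\pi, G) = \dist^h(u,v,G)$. Using $\lceil x \rceil \leq x + 1$ on each edge of $\pi$ I obtain
\begin{equation*}
\rho_i \cdot w(\pi, G_i) \;=\; \rho_i \sum_{(x,y) \in \pi} \Bigl\lceil \tfrac{w(x,y,G)}{\rho_i} \Bigr\rceil \;\leq\; w(\pi, G) + \rho_i \cdot |\pi| \;\leq\; \dist^h(u,v,G) + \rho_i \cdot h.
\end{equation*}
Since $\rho_i \cdot h = \epsilon \cdot 2^i \leq \epsilon \cdot \dist^h(u,v,G)$ by the assumption $2^i \leq \dist^h(u,v,G)$, this yields $\rho_i \cdot w(\pi, G_i) \leq (1+\epsilon)\dist^h(u,v,G)$. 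Combining with $\dist(u,v,G_i) \leq w(\pi, G_i)$ (since $\pi$ is a valid $u$-$v$ path in $G_i$ too) gives \eqref{eq:apsp approx main three}.

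For \eqref{eq:apsp approx main two}, I divide the same intermediate bound by $\rho_i$ and use the upper bound $\dist^h(u,v,G) \leq 2^{i+1}$: this gives $\dist(u,v,G_i) \leq 2^{i+1}/\rho_i + h = 2h/\epsilon + h = (1 + 2/\epsilon)h$, where I again used $\rho_i = \epsilon 2^i/h$. All three inequalities therefore follow from standard rounding arithmetic; there is no real obstacle, only a careful bookkeeping of the ceiling-induced additive error $|\pi| \leq h$ and a matching of it to $\epsilon \cdot 2^i$ via the definition of $\rho_i$.
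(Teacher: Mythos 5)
Your proof is correct and is exactly the standard weight-rounding argument that the paper imports from the cited source without reproving: lower-bounding via $\lceil x\rceil \geq x$ along a shortest path in $G_i$, and upper-bounding via $\lceil x\rceil \leq x+1$ along a shortest $h$-hop path in $G$, with the ceiling error $\rho_i\cdot|\pi| \leq \rho_i h = \epsilon 2^i$ absorbed using the assumption $2^i \leq \dist^h(u,v,G)$. Nothing further is needed.
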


An important subroutine in our algorithm is a procedure for solving the \emph{source detection} problem \cite{LenzenP_podc13} in which we want to find the $\sigma$ nearest ``sources'' in a set $S$ for every node~$u$, given that $ u $ is at distance at most $ \gamma $ from them.
Ties are broken by ID.
The following definition is from~\cite{LenzenP14a-distance}.

\begin{definition}[$(S, \gamma, \sigma)$-detection]\label{def:source detection}
	Consider a graph $G = (V, E)$, a set of ``sources'' $S \subseteq V$, and parameters $\gamma, \sigma \in \mathbb{N}$.
	For any node $v$, let $ \CC (v, S, \gamma, \sigma, G) $ denote the \emph{proximity list} resulting from ordering the set
	$\{(\dist(u, v, G), u)| u \in S \wedge \dist(u, v, G) \leq \gamma\}$ lexicographically in ascending order, i.e., where
	\begin{multline*}
	(\dist(u, v, G), v) < (\dist(u', v, G), u') \iff \\ (\dist(u, v, G) < \dist(u', v, G)) \vee (\dist(u, v, G) = \dist(u', v, G) \wedge u < u') \, ,
	\end{multline*}
	and restricting the resulting list to the first $ \sigma $ entries.
	The goal of the $(S, \gamma, \sigma)$-detection problem is to compute $ \CC (v, S, \gamma, \sigma, G) $ for every node $v \in V$.
	In the distributed setting we assume that, as part of the input, each node knows $\gamma$, $\sigma$, and whether it is in $S$ or not, and the goal is that every node $v \in V$ knows its list $ \CC (v, S, \gamma, \sigma, G) $.
\end{definition}

Lenzen and Peleg designed a source detection algorithm for unweighted networks in the \congest model~\cite{LenzenP_podc13}.
Their algorithm maintains, for each node, a tentative proximity list, where each entry is a pair consisting of a distance value and a source node in $ S $.
The list of every node $ v $ is initialized with the pair $ (0, v) $, and in every round, each node sends, to all of its neighbors, the smallest entry in its list (according to lexicographic order) that it has not transmitted before.
Upon receiving a pair $ (\delta_s, s) $, a node first checks if there already is some entry $ (\delta_s', s) $ with $ \delta_s' \leq \delta_s + 1 $ in its tentative proximity list, and if not, it adds the pair $ (\delta_s + 1, s) $ to its list (and marks it as not yet transmitted).
Lenzen and Peleg showed that after $ \min{(\gamma, \diam)} + \min{(\sigma, |S|)} $ rounds, the first $ \sigma $ entries in the list maintained by every node $ v $ correspond to $ \CC (v, S, \gamma, \sigma, G) $.
Holzer and Pinsker~\cite{HolzerP15} had two observations about this algorithm.
The first observation is that the guarantees of the algorithm directly carry over to the broadcast \congest model as in every round each node sends the same message to all of its neighbors.
The second observation is that one can also run the algorithm on weighted networks (see also~\cite[proof of Theorem 3.3]{LenzenP14a-distance}) by replacing each edge of some weight~$L$ with an unweighted path of length~$L$ where all the nodes added for some weighted edge $ (u, v) $ are ``simulated'' by either $ u $ or $ v $.
Note that the ``simulated'' nodes are never sources.
Furthermore, the tentative lists of the ``simulated'' nodes do not have to be maintained explicitly.
The following modification of the algorithm for weighted graphs is functionally equivalent to the simulation approach: Every time a node $ u $ wants to send some entry entry $ (\delta_s, s) $ to some neighbor~$ v $ via an edge of weight $ w (u, v, G) $, it delays this message by $ w (u, v) - 1 $ rounds; upon reception, $ v $ first checks if there already is some entry $ (\delta_s', s) $ with $ \delta_s' \leq \delta_s + w (u, v, G) $ in its tentative proximity list, and if not, it adds the pair $ (\delta_s + w (u, v, G), s) $ to its list (and marks it as not yet transmitted).

\begin{theorem}[Implicit in \cite{LenzenP_podc13}]\label{lem:source detection algorithm}
In the broadcast \congest model, there is a deterministic algorithm for solving the $(S, \gamma, \sigma)$-detection problem in $ \min{(\gamma, \wdiam)} + \min{(\sigma, |S|)} $ rounds on weighted networks, where $ \wdiam $ is the weighted diameter.
\end{theorem}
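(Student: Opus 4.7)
The plan is to derive the theorem by coupling the unweighted source-detection algorithm of Lenzen and Peleg with a virtual edge-subdivision simulation, following the route sketched in the paragraph preceding the statement. I would proceed in three stages.

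First, I would describe and analyze the underlying unweighted algorithm. Each node $u$ maintains a tentative list $\hat \CC(u)$, initialized to $\{(0, u)\}$ if $u \in S$ and to $\emptyset$ otherwise. In every round, along every incident edge, $u$ transmits the lexicographically smallest entry of $\hat \CC(u)$ it has not yet sent along that edge. When $u$ receives $(d, v)$, it inserts $(d+1, v)$ into $\hat \CC(u)$ and truncates to the $\sigma$ lexicographically smallest entries whose first coordinate is at most $\gamma$. Correctness and the bound $O(\min(\gamma, \diam) + \min(\sigma, |S|))$ follow from the standard pipelined-BFS induction: the $k$-th entry of $\CC(u, S, \gamma, \sigma, G)$ becomes permanent in $\hat \CC(u)$ by round $\min(\gamma, \diam) + k$, with the tie-breaking rule ensuring that once a correct entry enters the truncated prefix it is never evicted in favor of a spurious one.

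Second, I would lift this to weighted graphs via the edge-subdivision construction of~\cite[proof of Theorem 3.3]{LenzenP14a-distance}. Replace every edge $(u, v)$ of weight $L$ by an unweighted path of $L$ unit edges with $L-1$ internal dummy nodes, none of which is placed in $S$. In the resulting auxiliary unweighted graph $G'$, hop distances between original nodes coincide with weighted distances in $G$, and the set of sources is unchanged, so running the unweighted algorithm on $G'$ yields the correct list $\CC(u, S, \gamma, \sigma, G)$ at every original node $u$. Third, I would show how to execute this simulation in \congest on $G$ itself: for each weighted edge $(u, v)$ of weight $L$, the endpoints jointly emulate the virtual path by maintaining a FIFO pipeline of in-flight messages, pushing one $B$-bit message per round which emerges at the other endpoint $L$ rounds later. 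Since a unit-weight path of length $L$ delivers messages at exactly this rate under a $B$-bit bandwidth, one \congest round on $G$ faithfully simulates one round of the unweighted algorithm on $G'$, with each endpoint locally tracking the dummy nodes on its incident edges.

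The most delicate point I expect is the round complexity bookkeeping. A naive application of the unweighted bound to $G'$ only yields $O(\min(\gamma, \diam(G')) + \min(\sigma, |S|))$, where $\diam(G')$ may be as large as the weighted diameter $\wdiam(G)$. To recover the claimed bound I would halt the simulation once the simulated clock reaches $\gamma$, which gives $O(\gamma + \min(\sigma, |S|))$ regardless of $\wdiam(G)$, and then invoke the same broadcast-then-pipeline argument as in the unweighted analysis: after $\diam$ real rounds every node has already received the relevant prefix of source identifiers routed along paths using at most $\diam$ real edges, so the remaining list completion needs only $\min(\sigma, |S|)$ additional rounds of pipelining. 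Combining these two bounds yields the stated $O(\min(\gamma, \diam) + \min(\sigma, |S|))$ complexity.
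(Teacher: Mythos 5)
Your overall route---Lenzen--Peleg's unweighted algorithm combined with the edge-subdivision simulation of \cite{LenzenP14a-distance}---is exactly the one the paper intends; the paper imports this theorem with only a one-sentence remark to that effect. Your first two stages, together with the observation that the simulation can be halted once the virtual clock reaches $\gamma$, already yield the bound $O(\gamma + \min(\sigma, |S|))$, and that is the only form of the bound the paper ever uses (both applications, in the type computation and in the distance-to-centers computation, invoke it as $O(\gamma + \sigma)$).

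The genuine gap is in your final step, where you try to upgrade $\gamma$ to $\min(\gamma, \diam)$. The claim that ``after $\diam$ real rounds every node has already received the relevant prefix of source identifiers routed along paths using at most $\diam$ real edges'' conflates hops in $G$ with hops in the subdivided graph $G'$. In your simulation a message crosses a real edge of weight $L$ only after $L$ rounds, so after $\diam$ real rounds a node has been reached only by sources at \emph{weighted} distance $O(\diam)$; a source at weighted distance $d$ with $\diam < d \leq \gamma$ (which exists as soon as weights exceed $1$) has not yet arrived, and the ``remaining pipelining'' cannot supply it, because the unweighted analysis charges the additive $+k$ rounds for the $k$-th entry only \emph{after} the corresponding BFS wave has reached the node. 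The $\min(\gamma, \diam)$ term is legitimate in the unweighted setting precisely because distance and hop count coincide there, so every relevant source is within $\diam$ rounds of propagation---exactly the identity that weighting destroys. Exploiting a small hop diameter in the weighted setting would require a genuinely different mechanism (e.g., forwarding weighted distance estimates one real hop per round and then controlling the resulting congestion), not a rerun of the unweighted argument on $G'$. Since the paper never relies on the $\min(\gamma, \diam)$ refinement, the clean fix is to state and prove only the $O(\gamma + \min(\sigma, |S|))$ bound, which your reduction does establish (modulo the fact that your description of the unweighted algorithm and its eviction-free invariant is itself only a sketch of what \cite{LenzenP_podc13} proves in detail).
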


We remark that in an earlier version of this paper we have, for example, in the streaming model, used an additional source detection algorithm of Roditty, Thorup, and Zwick~\cite{RodittyTZ05}.\footnote{Roditty, Thorup, and Zwick~\cite{RodittyTZ05} solve a variant of the source detection problem with $ \gamma = \infty $ in their centralized algorithm for computing distance oracles and spanners deterministically. They essentially reduce the source detection problem to a sequence of \sssp computations on graphs with $ O(n) $ additional nodes and edges. This reduction can be modified in a straightforward way to generalize their algorithm to arbitrary $ \gamma $.}
Using a second algorithm is, however, not essential as the algorithm by Lenzen and Peleg~\cite{LenzenP_podc13} provides all necessary guarantees.

Another subproblem arising in our algorithm is the computation of \emph{ruling sets}.
The following definition was adapted from the recent survey of Barenboim and Elkin \cite[Section 9.2]{BarenboimE13}. 
\begin{definition}[Ruling Set]
	For a (possibly weighted) graph $G = (V, E)$, a subset $U \subseteq V $ of nodes, and 
	a pair of positive integers $\alpha$ and $\beta$, 
	a set $T\subset U$ is an $(\alpha, \beta)$-ruling set for $U$ in $G$ if
	\begin{enumerate}
			\item for every pair of distinct nodes $u, v \in T$, it holds that $\dist(u, v, G) \geq \alpha$, and
			\item for every node $u \in U \setminus T$, there exists a ``ruling'' node $v \in T$, such that $\dist(u, v, G) \leq \beta$.
	\end{enumerate}
\end{definition}
The classic result of Goldberg, Plotkin, and Shannon~\cite{GoldbergPS88} shows that in the distributed setting, for any $c\geq 1$, we can compute a $(c,c \lambda)$-ruling set deterministically in $O(c \log n)$ rounds, where $ \lambda $ is the number of bits used to represent each ID in the network. Since it was not explicitly stated that this algorithm works in the broadcast \congest model, we sketch an implementation of this algorithm in \Cref{sec:ruling set algorithm} (see \cite[Chapter 9.2]{BarenboimE13} and \cite[Chapter 22]{Peleg00_book} for a more detailed algorithm and analysis). 

\begin{theorem}[implicit in \cite{GoldbergPS88}]\label{lem:ruling set algorithm}
In the broadcast \congest model, there is a deterministic algorithm that, for every $ c \geq 1 $, computes a $(c, c \lambda)$-ruling set in $O(c \log n)$ rounds, where $ \lambda $ is the number of bits used to represent each ID in the network.
\end{theorem}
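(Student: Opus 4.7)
My plan is to implement the classical bit-peeling strategy of \cite{GoldbergPS88} in the \congest model, working in phases indexed by the ID bit positions. Initially every node of $U$ is declared \emph{active}. The algorithm runs in $a \log n$ phases. In phase $i$, every still-active node $v$ whose $i$-th ID bit equals $1$ deactivates itself, and records a \emph{dominator}, if there exists a still-active node $u$ with $i$-th ID bit equal to $0$ within distance $c$. The set $T$ consists of the nodes that are still active when all phases have completed.

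I would implement each phase in $O(c)$ rounds using a bounded-depth flood. In round $1$ of the phase, every active node whose $i$-th bit equals $0$ sends a single-bit signal ``bit-$0$ survivor'' to each of its neighbors; in each subsequent round, every node that has already seen the signal forwards it to all its neighbors. Because only a constant number of bits (plus, if needed, one sender ID for dominator bookkeeping) must cross each edge per round, the \congest bandwidth of $O(\log n)$ bits is respected. After $c$ rounds of flooding, every still-active bit-$1$ node knows whether a bit-$0$ survivor lies within distance $c$ and, if so, can record one as its dominator. Summing over the $a \log n$ phases yields the claimed $O(c \log n)$ round bound.

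For correctness I would verify the two defining properties of a $(c, ac\log n)$-ruling set separately. For the separation property, suppose $u, v \in T$ are distinct with $\dist(u, v, G) \leq c$. Since node IDs are unique, the IDs of $u$ and $v$ differ in some bit position $i$; at the start of phase $i$ both are still active and their $i$-th bits disagree, so the one whose $i$-th bit equals $1$ must have seen a bit-$0$ survivor within distance $c$ and been deactivated during phase $i$, contradicting both lying in $T$. Hence $\dist(u, v, G) \geq c$. For the domination property, every eliminated node $v$ has a dominator at distance at most $c$ at the moment of its deactivation; by following the chain of dominators, which grows by only one link per phase and hence has length at most $a \log n$, we reach some node in $T$ at distance at most $a c \log n$ from $v$, as required.

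The step I expect to require the most care is ensuring that the bounded-depth flooding stage respects the \congest bandwidth, since an obvious brute-force implementation (e.g., broadcasting the IDs of all nearby survivors) would fail. The key observation, and the one making the whole approach go through, is that each active bit-$1$ node only needs to learn a single fact per phase (``is some bit-$0$ survivor within $c$ hops?''), so propagating one bit per edge per round over $c$ rounds suffices; any additional dominator-ID information can be piggybacked as a single $O(\log n)$-bit field without violating the bandwidth budget.
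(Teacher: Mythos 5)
Your proposal is correct and follows essentially the same route as the paper's own argument: the same bit-by-bit elimination over $a\log n$ phases, the same $O(c)$-round combinable one-bit flood per phase, the same disagreeing-bit argument for separation, and a dominator-chain argument for domination that is just the unrolled form of the paper's induction showing every node of $U$ is within $jc$ of $T_j$ after phase $j$. The only (harmless) difference is that you flood to distance $c$ rather than $c-1$, which yields a slightly stronger separation guarantee while still keeping the domination radius at $ac\log n$.
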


\section{Deterministic Hop Set Construction}\label{sec:hop_set}

In this section we present a deterministic algorithm for constructing an $ (n^{o(1)}, o(1)) $-hop set (see \Cref{def:hopset}).
In \Cref{sec:hop_reduction_with_additive_error} we first give an algorithm with a weaker guarantee that computes a set of edges $ F $ that reduces the number of hops between all pairs of nodes in the following way for some fixed $ \Delta \geq 1 $: If the shortest path has weight $ R $, then using the edges of $ F $, we can find a path with $ \tilde O (R / \Delta) $ edges at the cost of a multiplicative error of $ o(1) $ and an additive error of $ n^{o(1)} \Delta $.
Our algorithm obtains $ F $ by computing the \emph{clusters} of the graph.
We explain clusters and their computation in \Cref{sec:clusters}.
In \Cref{sec:hop_reduction_without_additive_error}, we show how to repeatedly apply the first algorithm for different edge weight modifications to obtain a set of edges $ F $ providing the following stronger hop reduction for all pairs of nodes: If the shortest path has $ h $ hops, then, using the edges of $ F $, we can find a path with $ \tilde O (h / \Delta) $ hops at the cost of a multiplicative error of $ o(1) $ and no additive error.
Finally, in \Cref{sec:computing_hop_set}, we obtain the hop set by repeatedly applying the hop reduction.

\subsection{Deterministic Clusters}\label{sec:clusters}

The basis of our hop set construction is a structure called {\em cluster} introduced by Thorup and Zwick~\cite{ThorupZ05} who used it, e.g., to construct distance oracles~\cite{ThorupZ05} and spanners~\cite{ThorupZ06} of small size.
\begin{definition}\label{def:clusters}
Consider an integer $ p $ such that $ 2 \leq p \leq \log{n} $ and a hierarchy $ \mathcal{A} $ of sets of nodes $ (A_i)_{0 \leq i \leq p} $ such that $ A_0 = V $, $ A_p = \emptyset $, and $ A_0 \supseteq A_1 \supseteq \dots \supseteq A_p $. 
We say that a node $ v $ has \emph{priority $ i $} if $ v \in A_i \setminus A_{i+1} $ (for $ 0 \leq i \leq p-1 $).
For every node $ v \in V $ the \emph{cluster of $ v $} in $ G $ is defined as
\begin{equation*}
\clust (v, \cA, G) = \{ u \in V \mid \dist (u, v, G) < \dist (u, A_{i+1}, G) \} \, ,
\end{equation*}
where $ i $ is the priority of $ v $.
\end{definition}
In the noncentralized models of computation considered in this paper, the straightforward way of computing clusters as defined above is not efficient enough for our purposes.
We can, however, afford to compute the following restricted clusters.
\begin{definition}
Consider $ p $ and $ \mathcal{A} $ defined as in \Cref{def:clusters} and $ R \geq 0 $.
For every node $ v \in V $, the \emph{restricted cluster up to distance $ R $ of $ v $} in $ G $ is defined as
\begin{equation*}
\clust (v, \cA, G) = \{ u \in V \mid \dist (u, v, G) < \dist (u, A_{i+1}, G) \text{ and } \dist (u, v, G) \leq D \} \, ,
\end{equation*}
where $ i $ is the priority of $ v $.
\end{definition}

\subsubsection{Computing Priorities \texorpdfstring{$\cA$}{A}}\label{sec:priorities}

The performance of our algorithms relies on the total size of the clusters, which in turn depends on how we compute nodes' priorities.
If randomization is allowed, we can use the following algorithm due to Thorup and Zwick \cite{ThorupZ05,ThorupZ06}: Set $ A_0 = V $ and $ A_p = \emptyset $, and for $ 1 \leq i \leq p-1 $ obtain $ A_i $ by picking each node from $ A_{i-1} $ with probability $ ((\ln{n})/n)^{1/p} $.
It can be argued that for $ \mathcal{A} = (A_i)_{0 \leq i \leq p} $ the size of all clusters, i.e., $ \sum_{v \in V} | \clust (v, \cA, G) | $, is $ O (p n^{1 + 1/p}) $ in expectation~\cite{ThorupZ05}.
We now explain how to deterministically compute the priorities of nodes (given by a hierarchy of sets of nodes $ \mathcal{A} = (A_i)_{0 \leq i \leq p} $) such that the total size of the resulting clusters is $ \sum_{v \in V} | \clust (v, \cA, G) | = O (p n^{1 + 1/p}) $.

Thorup and Zwick~\cite{ThorupZ05} introduced the notion of {\em bunches} to analyze the sizes of clusters.
For every node $ u \in V $, we define the bunch and, for every $ 0 \leq i \leq p-1 $, the  $i$-bunch, both restricted to distance $ R $, as follows:
\begin{align*}
\bunch_i (u, \cA, R, G) &= \{ v \in A_i \setminus A_{i+1} \mid \dist (u, v, G) < \dist (u, A_{i+1}, G) \text{ and } \dist (u, v, G) \leq R \} \\
\bunch (u, \cA, R, G) &= \bigcup_{0 \leq i \leq p-1} \bunch_i (u, \cA, R, G) \, .
\end{align*}
The crucial insight is that $ v \in \bunch(u, \cA, R, G) $ if and only if $ u \in \clust (v, \cA, R, G)$.
Thus, it suffices to choose a hierarchy of sets $A_i$ such that $ | \bunch_i (u, \cA, R, G) | \leq O (n^{1/p}) $ for every $ u \in V $ and $ 0 \leq i \leq p-1 $.

Our algorithm for deterministically computing this hierarchy of sets of nodes follows the main idea of Roditty, Thorup, and Zwick~\cite{RodittyTZ05}.
Its pseudocode is given in Procedure~\ref{alg:priorities}.
As a subroutine this algorithm solves a weighted source detection problem, i.e., for suitable parameters $ q $, $ A $, and $ R $, it computes for every node $ v $ the proximity list $ \CC (v, A, R, q, G) $ containing the $ q = \tilde O (n^{1/p}) $ nodes of $ A $ that are closest to $ v $---up to distance $ R $; if there are fewer than $ q $ nodes of $ A $ in distance $ R $ to $ v $, then $ \CC (v, A, R, q, G) $ contains all of them.
Our algorithm for constructing the hierarchy of sets $ (A_i)_{0 \le i \leq p} $ is as follows.
We set $ A_0 = V $ and $ A_p = \emptyset $, and to construct the set $ A_{i+1} $ given the set~$ A_i $ for $ 0 \leq i \leq p-2 $, we first find for each node $ v \in V $ the set $ \CC (v, A_i, R, q, G) $ using a source detection algorithm.
Then we view the collection of sets $ \{ \CC (v, A_i, R, q, G) \}_{v \in V} $ as an instance of the hitting set problem over the universe~$ A_i $, where we want to find a set $ A_{i+1} \subseteq A_i $ of minimum size such that each set $ \CC (v, A_i, R, q, G) $ contains at least one node of $ A_{i+1} $, i.e., a hitting set.
We let $ A_{i+1} $ be an approximate hitting set whose size is within a factor of $ 1 + \ln{n} $ of the optimum produced by the deterministic greedy heuristic (always adding the element that is ``hitting'' the largest number of ``un-hit'' sets)~\cite{Johnson74,AusielloDP80}.
Note that the expensive hitting set computation will later be implemented by performing internal computation\footnote{In principle, internal computation is free in the models considered in this paper and we could thus compute a minimum hitting set exactly. However, we decided to present the algorithm in a way that avoids solving NP-complete problems by internal computation.}; see, for example, \Cref{sec:distributed priorities} for the \congest model implementation.
Following~\cite{ThorupZ05}, we explicitly set $ A_p = \emptyset $ to avoid the introduction of special notation for clusters of the largest priority.
In the following we prove the desired bound on the size of the bunches, which essentially requires us to argue that setting $ q = \tilde O (n^{1/p}) $ is sufficient.

\begin{procedure}
	\caption{Priorities($G$, $p$, $R$)}
	\label{alg:priorities}
	
	\KwIn{Weighted graph $ G = (V, E) $ with positive integer edge weights, number of priorities $ p \geq 2 $, distance range $ R \geq 1 $}
	\KwOut{Hierarchy of sets $ \cA = (A_i)_{0 \leq i \leq p} $}
	
	\BlankLine

	$ q \gets \lceil 2 n^{1/p} \ln{(3 n)} (1 + \ln n) \rceil $\;
	$ A_0 \gets V $\;
	\For{$ i = 0 $ \KwTo $ p-2 $}{
		Compute $ \CC (v, A_i, R, q, G) $ for every node $ v \in V $ using a source detection algorithm\;
		$ \cC \gets \emptyset $\;
		\ForEach{$v \in V$}{
			\lIf{$ | \CC (v, A_i, q, R, G) | = q $}{
				$ \cC \gets \cC \cup \{ \CC (v, A_i, R, q, G)\} $
			}
		}
		Compute a $ (1 + \ln n) $-approximate minimum hitting set $ A_{i+1} \subseteq A_i $ using a greedy heuristic\;
	}
	$ A_p \gets \emptyset $\;
	
	\KwRet{$ \cA := (A_i)_{0 \leq i \leq p} $}\;
\end{procedure}

\begin{lemma}[Implicit in~\cite{RodittyTZ05}]\label{lem:existence of hitting set}
Given a finite collection of sets $ \cC = \{ S_1, \ldots, S_k \} $ over a universe $ U $ and a parameter $ x \geq 1 $ such that $ | S_j | \geq 2 x \ln{(3k)} $ for all $ 1 \leq j \leq k $, there exists a hitting set $ T \subseteq U $ of size at most $ | T | \leq |U| / x $ such that $ T \cap S_j \neq \emptyset $ for all $ 1 \leq j \leq k $.
\end{lemma}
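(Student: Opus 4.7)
The plan is to prove existence by the probabilistic method: sample each element of $ U $ into $ T $ independently at a carefully chosen rate, then argue that with positive probability both the size bound and the hitting property simultaneously hold.

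Concretely, I would set the sampling probability to $ p = 1/(2x) $ and let $ T $ be the resulting random subset of $ U $. The first step is to bound the size of $ T $: since $ \mathbb{E}[|T|] = p |U| = |U|/(2x) $, Markov's inequality gives $ \Pr[|T| > |U|/x] < 1/2 $. The second step is to bound the failure probability of hitting any fixed $ S_i $: by independence,
\begin{equation*}
\Pr[T \cap S_i = \emptyset] = (1 - p)^{|S_i|} \leq e^{- p |S_i|} \leq e^{-|S_i| / (2x)} \leq e^{-\ln(3q)} = \frac{1}{3q},
\end{equation*}
where the penultimate inequality uses the hypothesis $ |S_i| \geq 2x \ln(3q) $. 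A union bound over the $ q $ sets then gives $ \Pr[\exists i : T \cap S_i = \emptyset] \leq q \cdot \tfrac{1}{3q} = 1/3 $.

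Combining these two estimates, the probability that $ T $ fails to satisfy \emph{either} desired property is at most $ 1/2 + 1/3 = 5/6 < 1 $. Hence with positive probability the sampled set $ T $ simultaneously satisfies $ |T| \leq |U|/x $ and $ T \cap S_i \neq \emptyset $ for every $ 1 \leq i \leq q $, and at least one realization witnesses the existence of the hitting set claimed by the lemma.

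I do not expect a serious obstacle here; the only balancing act is choosing $ p $ so that the Markov bound on $ |T| $ and the exponential bound on the missing probability each leave enough slack to sum to something strictly less than $ 1 $. The choice $ p = 1/(2x) $ together with the assumption $ |S_i| \geq 2x \ln(3q) $ is precisely calibrated so that the two failure probabilities are $ 1/2 $ and $ 1/3 $, respectively, leaving the required positive slack. If one wanted a constructive proof instead, one could appeal to the standard greedy hitting set heuristic applied to $ \cC $, which also produces a set of the claimed size under the given density condition; but for the statement of the lemma, the probabilistic argument is shortest and sufficient.
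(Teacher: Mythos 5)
Your proposal is correct and follows essentially the same argument as the paper's own proof: sample each element independently with probability $1/(2x)$, bound the oversize event by $1/2$ via Markov and each missing event by $1/(3q)$ via the hypothesis $|S_i| \geq 2x\ln(3q)$, and conclude by a union bound that the total failure probability is strictly less than $1$. Your chain of inequalities for $\Pr[T \cap S_i = \emptyset]$ is in fact stated more cleanly than the paper's.
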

We give a proof of \Cref{lem:existence of hitting set} in \Cref{sec:proof of existence of hitting set} for completeness.

\begin{lemma}\label{lem:priorities}
Given a weighted graph $ G = (V, E) $ with positive integer edge weights and parameters $ p \geq 2 $ and $ R \geq 1 $, Procedure~\ref{alg:priorities} computes a hierarchy $ \cA = (A_i)_{0 \leq i \leq p} $ of sets of nodes such that
\begin{equation*}
\sum_{u \in V} | \clust (u, \cA, R, G) | = \sum_{u \in V} | \bunch (u, \cA, R, G) | = O (p n^{1 + 1/p} \log{n}) \, .
\end{equation*}
\end{lemma}

\begin{proof}
We first show by induction that $ |A_i| \leq n^{1-i/p} $ for all $ 0 \leq i \leq p-1 $.
If $ i = 0 $, the claim is trivially true because we set $ A_0 = V $.
We now assume the induction hypothesis $ |A_i| \leq n^{1-i/p} $ and argue that $ |A_i| \leq n^{1-(i+1)/p} $.
Our algorithm computes a $ (1 + \ln{n}) $-approximate hitting set of the collection of sets $ \cC $ containing each set $ \CC (v, A_i, q, R, G) $ of size $ q $ (i.e., not considering all such sets that have size strictly less than $ q $).
Let $ k \leq n $ be the number of sets contained in $ \cC $.
As each set in $ \cC $ has size $ q = \lceil 2 n^{1/p} \ln{(3 n)} (1 + \ln n) \rceil \geq 2 x \ln (3q) $ for $ x = n^{1/p} (1 + \ln n) $, we know by \Cref{lem:existence of hitting set} that there is a hitting set $ A' $ for $ \cC $ of size at most $ |A_{i}| / x = |A_{i}| / (n^{1/p} (1 + \ln{n})) \leq n^{1-i/p} / (1 + \ln{n}) $ and thus the minimum hitting set $ A_{i+1} $ computed by the greedy heuristic has size at most $ (1 + \ln{n}) |A'| \leq n^{1-i/p} $.
Note that each set $ \CC (v, A_i, q, R, G) $ might have been empty, and in this case the algorithm would have computed $ A_{i+1} = \emptyset $, the trivial hitting set.

We now show that for every node $ u \in V $ and every $ 0 \leq i \leq p-1 $, $ \bunch_i (u, \cA, R, G) \leq q = O(n^{1/p} \log{n}) $, which immediately implies the desired bound on the total size of the bunches and clusters.
We argue by a case distinction that $ \bunch_i (u, \cA, R, G) \subseteq \CC(u, A_i, R, q, G) $ and thus, by the definition of the set of the $ q $ closest nodes in~$ A_i $, $ | \bunch_i (v, \cA, R, G) | \leq | \CC(u, A_i, R, q, G) | \leq q $.
If $ | \CC(u, A_i, R, q, G) | < q $, then clearly $ \CC(u, A_i, R, q, G) = \{ v \in A_i \mid \dist (u, v, G) \leq R \} \supseteq \bunch_i (v, \cA, R, G) $.
Otherwise we have $ | \CC(u, A_i, R, q, G) | = q $, and as the algorithm computed a suitable hitting set, we have $ \bunch_i (u, \cA, R, G) \subseteq \CC(u, A_i, R, q, G) $.
\end{proof}

As an alternative to the algorithm proposed above, the hitting sets can also be computed with the deterministic algorithm of Roditty, Thorup, and, Zwick~\cite{RodittyTZ05} which produces so-called ``early hitting sets.''
For this algorithm we have to set $ q = O(n^{1/p} \log{n}) $ and obtain slightly smaller clusters of total size $ O (p n^{1 + 1/p}) $.
However, since the logarithmic factors are negligible for our purpose, we have decided to present the simpler algorithm above.

\subsubsection{Computing Clusters}

Given the priorities of the nodes, the clusters can be computed by finding, for every priority $ i $ and for every node $ v $ of priority $ i $, the shortest paths up to nodes whose distance to~$ v $ is more than (or equal to) their distance to nodes of priority more than $ i $.
In the pseudocode of Procedure~\ref{alg:clusters} we formulate this algorithm as a variant of weighted breadth-first search.
We will not analyze the performance of this algorithm at this point since it depends on the models of computation that simulate it (see \Cref{sec:priorities} and \Cref{sec:other results} for implementations of the algorithm and performance analyses).

\begin{procedure}
	\caption{Clusters($G$, $p$, $R$)}
	\label{alg:clusters}
	
	\KwIn{Weighted graph $ G = (V, E) $ with positive integer edge weights, number of priorities $ p \geq 2 $, distance range $ R \geq 1 $}
	\KwOut{Clusters of $ G $ as specified in \Cref{thm:clusters}}
	
	\BlankLine
	
	$ \cA = (A_i)_{0 \leq i \leq p} \gets $ \Priorities{$G$, $p$, $R$}\;
	For each $ 1 \leq i \leq p-1 $ and every node $ v  \in V $ compute $ \dist (v, A_i, R, G) $\;
	
	\BlankLine
	
	\ForEach(\tcp*[f]{Compute cluster of every node}){$ u \in V $}{
		\tcp{Initialization}
		Let $ i $ be the priority of $ u $, i.e., $ u \in A_i \setminus A_{i+1} $\;
		\lForEach{$ v \in V$}{
			$ \delta (u, v) \gets \infty $
		}
		$ \delta (u, u) \gets 0 $\;
		$ \clust (u) \gets \emptyset $\;
		\tcp{Iteratively add nodes to cluster}
		\For{$ L = 0 $ \KwTo $ R $}{
			\ForEach{node $ v $ with $ \delta (u, v) = L $}{
				\tcp{Check if $ v $ joins cluster of $ u $ at current level}
				\If{$ \delta (u, v) < \dist (v, A_{i+1}, R, G) $}{
					$ \clust (u) \gets \clust (u) \cup \{ v \} $\;
					\ForEach(\tcp*[f]{Update neighbors of $ v $}){$ (v, w) \in E $}{
						$ \delta' (u, w) \gets (w (v, w, G) + \delta (u, v)) $\;
						\lIf{$ \delta' (u, w) < \delta (u, w) $}{
							$ \delta (u, w) \gets \delta' (u, w) $
						}
					}
				}
			}
		}
	}
	\KwRet{$ (\clust  (v), \delta (v, \cdot))_{v \in V} $}\;
\end{procedure}

We summarize our guarantees with the following theorem.

\begin{theorem}\label{thm:clusters}
Given a weighted graph $ G = (V, E) $ with positive integer edge weights and parameters $ p \geq 2 $ and $ R \geq 1 $, Procedure~\ref{alg:clusters} computes a hierarchy of sets $ \cA = (A_i)_{0 \leq i \leq p} $, where $ V = A_0 \subseteq A_1 \subseteq \dots \subseteq A_p = \emptyset $, such that $ \sum_{v \in V} | \clust (v, \cA, R, G) | = \tilde O( p n^{1 + 1/p}) $.
It also computes for every node~$ v $ the set $ \clust (v, \cA, R, G) $ and for each node $ w \in \clust (v, \cA, R, G) $ the value of $ \dist (v, w, G) $.
\end{theorem}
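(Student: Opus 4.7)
The plan is to verify two claims: the total cluster size bound and the correctness of the expansion loop in Procedure~\ref{alg:clusters}. The size bound is inherited immediately from the call to procedure Priorities at the start: by \Cref{lem:priorities}, the resulting hierarchy $\cA = (A_i)_{0 \leq i \leq p}$ satisfies $\sum_{v \in V} |\clust(v, \cA, R, G)| = O(p n^{1 + 1/p} \log n) = \tilde O(p n^{1 + 1/p})$. Since Procedure~\ref{alg:clusters} does not modify $\cA$ afterwards, it only remains to argue that each iteration of the outer loop correctly returns the set $\clust(u, \cA, R, G)$ together with $\dist(u, w, G)$ for every $w$ in that set.

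The key structural fact that makes the expansion correct is a closure property of clusters along shortest paths: fix $u$ of priority $i$ and suppose $v \in \clust(u, \cA, R, G)$, so $\dist(u,v,G) < \dist(v, A_{i+1}, G)$ and $\dist(u,v,G) \leq R$. Let $x$ lie on a shortest $u$-$v$ path, so $\dist(u,v,G) = \dist(u,x,G) + \dist(x,v,G)$. I would establish $x \in \clust(u, \cA, R, G)$ by contradiction: otherwise there exists $y \in A_{i+1}$ with $\dist(x,y,G) \leq \dist(u,x,G)$, and then
\begin{equation*}
\dist(v, A_{i+1}, G) \leq \dist(v,y,G) \leq \dist(v,x,G) + \dist(x,y,G) \leq \dist(x,v,G) + \dist(u,x,G) = \dist(u,v,G),
\end{equation*}
contradicting $v \in \clust(u, \cA, R, G)$. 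The range bound $\dist(u,x,G) \leq \dist(u,v,G) \leq R$ is immediate, so the entire shortest $u$-$v$ path lies in $\clust(u, \cA, R, G)$.

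With this closure property, the expansion loop reduces to Dijkstra restricted to the cluster, laid out level by level. Because edge weights are positive integers, iterating $L = 0, 1, \ldots, R$ and processing all nodes whose current tentative distance $\delta(u,v)$ equals $L$ is a correct Dijkstra schedule: the first value assigned to $\delta(u,v)$ is already $\dist(u,v,G)$ whenever this distance is at most $R$, since subsequent relaxations from higher levels can only increase tentative distances. The membership test $\delta(u,v) < \dist(v, A_{i+1}, R, G)$ is exactly the defining condition of $\clust$, because $\dist(v, A_{i+1}, R, G) = \dist(v, A_{i+1}, G)$ whenever the latter is at most $R$ and equals $\infty$ otherwise, and the only regime that influences membership is when $\delta(u,v) \leq R$. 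The closure property then guarantees that pruning the search at non-cluster nodes never disconnects any cluster member $w$ from $u$: every node on the shortest $u$-$w$ path is itself in the cluster and is therefore used as a relaxation point, so $w$ is eventually reached with the correct $\delta(u,w) = \dist(u,w,G)$. The auxiliary values $\dist(v, A_{i+1}, R, G)$ required for the test are produced, before the outer loop, by a single multi-source bounded-distance shortest-path computation with source set $A_{i+1}$ for each $i$, as indicated in the pseudocode.

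The main obstacle is articulating the closure lemma cleanly, since it is exactly what licenses cutting off Dijkstra at the cluster boundary without losing either membership or distance information; the remaining steps are a direct invocation of \Cref{lem:priorities} and standard Dijkstra bookkeeping under positive integer edge weights.
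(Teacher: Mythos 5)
Your proposal is correct and follows exactly the route the paper intends: the size bound is inherited from \Cref{lem:priorities}, and the correctness of the pruned search is licensed by the standard Thorup--Zwick closure property of clusters under shortest-path prefixes, which the paper only gestures at (``a shortest-path tree that is pruned at nodes whose distance to the root is at least their distance to nodes of higher priority'') without writing out. Your explicit closure lemma and the equivalence of the test $\delta(u,v) < \dist(v, A_{i+1}, R, G)$ with the cluster definition in the regime $\delta(u,v) \leq R$ are exactly the missing details; the only nitpick is the phrase ``the first value assigned to $\delta(u,v)$ is already $\dist(u,v,G)$,'' which is not literally true (a tentative value may be set larger via a non-shortest edge and later decreased), though the level-by-level Dijkstra schedule is still valid since relaxations from level $L$ only produce values exceeding $L$.
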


\subsection{Hop Reduction with Additive Error}\label{sec:hop_reduction_with_additive_error}

Consider the following algorithm for computing a set of edges $ F $.
First, deterministically compute clusters with $ p = \lfloor \sqrt{ \log{n} / \log{(9/\epsilon)}} \rfloor $  priorities (determined by a hierarchy of sets $ \cA = (A_i)_{0 \leq i \leq p} $) up to distance $ R = n^{1/p} \Delta $, where $ \Delta $ is a parameter controlling both the extent of the hop reduction and the quality of the resulting approximation.
Let $ F $ be the set containing an edge for every pair $ (u, v) \in V^2 $ such that $ v \in \clust (u, \cA, R, G) $, and set the weight of such an edge $ (u, v) \in F $ to $ w (u, v, F) = \dist (u, v, G) $, where the distance is returned by the algorithm for computing the clusters.
Procedure~\ref{alg:hop_reduction_additive} presents the pseudocode of this algorithm.

\begin{procedure}
\caption{HopReductionAdditiveError($ G $, $ \Delta $, $ \epsilon $)}
\label{alg:hop_reduction_additive}

\KwIn{Graph $ G = (V, E) $ with positive integer edge weights, $ \Delta \geq 1 $, $ 0 < \epsilon \leq 1 $}
\KwOut{Hop-reducing set of edges $ F \subseteq V^2 $ as specified in \Cref{lem:hop_reduction_additive_error}}

\BlankLine

$ p \gets \left\lfloor \sqrt{\frac{\log{n}}{\log{(9 / \epsilon)}}} \right\rfloor $\;
$ R \gets n^{1/p} \Delta $\;
$ F \gets \emptyset $\;
$ (C (v, \mathcal{A}, R, G), \delta (v, \cdot))_{v \in V} \gets $ \Clusters{$ G $, $ p $, $ R $}\;
\ForEach{$ u \in V $}{
	\ForEach{$ v \in \clust (u) $}{
		$ F \gets F \cup \{ (u, v) \} $\;
		$ w (u, v, F) = \delta (u, v) $\;
	}
}
\KwRet{$ F $}\;
\end{procedure}

\begin{lemma}\label{lem:hop_reduction_additive_error}
Let $ F \subseteq V^2 $ be the set of edges computed by Procedure~\ref{alg:hop_reduction_additive} for a weighted graph $ G = (V, E) $ with positive integer edge weights and parameters $ \Delta \geq 1 $ and $ 0 < \epsilon \leq 1 $.
Then $ F $ has size $ \tilde O (p n^{1 + 1/p}) $, where $ p = \lfloor \sqrt{(\log{n}) / (\log{(9 / \epsilon)})} \rfloor $, and in the graph $ H = G \cup F $, for every pair of nodes $ u $ and $ v $, we have
\begin{equation*}
\dist^{(p+1) \lceil \dist (u, v, G) / \Delta \rceil} (u, v, G) \leq (1 + \epsilon) \dist (u, v, G) + \epsilon n^{1/p} \Delta / (p + 2) \, ,
\end{equation*}
i.e., there is a path $ \pi' $ in $ H $ of weight $ w (\pi', H) \leq (1 + \epsilon) \dist (u, v, G) + \epsilon n^{1/p} \Delta / (p + 2) $ consisting of $ |\pi'| \leq (p+1) \lceil \dist (u, v, G) / \Delta \rceil $ edges.
\end{lemma}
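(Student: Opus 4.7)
The size bound is immediate from \Cref{thm:clusters}: \Cref{alg:hop_reduction_additive} inserts one edge into $F$ for each pair $(u, v)$ with $v \in \clust(u, \cA, R, G)$, so $|F| \leq \sum_{u \in V} |\clust(u, \cA, R, G)| = \tilde{O}(p \, n^{1 + 1/p})$.

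For the distance bound, my plan is to fix any pair $u, v$ together with a shortest path $\pi$ between them in $G$, and to construct an approximating walk in $H = G \cup F$ by iterating a ``priority-climbing'' sub-walk along $\pi$. The sub-walk takes at most $p + 1$ edges in $H$ and advances along $\pi$ by at least $\Delta$ units of weight (or reaches $v$ directly); iterating it at most $k := \lceil \dist(u, v, G) / \Delta \rceil$ times produces a walk from $u$ to $v$ with at most $(p+1) k$ hops, matching the hop count in the statement.

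Concretely, each sub-walk starts at a node $a$ on $\pi$ with target $b$ either the point on $\pi$ at distance $\Delta$ further or $v$ itself. Set $y_0 := a$; at step $j \geq 1$, let $i_{j-1}$ be the priority of $y_{j-1}$. If $b \in \clust(y_{j-1}, \cA, R, G)$, take the direct edge $(y_{j-1}, b) \in F$ of weight $\dist(y_{j-1}, b, G)$ and stop. Otherwise, using $\dist(y_{j-1}, b, G) \leq \Delta \leq R$, the cluster definition forces the existence of a node in $A_{i_{j-1}+1}$ within distance $\dist(y_{j-1}, b, G)$ of $y_{j-1}$; I would select $y_j$ to be such a node (breaking ties by node ID to enforce the strict inequality in the cluster condition), verify that $(y_{j-1}, y_j) \in F$, and iterate. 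Since the priority of $y_j$ strictly exceeds $i_{j-1}$ and $A_p = \emptyset$, the sub-walk must terminate with a direct edge to $b$ (or a node very close to $b$) within at most $p+1$ hops.

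The main obstacle will be controlling the weight of the walk: each cluster-edge hop can detour slightly off $\pi$, and the detours compose across the $p+1$ hops of a sub-walk. The plan is to show that this cumulative overhead is bounded by a multiplicative factor of $1 + \epsilon$, which is where the precise choice $p = \lfloor \sqrt{\log n / \log(9/\epsilon)} \rfloor$ is used: the factor $(9/\epsilon)^p \cdot \Delta$ matches $R = n^{1/p} \Delta$ and supplies exactly the slack needed to absorb $p$ successive deviations into a global $(1+\epsilon)$ factor. Any overshoot past $v$ in the final sub-walk (where the remaining portion of $\pi$ is shorter than $\Delta$) is absorbed into the additive term $\epsilon n^{1/p} \Delta / (p+2) = \epsilon R/(p+2)$. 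Summing the weight bounds across the $k$ sub-walks yields the claimed inequality for $\dist^{(p+1)\lceil \dist(u, v, G)/\Delta \rceil}(u, v, H)$.
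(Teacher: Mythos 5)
Your size bound and overall skeleton (iterated priority-climbing sub-walks of at most $p+1$ hops, each advancing at least $\Delta$ along $\pi$) match the paper. But there is a genuine gap in the weight analysis, and it stems from fixing the sub-walk's target $b$ at distance exactly $\Delta$ ahead on $\pi$. Each time $b \notin \clust(y_{j-1}, \cA, R, G)$, the cluster definition only gives you a node of higher priority within distance $\dist(y_{j-1}, b, G)$ of $b$ (hence within $2\dist(y_{j-1}, b, G)$ of $y_{j-1}$), and the edge to it need not be in $F$ either, so the argument must be iterated; in the worst case the distance from $y_j$ to the fixed target $b$ roughly triples at every climb. After up to $p$ climbs a single sub-walk can have weight $n^{\Theta(1/p)}\Delta$ while advancing only $\Delta$ along $\pi$, and since this overhead recurs in every one of the $\lceil \dist(u,v,G)/\Delta\rceil$ sub-walks, it is a \emph{multiplicative} blow-up of $n^{\Theta(1/p)}$ — it cannot be absorbed into the single additive term $\epsilon n^{1/p}\Delta/(p+2)$, and the choice of $p$ does not rescue it. The role of $p$ and $R = n^{1/p}\Delta$ is to bound the one-time additive error and to guarantee that top-priority nodes have full clusters; it does not make a fixed-$\Delta$-target sub-walk $(1+\epsilon)$-competitive.

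The missing idea is that the target must become more aggressive as the priority climbs: the paper introduces radii $r_0 = \Delta$ and $r_i = (4+2\epsilon)\sum_{j<i} r_j/\epsilon$, and at a node of priority $i$ the sub-walk aims for the \emph{furthest} node on $\pi$ within distance $r_i$, re-targeting after each climb. \Cref{lem:hop_set_structural_property} then guarantees either a direct edge to the target or a climb costing at most $2r_i$, so the total detour of a sub-walk is at most $\sum_j 2r_j$, while the progress along $\pi$ exceeds $r_{p_l} - \sum_{j < p_l} 2r_j = (4/\epsilon)\sum_{j<p_l} r_j \geq \Delta$; charging the detour against this progress yields the $(1+\epsilon)$ factor per sub-walk (Inequality \eqref{eq:bound_on_weight_of_path}), and the additive term $\beta = \sum_i 2r_i \leq \epsilon n^{1/p}\Delta/(p+2)$ is paid only once, in the terminal sub-walk that reaches $v$ itself. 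Without this exponentially growing, priority-dependent reach, your plan as stated does not give the claimed bound.
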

We devote the rest of this section to proving \Cref{lem:hop_reduction_additive_error}.
The bound on the size of~$ F $ immediately follows from \Cref{thm:clusters}.
We analyze the hop-reducing properties of $ F $ by showing the following.
Let $ \pi $ be a shortest path from $ u $ to $ v $ in $ G $.
Then there are a node $ w $ on $ \pi $ and a path $ \pi' $ from $ u $ to $ w $ in $ H = G \cup F $ with the following properties:
\begin{enumerate}[(1)]
\item The distance from $ u $ to $ w $ in $ G $ is at least $ \Delta $.
\item The path $ \pi' $ consists of at most $ p $ edges of $ F $ and at most one edge of $ G $.
\item The ratio between the weight of $ \pi' $ in $ H $ and the distance from $ u $ to $ w $ in $ G $ is at most $ (1 + \epsilon) $ if $ w \neq v $, and if $ w = v $, then the weight of $ \pi' $ in $ H $ is at most $ \beta $ (for some $ \beta $ that we set later).
\end{enumerate}
When we go from $ u $ to $ w $ using the path $ \pi' $ instead of the subpath of $ \pi $ we are using a ``shortcut'' of at most $ p + 1 $ hops that brings us closer to $ v $ by a distance of at least $ \Delta $ at the cost of some approximation.
Conditions~(1) and~(2) guarantee that by repeatedly applying this shortcut we can find a path $ \pi'' $ from $ u $ to $ v $ that has at most $ (p + 1) \lceil \dist (u, v, G) / \Delta \rceil $ hops (as we replace subpaths of $ \pi $ with weight at least $ \Delta $ by paths with at most $ p + 1 $ hops).
Condition~(3) guarantees that the multiplicative error introduced by using the shortcut is at most $ 1 + \epsilon $, except possibly for the last time such a shortcut is used, where we allow an additive error of~$ \beta $.
We show in \Cref{lem:bound_on_additive_error} that we can guarantee a value of $ \beta $ that is bounded by $ \epsilon n^{1/p} / (p + 2) $.
This type of analysis has been used before by Thorup and Zwick~\cite{ThorupZ06} to obtain a spanner for unweighted graphs defined from the partial shortest-path trees of the clusters, but without considering the hop-reduction aspect.
Bernstein~\cite{Bernstein09} also used a similar analysis to obtain a hop set for weighted graphs using clusters with full distance range.
We previously used this type of analysis to obtain a randomized hop set which is based not on clusters, but on a similar notion~\cite{HenzingerKNFOCS14}.

To carry out the analysis as explained above, we define a value $ r_i $ for every $ 0 \leq i \leq p-1 $ as follows:
\begin{align*}
r_0 &= \Delta \\
r_i &= \frac{(4 + 2\epsilon) \sum_{0 \leq j \leq i-1} r_j}{\epsilon} \, .
\end{align*}
The intuition is that a node $ u $ of priority $ i $ tries to take an edge of $ F $ to shortcut the way to $ v $ by at least $ r_i $.
If this fails, it will find an edge in $ F $ going to a node $ v' $ of higher priority.
Thus, to fulfill Condition~(3), $ v' $ has to try to shortcut even more ``aggressively.''
Consequently, the values of $ r_i $ grow exponentially with the priority $ i $.

We have chosen the range of the clusters large enough such that nodes of the highest priority always find the desired shortcut edge in $ F $.
We will show that the additive error incurred by this strategy is at most
\begin{equation*}
\beta = \sum_{0 \leq i \leq p-1} 2 r_i \, .
\end{equation*}
This value can in turn be bounded as follows.

\begin{lemma}\label{lem:bound_on_additive_error}
$ \beta \leq \epsilon n^{1/p} \Delta / (p+2) $.
\end{lemma}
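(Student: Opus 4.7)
The plan is to solve the linear recursion defining $r_i$ in closed form, then translate the target bound into a small numerical inequality that holds thanks to the specific choice of $p$.

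First, I would introduce the prefix sum $S_i = \sum_{j=0}^{i} r_j$ and rewrite the recursion as $r_i = \tfrac{4+2\epsilon}{\epsilon}\, S_{i-1}$. This gives $S_i = S_{i-1} + r_i = \bigl(1 + \tfrac{4+2\epsilon}{\epsilon}\bigr) S_{i-1} = \tfrac{4+3\epsilon}{\epsilon}\, S_{i-1}$; combined with $S_0 = r_0 = \Delta$, this yields the closed form $S_{p-1} = \bigl(\tfrac{4+3\epsilon}{\epsilon}\bigr)^{p-1}\Delta$, and hence
\[
\beta \;=\; 2 S_{p-1} \;=\; 2\left(\frac{4+3\epsilon}{\epsilon}\right)^{p-1}\Delta.
\]

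Next, I would invoke two elementary estimates. Since $\epsilon \leq 1$, one has $4 + 3\epsilon \leq 7$, so $(4+3\epsilon)/\epsilon \leq 7/\epsilon$. The definition $p = \lfloor \sqrt{(\log n)/\log(9/\epsilon)} \rfloor$ is equivalent to $p^2 \log(9/\epsilon) \leq \log n$, which rearranges to $n^{1/p} \geq (9/\epsilon)^p$. Plugging both bounds into the goal $\beta \leq \epsilon n^{1/p}\Delta/(p+2)$, the factors of $\epsilon$ and $\Delta$ cancel and the task reduces to the purely numerical claim
\[
2(p+2)\cdot 7^{p-1} \;\leq\; 9^p,\qquad\text{or equivalently}\qquad \frac{2(p+2)}{9} \;\leq\; \left(\frac{9}{7}\right)^{p-1}.
\]

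Finally, I would establish this inequality by a short induction on $p \geq 2$. The base case $p = 2$ gives $8/9 \leq 9/7$; for the inductive step, increasing $p$ by one adds exactly $2/9$ to the left-hand side while multiplying the right-hand side by $9/7$, which is a net increase of at least $(9/7)^{p-1}\!\cdot\!(2/7) \geq 2/7 > 2/9$. I do not foresee any genuine obstacle: the one potentially delicate point is the exponential-vs.-linear comparison, but there is exactly the slack between $7$ and $9$ built into the definition of $p$, which is precisely what makes the induction go through cleanly.
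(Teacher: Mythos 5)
Your proof is correct and follows essentially the same route as the paper: both bound $\sum_{j\le p-1} r_j$ by roughly $(7/\epsilon)^{p-1}\Delta$, reduce the claim to the numerical inequality $2(p+2)\,7^{p-1}\le 9^p$, and then use the choice of $p$ to get $(9/\epsilon)^p\le n^{1/p}$. Your exact closed form $S_{p-1}=((4+3\epsilon)/\epsilon)^{p-1}\Delta$ is a slightly cleaner version of the paper's inductive bound $\sum_{j\le i} r_j\le 7^i\Delta/\epsilon^i$, and your induction supplies a proof of the arithmetic fact that the paper merely asserts.
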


\begin{proof}
We first show that, for all $ 0 \leq i \leq p-1 $, $ \sum_{0 \leq j \leq i} r_j \leq 7^i \Delta / \epsilon^i $.
The proof is by induction on $ i $.
For $ i = 0 $ we have $ r_0 = \Delta = 7^0 \Delta $ by the definition of $ r_0 $ and for $ i \geq 1 $ we use the inequalities $ \epsilon > 0 $ and $ \epsilon \leq 1 $ and the induction hypothesis as follows:
\begin{align*}
\sum_{0 \leq j \leq i} r_j &= \sum_{0 \leq j \leq i-1} r_j + r_i = \sum_{0 \leq j \leq i-1} r_j + \frac{(4 + 2 \epsilon) \sum_{0 \leq j \leq i-1} r_j}{\epsilon} \\
 &\leq \frac{\sum_{0 \leq j \leq i-1} r_j}{\epsilon} + \frac{6 \sum_{0 \leq j \leq i-1} r_j}{\epsilon}
\leq \frac{7 \sum_{0 \leq j \leq i-1} r_j}{\epsilon} \leq \frac{7 \cdot 7^{i-1} \Delta}{\epsilon \cdot \epsilon^{i-1}}  \leq \frac{7^i \Delta}{\epsilon^i} \, .
\end{align*}
Using this inequality and the fact that $ (2p+7) 7^{p-1} \leq 9^p $ for all $ p \geq 0 $, we get
\begin{equation*}
\frac{(p+2) \beta}{\epsilon} = \frac{(p+2) \sum_{0 \leq j \leq p-1} 2 r_j}{\epsilon} \leq \frac{(2p+4) 7^{p-1} \Delta}{\epsilon^p} \leq \frac{9^p \Delta}{\epsilon^p} \leq n^{1/p} \Delta
\end{equation*}
The last inequality holds as by our choice of $ p = \lfloor \sqrt{ \log{n} / \log{(9/\epsilon)}} \rfloor $,
\begin{align*}
(9 / \epsilon)^p = 2^{p \cdot \log{(9 / \epsilon)}} 
 \leq 2^{\frac{\sqrt{\log{n}}}{\sqrt{\log{(9 / \epsilon)}}} \cdot \log{(9 / \epsilon)}} &= 2^{\sqrt{\log{n}} \cdot \sqrt{\log{(9 / \epsilon)}}} \\
 &= 2^{\frac{\sqrt{\log{(9 / \epsilon)}}}{\sqrt{\log{n}}} \cdot \log{n}} = 2^{\log{n} \cdot (1/p)} = n^{1/p} \, .
\end{align*}
\end{proof}

In the following we fix some values of $ \epsilon $ and $ \Delta $ and let $ F $ denote the set of edges computed by Procedure~\ref{alg:hop_reduction_additive}.
We now show that $ F $ has a certain structural property before we carry out the hop-reduction proof.

\begin{lemma}\label{lem:hop_set_structural_property}
Let $ u $ and $ v $ be nodes such that $ u \in A_i \setminus A_{i+1} $ (i.e., $ u $ has priority $ i $) and $ \dist (u, v, G) \leq r_i $.
Either 
\begin{enumerate}[(1)]
\item $ F $ contains an edge $ (u, v) $ of weight $ w (u, v, F) = \dist (u, v, G) $ or
\item $ F $ contains an edge $ (u, v') $ to a node $ v' \in A_{i+1} $ of priority $ j \geq i + 1 $ of weight $ w (u, v', F) \leq 2 r_i $.
\end{enumerate}
\end{lemma}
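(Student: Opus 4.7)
My plan is to split into two cases based on whether or not $v \in \clust (u, \cA, R, G)$, and in each case exhibit the required edge of $F$ directly from the output of \Cref{alg:clusters}. Before splitting, I would observe that
\[
2 r_i \leq 2 \sum_{j=0}^{p-1} r_j = \beta \leq R
\]
by \Cref{lem:bound_on_additive_error} (since $R = n^{1/p} \Delta$ and $\beta \leq \epsilon n^{1/p} \Delta / (p+2)$), so all distances I work with lie within the range $R$ of the cluster computation and the $\delta$-values produced by \Cref{alg:clusters} will agree with the corresponding true graph distances.

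In Case~(1), assume $\dist (u, v, G) < \dist (v, A_{i+1}, G)$, so $v \in \clust (u, \cA, R, G)$ by definition. I would then invoke the standard Thorup--Zwick fact that the entire shortest $u$-to-$v$ path in $G$ stays inside $\clust (u, \cA, R, G)$: for any intermediate node $w$, the triangle inequality combined with $w$ lying on a shortest $u$-to-$v$ path gives
\[
\dist (w, A_{i+1}, G) \geq \dist (v, A_{i+1}, G) - \dist (v, w, G) > \dist (u, v, G) - \dist (v, w, G) = \dist (u, w, G) .
\]
Consequently the pruned Dijkstra inside \Cref{alg:clusters} rooted at $u$ reaches $v$ along this path, $\delta (u, v) = \dist (u, v, G)$, and \Cref{alg:hop_reduction_additive} adds the edge $(u, v)$ to $F$ with weight $\dist (u, v, G)$, yielding property~(1).

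In Case~(2), $\dist (u, v, G) \geq \dist (v, A_{i+1}, G)$, so the triangle inequality gives $\dist (u, A_{i+1}, G) \leq \dist (u, v, G) + \dist (v, A_{i+1}, G) \leq 2 r_i$. I would let $v'$ be a node of $A_{i+1}$ attaining $\dist (u, v', G) = \dist (u, A_{i+1}, G)$, chosen among all such minimizers to have the highest priority $j$; necessarily $j \geq i+1$. The key claim is that $u \in \clust (v', \cA, R, G)$, i.e., $\dist (u, v', G) < \dist (u, A_{j+1}, G)$. Since $A_{j+1} \subseteq A_{i+1}$, the inequality $\dist(u, A_{j+1}, G) \geq \dist(u, v', G)$ is automatic; if equality held, some $v'' \in A_{j+1} \subseteq A_{i+1}$ of priority strictly greater than $j$ would also attain the minimum distance from $u$ to $A_{i+1}$, contradicting the maximality of $j$. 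Applying the Thorup--Zwick path argument from Case~(1) once more, now with root $v'$ and target $u$, \Cref{alg:clusters} produces $\delta (v', u) = \dist (v', u, G) \leq 2 r_i$, so \Cref{alg:hop_reduction_additive} adds $(u, v')$ to $F$ with this weight, yielding property~(2).

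The only non-mechanical step, and the place where one must be careful, is the tie-breaking in Case~(2): choosing $v'$ as merely ``any closest node in $A_{i+1}$ to $u$'' is not enough, because the cluster definition involves the strict inequality $\dist (u, v', G) < \dist (u, A_{j+1}, G)$. Selecting the highest-priority such closest node rules out equal-distance witnesses of strictly higher priority in $A_{j+1}$, and is precisely what forces the strict inequality to hold. Everything else is a routine combination of the triangle inequality, the chain $2r_i \leq \beta \leq R$, and the standard Thorup--Zwick cluster-path property.
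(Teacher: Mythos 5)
Your proof is correct, and in Case~(2) it takes a genuinely different route from the paper's. The paper argues iteratively: from $ v \notin \clust (u, \cA, R, G) $ it extracts a node $ v_1 \in A_{i+1} $ with $ \dist (u, v_1, G) \leq 2 r_i $, and whenever $ u \notin \clust (v_1, \cA, R, G) $ it extracts from that failure a node $ v_2 $ of strictly higher priority that is no farther from $ u $, and so on, terminating because nodes of the top priority $ p-1 $ have clusters containing everything within distance $ R $. You instead make a single choice --- the closest node of $ A_{i+1} $ to $ u $, with ties broken by maximum priority --- and your tie-breaking observation (which you rightly flag as the one delicate point, since cluster membership requires the \emph{strict} inequality $ \dist (u, v', G) < \dist (u, A_{j+1}, G) $) shows directly that $ u $ lies in that node's cluster; the degenerate case $ A_{j+1} = \emptyset $ makes the inequality vacuous. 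The two arguments certify possibly different witnesses $ v' $, but both satisfy the lemma, and yours is arguably cleaner since it dispenses with the termination argument. Your Case~(1) matches the paper's, except that you additionally re-derive, via the standard ``the shortest path stays inside the cluster'' argument, that the pruned search in Procedure~\ref{alg:clusters} returns the exact value $ \dist (u, v, G) $ for cluster members; the paper simply takes this from \Cref{thm:clusters}, which already guarantees $ \delta (u, w) = \dist (u, w, G) $ for every $ w \in \clust (u, \cA, R, G) $, so that extra work is not needed but does no harm.
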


\begin{proof}
First consider the case $ v \in \clust (u, \cA, R, G) $.
Then $ F $ contains the edge $ (u, v) $ of weight $ w (u, v, F) = \dist (u, v, G) $.

Now consider the case $ v \notin \clust (u, \cA, R, G) $.
Note that by the definition of $ \beta $ we have $ r_i \leq \beta/2 < \beta $, and by \Cref{lem:bound_on_additive_error} we have $ \beta \leq n^{1/p} \Delta $ (where $ p $ is the parameter controlling the construction of clusters).
As the algorithm sets $ R = n^{1/p} \Delta $, we have $ r_i \leq R $ by \Cref{lem:bound_on_additive_error} and thus $ \dist (u, v, G) \leq R $.
From the definition of $ \clust (u, \cA, R, G) $ it now follows that $ \dist (v, u, G) \geq \dist (v, A_{i+1}, G) $.
Thus there exists some node $ v_1 \in A_{i+1} $ of priority $ p_1 \geq i+1 $ such that $ \dist (v, v_1, G) \leq \dist (u, v, G) $.
By the triangle inequality we get $ \dist (u, v_1, G) \leq \dist (u, v, G) + \dist (v, v_1, G) \leq 2 \dist (u, v, G) \leq 2 r_i \leq R $.
If $ u \in \clust (v_1, \cA, R, G) $, then we are done as $ F $ contains the edge $ (u, v_1) $ of weight $ w (u, v_1, F) = \dist (u, v_1, G) \leq 2 r_i $.
Otherwise it follows from the definition of $ \clust (v_1, \cA, R, G) $ that there is some node $ v_2 \in A_{p_1+1} $ of priority $ p_2 \geq p_1 + 1 \geq i + 1 $ such that $ \dist (u, v_2, G) \leq \dist (u, v_1, G) \leq 2 r_i \leq R $.
By repeating the argument above we want to find some node $ v_j \in A_{i+1} $ of priority $ p_j \geq i + 1 $ such that $ \dist (u, v_j, G) \leq \dist (u, v, G) \leq 2 r_i $.
As for every node $ v' \in A_{p-1} $ of priority $ p-1 $, $ \clust (v', \cA, R, G) $ contains all nodes that are at distance at most~$ R $ from~$ v' $ in~$ G $, this repeated argument stops eventually and we find such a node.
\end{proof}

To finish the proof of \Cref{lem:hop_reduction_additive_error} we show in the next lemma that $ F $ has the properties we demanded, i.e., in the shortcut graph $ H $ which consists of $ G $ and the additional edges of $ F $, we can approximate shortest paths using a reduced number of hops.

\begin{lemma}\label{pro:hop_set_layers_additive_approximation}
For every pair of nodes $ u, v \in V $ such that $ \dist (u, v, G) < \infty $, the graph $ H = G \cup F $ contains a path $ \pi' $ from $ u $ to $ v $ of weight $ w (\pi', H) \leq (1 + \epsilon) \dist (u, v, G) + \beta $ consisting of $ | \pi' | \leq (p+1) \lceil \dist (u, v, G) / \Delta \rceil $ edges.
\end{lemma}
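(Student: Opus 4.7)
The plan is to prove the bound by induction on $ n_\Delta := \lceil \dist(u, v, G) / \Delta \rceil $, reducing the main statement to a single-shortcut claim: for any $ u $ and $ v $ with $ \dist(u, v, G) < \infty $ there is a node $ w $ on some shortest $ u $-$ v $ path $ \pi $ in $ G $ and a path $ \pi_0 $ from $ u $ to $ w $ in $ H $ such that (i)~either $ w = v $ or $ \dist(u, w, G) \geq \Delta $, (ii)~$ |\pi_0| \leq p + 1 $, and (iii)~$ w(\pi_0, H) \leq (1 + \epsilon) \dist(u, w, G) $ when $ w \neq v $, while $ w(\pi_0, H) \leq (1 + \epsilon) \dist(u, v, G) + \beta $ when $ w = v $. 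The base case $ n_\Delta = 0 $ (i.e.\ $ u = v $) is trivial, and given the single-shortcut claim the inductive step is immediate: if $ w = v $ we are done; otherwise we concatenate $ \pi_0 $ with the path given by the inductive hypothesis on $ (w, v) $. Since $ \dist(u, w, G) \geq \Delta $, we have $ \lceil \dist(w, v, G) / \Delta \rceil \leq n_\Delta - 1 $, the multiplicative error telescopes along the concatenation, and the additive $ \beta $ is incurred only in the terminal shortcut.

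To prove the single-shortcut claim I plan to invoke Lemma~\ref{lem:hop_set_structural_property} repeatedly, producing a sequence of nodes $ v_0 := u, v_1, \ldots, v_k $ of strictly increasing priorities $ i_0 < i_1 < \cdots < i_k $. At the $ j $-th step, Lemma~\ref{lem:hop_set_structural_property} is applied at $ v_j $ with a suitable target: its case~(1) delivers a direct edge $ (v_j, w) \in F $ of weight $ \dist(v_j, w, G) $ which terminates the process, while its case~(2) supplies an edge $ (v_j, v_{j+1}) \in F $ of weight at most $ 2 r_{i_j} $ to a node of strictly higher priority. Because case~(2) is impossible at priority $ p - 1 $ -- since $ A_p = \emptyset $ and the cluster range $ R = n^{1/p} \Delta \geq r_{p-1} $ (by Lemma~\ref{lem:bound_on_additive_error}) forces the target into the cluster -- the iteration terminates after at most $ k \leq p - 1 $ case-(2) steps, yielding at most $ p $ edges of $ F $; one additional edge of $ G $ may be appended to reach a node of $ \pi $, giving $ |\pi_0| \leq p + 1 $.

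The choice of the target, and hence of $ w $, is where the technical content lives and splits into two subcases. If $ \dist(u, v, G) \leq r_{i_0} $ we set $ w := v $, and an easy induction on $ j $ shows that the precondition $ \dist(v_j, w, G) \leq r_{i_j} $ holds throughout: the triangle inequality gives $ \dist(v_j, v, G) \leq 2 \sum_{l < j} r_{i_l} + \dist(u, v, G) $, the growth $ r_{i_j} \geq 7 r_{i_{j-1}} $ (valid for $ \epsilon \leq 1 $) controls the prefix sum, and the defining recurrence delivers $ \sum_{l < k} r_{i_l} \leq \tfrac{\epsilon}{4 + 2 \epsilon}\, r_{i_k} $. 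The weight of $ \pi_0 $ is then at most $ 2 \sum_{l < k} r_{i_l} + r_{i_k} \leq 2 \sum_{i = 0}^{p-1} r_i = \beta $, which matches the $ w = v $ clause of (iii). If instead $ \dist(u, v, G) > r_{i_0} $, I pick $ w $ on $ \pi $ whose $ G $-distance from $ u $ is tuned by the same recurrence so that, after possibly appending one edge of $ G $ to reach $ w $, the ratio of $ w(\pi_0, H) $ to $ \dist(u, w, G) $ is at most $ 1 + \epsilon $ while $ \dist(u, w, G) \geq \Delta $; the identity $ \sum_{l < k} r_{i_l} \leq \tfrac{\epsilon}{4 + 2 \epsilon}\, r_{i_k} $ again does the work, converting the cumulative case-(2) excess of $ 4 \sum_{l < k} r_{i_l} $ into $ \epsilon $ times the advance.

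The main obstacle is balancing three competing requirements in the intermediate case: the precondition $ \dist(v_j, w, G) \leq r_{i_j} $ must hold at every step so that the iteration does not stall, the target $ w $ must lie at $ G $-distance at least $ \Delta $ from $ u $ (or coincide with $ v $) to make real progress, and the multiplicative stretch along the shortcut must stay within $ 1 + \epsilon $. The recurrence $ r_i = \tfrac{4 + 2 \epsilon}{\epsilon} \sum_{j < i} r_j $ is engineered precisely to allow all three to be met simultaneously; the remainder of the argument is careful triangle-inequality bookkeeping and telescoping of the single-shortcut claim along the main induction on $ n_\Delta $.
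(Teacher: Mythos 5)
Your proposal is correct and follows essentially the same route as the paper's proof: the ``single-shortcut claim'' you isolate is exactly the paper's conditions (1)--(3), proved by the same priority-climbing application of \Cref{lem:hop_set_structural_property}, with the same recurrence identity $\sum_{j<i} r_j = \tfrac{\epsilon}{4+2\epsilon} r_i$ converting the accumulated detour cost into the $(1+\epsilon)$ stretch and the $\Delta$ advance. The only (harmless) differences are that you induct on $\lceil \dist(u,v,G)/\Delta\rceil$ rather than on $\dist(u,v,G)$ itself, and you split cases on $\dist(u,v,G)\le r_{i_0}$ up front where the paper splits on whether the final target equals $v$; note also that $r_i \ge 6\, r_{i-1}$ (not $7$) at $\epsilon=1$, which changes nothing in your argument.
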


\begin{proof}
The proof is by induction on the distance from $ u $ to $ v $ in $ G $.
The claim is trivially true for the base case $ \dist (u, v, G) = 0 $ in which $ u = v $.
Thus, we only need to consider the induction step in which $ \dist (u, v, G) \geq 1 $.

Let $ \pi $ denote the shortest path from $ u $ to $ v $ in $ G $.
We now define a sequence of nodes $ u_0, u_1, \ldots, u_l $, where $ l \leq p - 1 $.
For every $ 0 \leq j \leq l $, we denote by $ p_j $ the priority of $ u_j $.
We set $ u_0 = u $, and, given~$ u_j $, we define $ u_{j+1} $ as follows.
Let $ x $ be the node on $ \pi $ closest to~$ v $ that is at distance at most~$ r_{p_j} $ from $ u_j $ in $ G $ (this node might be $ v $ itself).
If $ F $ contains the edge $ (u_j, x) $ we stop (and set $ l = j $).
Otherwise we know by \Cref{lem:hop_set_structural_property} that $ F $ (and therefore also $ H $) contains an edge $ (u_j, u') $ to a node $ u' $ of priority at least $ p_j + 1 $.
In that case we set $ u_{j+1} = u' $.
We know further by \Cref{lem:hop_set_structural_property} that $ \dist (u_j, u_{j+1}, G) \leq 2 r_{p_{j+1}-1} $.
Note that the definition of this sequence $ u_0, u_1, \ldots, u_l $ ensures that $ l \leq p - 1 $ as the priority strictly increases with each node $ u_j $.
Having defined the sequence, we denote by $ x $ the node on $ \pi $ closest to~$ v $ that is at distance at most~$ r_{p_l} $ from $ u_l $ in $ G $ (again, this node might be $ v $ itself).
The definition of $ u_l $ guarantees that $ F $ (and thus H) contains the edge $ (u_l, x) $.
Figure~\ref{fig:illustration_of_path} illustrates the definition of this sequence.

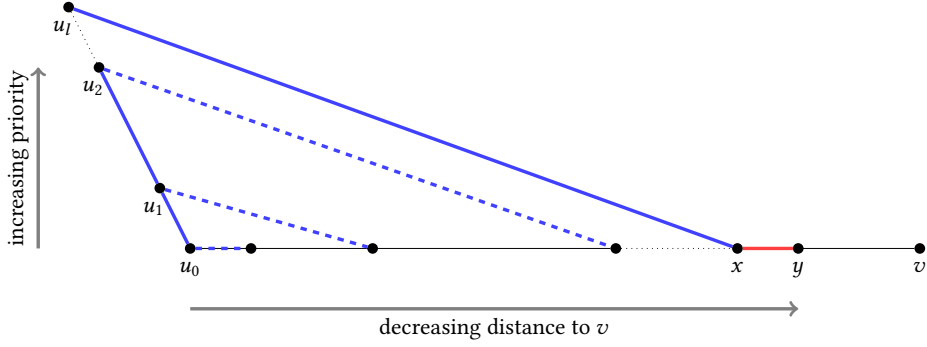
\begin{figure}[htbp!]
\centering
\scalebox{0.8}{
\begin{tikzpicture}
\tikzstyle{vertex}=[circle,fill=black,minimum size=5pt,inner sep=0pt,outer sep=0pt]
\tikzstyle{shortest-path} = [draw]
\tikzstyle{graph-edge} = [draw,ultra thick,-,color=red!75]
\tikzstyle{hop-edge} = [draw,ultra thick,-,color=blue!75]
\tikzstyle{hop-edge-missing} = [draw,ultra thick,dashed,color=blue!75]

\node[vertex,label=below:$u_0$] (u0) at (0, 0) {};
\node[vertex,label=below:$u_1~~$] (u1) at (-0.5, 1) {};
\node[vertex,label=below:$u_2~~$] (u2) at (-1.5, 3) {};
\node[vertex,label=below:$u_l~~$] (ul) at (-2, 4) {};
\node[vertex] (x0) at (1, 0) {};
\node[vertex] (x1) at (3, 0) {};
\node[vertex] (x2) at (7, 0) {};
\node[vertex,label=below:$x$] (w) at (9, 0) {};
\node[vertex,label=below:$y$] (w2) at (10, 0) {};
\node[vertex,label=below:$v$] (v) at (12, 0) {};

\path[shortest-path] (u0) -- (x2);
\path[draw,dotted] (x2) -- (w);
\path[shortest-path] (w) -- (v);

\path[hop-edge] (u0) -- (u1);
\path[hop-edge] (u1) -- (u2);
\path[hop-edge] (ul) -- (w);
\path[hop-edge-missing] (u0) -- (x0);
\path[hop-edge-missing] (u1) -- (x1);
\path[hop-edge-missing] (u2) -- (x2);
\path[graph-edge] (w) -- (w2);

\path[draw,dotted] (u2) -- (ul);

\path[draw,ultra thick,color=black!50,->] (0,-1) -- node[below,color=black]{decreasing distance to $ v $} (10,-1) ;
\path[draw,ultra thick,color=black!50,->] (-2.5,0) -- node[above,color=black,rotate=90]{increasing priority} (-2.5,3) ;
\end{tikzpicture}
}
\caption{Schematic illustration of the definition of the sequence of nodes $ u_0, u_1, \ldots, u_l, x $.
The bottom line represents the shortest path $ \pi $ from $ u $ to $ v $.
The thick blue edges are the edges from~$ F $ used to shorten the distance to~$ v $. The dashed blue edges are not contained in $ F $ and imply the existence of edges to nodes of increasing priority.
The node $ y $ is the successor of $ x $ on $ \pi $, and the thick red edge $ (x, y) $ from $ G $ is the last edge on the shortcut path from $ u_0 $ to $ y $.
The dotted lines indicate repetitions that are omitted in the picture.}\label{fig:illustration_of_path}
\end{figure}

First consider the case that $ x = v $.
Let $ \pi' $ denote the path $ \langle u_0, \ldots, u_l, x \rangle $.
This path has at most $ p $ hops and since $ \dist (u, v, G) \geq 1 $ we trivially have $ p \leq (p+1) \lceil \dist (u, v, G) / \Delta \rceil $.
Furthermore we can bound the weight of $ \pi' $ as follows:
\begin{align*}
w (\pi', H) &= \sum_{0 \leq j \leq l-1} w (u_j, u_{j+1}, H) + w (u_l, x, H) \\
 &\leq \sum_{0 \leq j \leq l-1} \dist (u_j, u_{j+1}, G) + \dist (u_l, x, G) \\
 &\leq \sum_{0 \leq j \leq l-1} 2 r_{p_{j+1}-1} + r_{p_l} \\
 &\leq \sum_{0 \leq j \leq p-1} 2 r_j \\
 &= \beta \leq (1 + \epsilon) \dist (u, v, G) + \beta \, .
\end{align*}

Now consider the case $ x \neq v $.
Let $ y $ be the neighbor of~$ x $ on~$ \pi $ (which in $ G $ is closer to~$ v $ than $ x $ is).
We will define the path $ \pi' $ from $ u $ to $ v $ as the concatenation of two paths $ \pi_1 $ and $ \pi_2 $.
Let $ \pi_1 $ be the path $ \langle u_0, \ldots, u_l, x, y \rangle $.
We will define the path $ \pi_2 $ from $ y $ to $ v $ later.
Note that $ \pi_1 $ consists of $ |p_1| \leq p + 1 $ hops.
We will now show that
\begin{equation}
w (\pi_1, H) \leq (1 + \epsilon) \dist (u, y, G) \, . \label{eq:bound_on_weight_of_path}
\end{equation}
In order to get this bound we will need some auxiliary inequalities.
By \Cref{lem:hop_set_structural_property} we have, for all $ 0 \leq j \leq l-1 $,
\begin{equation}
\dist (u_j, u_{j+1}, G) \leq 2 r_{p_{j+1}-1} \label{eq:priority_difference}
\end{equation}
and by the definition of $ r_{p_l} $ we have
\begin{equation}
\epsilon r_{p_l} = (4 + 2 \epsilon) \sum_{0 \leq j \leq p_l-1} r_j \, . \label{eq:radius_sum}
\end{equation}
Remember that $ x $ is the node on the path~$ \pi $ closest to~$ v $ that is at distance at most~$ r_{p_l} $ from $ u_l $ in~$ G $.
Since the neighbor $ y $ of $ x $ is closer to $ v $ than $ x $ is, this definition of~$ x $ guarantees that $ \dist (u_l, y, G) > r_{p_j} $.
As $ \dist (u_l, y, G) \leq \dist (u_l, x, G) + \dist (x, y, G) $ by the triangle inequality, we have
\begin{equation}
\dist (u_l, x, G) + \dist (x, y, G) > r_{p_j} \, . \label{eq:distance bounding inequality}
\end{equation}
By the triangle inequality we also have
\begin{equation*}
\dist (u_l, x, G) \leq \sum_{0 \leq j \leq l-1} \dist (u_j, u_{j+1}, G) + \dist (u, x, G)
\end{equation*}
and thus
\begin{equation}
\dist (u_l, x, G) - \sum_{0 \leq j \leq l-1} \dist (u_j, u_{j+1}, G) \leq \dist (u, x, G) \, . \label{eq:triangle_inequality}
\end{equation}

We now obtain~\eqref{eq:bound_on_weight_of_path} as follows:
\begin{align*}
w (\pi_1, H) &= \sum_{0 \leq j \leq l-1} w (u_j, u_{j+1}, H) + w (u_l, x, H) + w (x, y, H)  \\
 &= \sum_{0 \leq j \leq l-1} \dist (u_j, u_{j+1}, G) + \dist (u_l, x, G) + \dist (x, y, G) \\
 &= (2 + \epsilon) \sum_{0 \leq j \leq l-1} \dist (u_j, u_{j+1}, G) + \dist (u_l, x, G) + \dist (x, y, G) \\
 & \hspace{1em} - (1 + \epsilon) \sum_{0 \leq j \leq l-1} \dist (u_j, u_{j+1}, G) \\
 &\stackrel{\mathclap{\eqref{eq:priority_difference}}}{\leq} (2 + \epsilon) \sum_{0 \leq j \leq l-1} 2 r_{p_{j+1}-1} + \dist (u_l, x, G) + \dist (x, y, G) \\
 &\hspace{1em} - (1 + \epsilon) \sum_{0 \leq j \leq l-1} \dist (u_j, u_{j+1}, G) \\
 &\leq (2 + \epsilon) \sum_{0 \leq j \leq p_l-1} 2 r_j + \dist (u_l, x, G) + \dist (x, y, G) \\
 &\hspace{1em}- (1 + \epsilon) \sum_{0 \leq j \leq l-1} \dist (u_j, u_{j+1}, G) \\
 &\stackrel{\mathclap{\eqref{eq:radius_sum}}}{=} \epsilon r_{p_l} + \dist (u_l, x, G) + \dist (x, y, G) - (1 + \epsilon) \sum_{0 \leq j \leq l-1} \dist (u_j, u_{j+1}, G) \\
 &\stackrel{\mathclap{\eqref{eq:distance bounding inequality}}}{<} \epsilon (\dist (u_l, x, G) + \dist (x, y, G)) + \dist (u_l, x, G) + \dist (x, y, G) \\
 &\hspace{1em} - (1 + \epsilon) \sum_{0 \leq j \leq l-1} \dist (u_j, u_{j+1}, G) \\
 &= (1 + \epsilon) \left( \dist (u_l, x, G) - \sum_{0 \leq j \leq l-1} \dist (u_j, u_{j+1}, G) \right) + (1 + \epsilon) \dist (x, y, G) \\
 &\stackrel{\mathclap{\eqref{eq:triangle_inequality}}}{\leq} (1 + \epsilon) \dist (u, x, G) + (1 + \epsilon) \dist (x, y, G) \\
 &\leq (1 + \epsilon) (\dist (u, x, G) + \dist (x, y, G)) = (1 + \epsilon) \dist (u, y, G)
\end{align*}

Note that $ \dist (y, v, G) < \dist (u, v, G) $.
Therefore we may apply the induction hypothesis on $ y $ and get that the graph $ H $ contains a path $ \pi_2 $ from $ y $ to $ v $ of weight $ w (\pi_2, H) \leq (1 + \epsilon) \dist (y, v, G) + \beta $ that has $ | \pi_2 | \leq (p+1) \lceil \dist (y, v, G) / \Delta \rceil $ hops.
Let $ \pi' $ denote the concatenation of $ \pi_1 $ and $ \pi_2 $.
Then $ \pi' $ is a path from $ u $ to $ v $ in $ H $ of weight
\begin{align*}
w (\pi', H) &= w (\pi_1, H) + w (\pi_2, H) \\
 &\leq (1 + \epsilon) \dist (u, y, G) + (1 + \epsilon) \dist (y, v, G) + \beta \\
 &= (1 + \epsilon) (\dist (u, y, G) + \dist (y, v, G)) + \beta \\
 &= (1 + \epsilon) \dist (u, v, G) + \beta \, .
\end{align*}

It remains to bound the number of hops of $ \pi' $.
To get the desired bound we first show that $ \dist (u, y, G) \geq \Delta $.
By the triangle inequality we have
\begin{equation*}
\dist (u_l, y, G) \leq \sum_{0 \leq j \leq l-1} \dist (u_j, u_{j+1}, G) + \dist (u, y, G) \, .
\end{equation*}
As argued above, we have $ \dist (u_l, y, G) > r_{p_j} $ and
\begin{equation*}
\sum_{0 \leq j \leq l-1} \dist (u_j, u_{j+1}, G) \leq \sum_{0 \leq j \leq p_l - 1} 2 r_j \, .
\end{equation*}
By the definition of $ r_{p_l} $ we therefore get
\begin{align*}
\dist (u, y, G) &\geq \dist (u_l, y, G) - \sum_{0 \leq j \leq l-1} \dist (u_j, u_{j+1}, G) \\
 &\geq r_{p_l} - \sum_{0 \leq j \leq p_l - 1} 2 r_j = (4 / \epsilon) \sum_{0 \leq j \leq p_l - 1} r_j \geq r_0 = \Delta \, .
\end{align*}
Now that we know that $ \dist (u, y, G) \geq \Delta $, or equivalently $ \dist (u, y, G) / \Delta \geq 1 $, we get the following for counting the number of hops of $ \pi' $ by adding the number of hops of $ \pi_1 $ to the number of hops of $ \pi_2 $:
\begin{align*}
| \pi' | = | \pi_1 | + | \pi_2 | &\leq
p+1 + (p+1) \lceil \dist (y, v, G) / \Delta \rceil \\
 &= (p+1) (1 + \lceil \dist (y, v, G) / \Delta \rceil) \\
 &= (p+1) \lceil 1 + \dist (y, v, G) / \Delta \rceil \\
 &\leq (p+1) \lceil \dist (u, y, G) / \Delta + \dist (y, v, G) / \Delta \rceil \\
 &= (p+1) \lceil (\dist (u, y, G) + \dist (y, v, G)) / \Delta \rceil \\
 &= (p+1) \lceil \dist (u, v, G) / \Delta \rceil \, .
\end{align*}
Thus, $ \pi' $ has the desired number of edges.
\end{proof}

\subsection{Hop Reduction without Additive Error}\label{sec:hop_reduction_without_additive_error}

Consider a shortest path $ \pi $ from $ u $ to $ v $ with $ h $ hops and weight $ R \geq \Delta $.
With the hop reduction of Procedure~\ref{alg:hop_reduction_additive} we can compute a set of edges $ F $ such that in $ G \cup F $ we find a path from $ u $ to $ v $ with $ \tilde O (R / \Delta) $ hops of weight approximately $ R $ (where we incur an additive error of roughly $ \Delta n^{o(1)} $).
We now use the weight-rounding technique of \Cref{thm:property of weight rounding} and repeatedly apply this algorithm to obtain a set of edges $ F $ such that in $ G \cup F $ there is a path from $ u $ to $ v $ with $ O (h / \Delta) $ hops and weight approximately~$ R $.
As in general $ R $ can only be upper-bounded by $ n W $ (where $ W $ is the maximum edge weight of $ G $) and $ h $ can be upper-bounded by $ n $, the second type of hop reduction seems more desirable.
Additionally, if $ h $ is sufficiently larger than $ \Delta $, then the additive error inherent in the hop reduction of Procedure~\ref{alg:hop_reduction_additive} can be counted as a small multiplicative error.\footnote{Note that for smaller values of $ h $, $ \pi $ itself has a small enough number of hops and thus there is no need to find a path in $ G \cup F $ with a small number of hops and weight approximately $ R $.}

The second hop-reduction algorithm roughly works as follows.
For every possible distance range of the form $ 2^j \ldots 2^{j+1} $ we scale down the edge weights of $ G $ by a certain factor and run the algorithm of Procedure~\ref{alg:hop_reduction_additive} on the modified graph $ \hat{G}_j $ to compute a set of edges $ \hat{F}_j $.
We then simply return the union of all these edge sets (with the weights scaled back to normal again).
Procedure~\ref{alg:hop_reduction} shows the pseudocode of this algorithm.

\begin{procedure}
\caption{HopReduction($ G $, $ \Delta $, $ h $, $ \epsilon $, $ W $)}
\label{alg:hop_reduction}

\KwIn{Weighted graph $ G = (V, E) $ with positive integer edge weights in $ \{ 1, \ldots, W \} $, $ \Delta \geq 1 $, $ h \geq 1 $, $ 0 < \epsilon \leq 1 $}
\KwOut{Hop-reducing set of edges $ F \subseteq V^2 $ as specified in \Cref{lem:hop_reduction}}

\BlankLine

$ \epsilon' \gets \frac{\epsilon}{7} $\;
$ \Delta' \gets \frac{4 \Delta}{\epsilon'} $\;
$ F \gets \emptyset $\;
\For{$ j = 0 $ \KwTo $ \lfloor \log(nW) \rfloor$}{
	$ \hat{G}_j \gets (V, E) $\;
	$ \rho_j \gets \frac{\epsilon' 2^j}{h} $\;
	\lForEach{$ (u, v) \in E $}{
		$ w (u, v, \hat{G}_j) \gets \left\lceil \frac{w (u, v, G)}{\rho_j} \right\rceil $
	}
	$ \hat{F}_j \gets $ \HopReductionAdditiveError{$\hat{G}_j$, $\Delta'$, $\epsilon'$}\;
	\ForEach{$ (u, v) \in \hat{F}_j $}{
		$ F \gets F \cup \{ (u, v) \} $\;
		$ w (u, v, F) \gets \min (w (u, v, \hat{F}_j) \cdot \rho_j, w (u, v, F)) $\; 
	}
}
\KwRet{$ F $}\;
\end{procedure}

\begin{lemma}\label{lem:hop_reduction}
Let $ F \subseteq V^2 $ be the set of edges computed by Procedure~\ref{alg:hop_reduction} for a weighted graph $ G = (V, E) $ with positive integer edge weights in $ \{ 1, \ldots, W \} $ and parameters $ \Delta \geq 1 $, $ h \geq 1 $, and $ 0 < \epsilon \leq 1 $.
Then $ F $ has size $ \tilde O (p n^{1 + 1/p} \log{W}) $, where $ p = \lfloor \sqrt{(\log{n}) / (\log{(63 / \epsilon)})} \rfloor $,
and if $ h \geq n^{1/p} \Delta / (p + 2) $, then in the graph $ H = G \cup F $ we have, for every pair of nodes $ u $ and $ v $,
\begin{equation*}
\dist^{(p+2) h / \Delta} (u, v, H) \leq (1 + \epsilon) \dist^h (u, v, G) \, .
\end{equation*}
\end{lemma}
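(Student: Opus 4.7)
The plan is to verify the size bound by summation and then analyze correctness via the weight rounding lemma combined with \Cref{lem:hop_reduction_additive_error}.

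\textbf{Size.} The algorithm invokes \Cref{alg:hop_reduction_additive} once for each scale $j\in\{0,\ldots,\lfloor\log(nW)\rfloor\}$, using parameter $\epsilon'=\epsilon/6$. Since $\lfloor\sqrt{\log n/\log(9/\epsilon')}\rfloor=\lfloor\sqrt{\log n/\log(54/\epsilon)}\rfloor=p$, each call returns a set $\widehat F_j$ of size $\tilde O(p n^{1+1/p})$ by \Cref{lem:hop_reduction_additive_error}. Summing over the $O(\log nW)$ scales gives the claimed total size.

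\textbf{Correctness setup.} Fix $u,v$ with $\dist^h(u,v,G)<\infty$ (the zero-distance case is trivial) and pick $j$ with $2^j\le\dist^h(u,v,G)\le 2^{j+1}$. Applying \Cref{thm:property of weight rounding} to the scaled graph $\widehat G_j$ (with $\rho_j=\epsilon'2^j/h$) yields
\[
\dist(u,v,\widehat G_j)\le(1+2/\epsilon')h\quad\text{and}\quad \rho_j\cdot\dist(u,v,\widehat G_j)\le(1+\epsilon')\dist^h(u,v,G).
\]
Now invoke \Cref{lem:hop_reduction_additive_error} on $\widehat G_j$ with parameters $\Delta'=3\Delta/\epsilon'$ and $\epsilon'$; let $\widehat H_j=\widehat G_j\cup\widehat F_j$. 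This produces a path $\pi'$ from $u$ to $v$ in $\widehat H_j$ of weight $w(\pi',\widehat H_j)\le(1+\epsilon')\dist(u,v,\widehat G_j)+\epsilon'n^{1/p}\Delta'/(p+2)$ using at most $(p+1)\lceil\dist(u,v,\widehat G_j)/\Delta'\rceil$ edges.

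\textbf{Translating back to $H=G\cup F$.} The same node sequence as $\pi'$ forms a path in $H$: for each edge $(x,y)\in E$ we have $w(x,y,G)\le\rho_j\cdot w(x,y,\widehat G_j)$ by the rounding step, and for each edge $(x,y)\in\widehat F_j$ the algorithm places a copy in $F$ of weight at most $\rho_j\cdot w(x,y,\widehat F_j)$. Therefore the resulting path in $H$ has weight at most $\rho_j\cdot w(\pi',\widehat H_j)$, which together with the rounding bound gives
\[
w(\pi',H)\le(1+\epsilon')^2\dist^h(u,v,G)+\epsilon'\rho_j n^{1/p}\Delta'/(p+2).
\]
Substituting $\rho_j\le\epsilon'\dist^h(u,v,G)/h$ and $\Delta'=3\Delta/\epsilon'$ transforms the additive term into $3\epsilon'\dist^h(u,v,G)\cdot n^{1/p}\Delta/(h(p+2))$, and here the assumption $h\ge n^{1/p}\Delta/(p+2)$ kicks in to bound this by $3\epsilon'\dist^h(u,v,G)$. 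Thus $w(\pi',H)\le((1+\epsilon')^2+3\epsilon')\dist^h(u,v,G)\le(1+\epsilon)\dist^h(u,v,G)$ by the choice $\epsilon'=\epsilon/6$ and $\epsilon\le 1$.

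\textbf{Hop bound and main obstacle.} Using $\dist(u,v,\widehat G_j)\le(1+2/\epsilon')h$ and $\Delta'=3\Delta/\epsilon'$, one checks $(p+1)\lceil\dist(u,v,\widehat G_j)/\Delta'\rceil\le(p+1)\lceil(\epsilon'+2)h/(3\Delta)\rceil\le(p+1)(h/\Delta+1)$. To conclude $|\pi'|\le(p+2)h/\Delta$ requires $(p+1)\le h/\Delta$, which follows from $h\ge n^{1/p}\Delta/(p+2)$ together with the easily verified inequality $n^{1/p}\ge(p+1)(p+2)$ for our choice $p=\lfloor\sqrt{\log n/\log(54/\epsilon)}\rfloor$ (as in \Cref{lem:bound_on_additive_error}, $n^{1/p}\ge(54/\epsilon)^p$ grows much faster than any polynomial in $p$). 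The main technical obstacle is precisely this bookkeeping: choosing $\epsilon'$ and $\Delta'$ so that after compounding the $(1+\epsilon')$ multiplicative error with the extra factor from converting the additive term, one still lands inside $(1+\epsilon)$, while simultaneously keeping the hop count under $(p+2)h/\Delta$; the constants $6$ and $3$ in the algorithm are tuned exactly for this balance.
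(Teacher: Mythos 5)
Your proposal is correct and follows essentially the same route as the paper's proof: fix the scale $j$ with $2^j\le\dist^h(u,v,G)\le 2^{j+1}$, apply the weight-rounding lemma and \Cref{lem:hop_reduction_additive_error} to $\widehat G_j$, scale the resulting path back into $H$, absorb the additive term into a $3\epsilon'$ multiplicative error via the hypothesis $h\ge n^{1/p}\Delta/(p+2)$, and bound the hop count by $(p+2)h/\Delta$. Your final step for the hop bound (deriving $p+1\le h/\Delta$ from $n^{1/p}\ge(p+1)(p+2)$ together with the hypothesis) is a slightly more careful justification than the paper's chain of inequalities, but it is the same argument in substance.
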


\begin{proof}
Let $ u $ and $ v $ be a pair of nodes, and set $ j = \lfloor \log{\dist^h (u, v, G)} \rfloor $, i.e., $ 2^j \leq \dist^h (u, v, G) \leq 2^{j + 1} $.
Let $ \pi $ be a shortest $ \leq h $ hop path in $ G $; i.e., $ \pi $ has weight $ w (\pi, G) = \dist^h (u, v, G) $, and $ \pi $ consists of $ | \pi | \leq h $ hops.
The algorithm sets $ \epsilon' = \epsilon / 7 $ and uses a graph $ \hat{G}_j $ which has the same nodes and edges as~$ G $, but in which every edge weight is first scaled down by a factor of $ \rho_j = \epsilon' 2^j / h $ and then rounded up to the next integer.
By \Cref{thm:property of weight rounding} we have $ \dist (u, v, \hat{G}_j) \cdot \rho_j \leq (1 + \epsilon') \dist^h (u, v, G) $ and $ \dist (u, v, \hat{G}_j) \leq (2 + 2/\epsilon') h \leq 4 h / \epsilon' $.

Consider the set of edges $ \hat{F}_j $ computed in Procedure~\ref{alg:hop_reduction} (such a set does indeed exist because $ \dist^h (u, v, G) \leq nW $).
By \Cref{lem:hop_reduction_additive_error}, there is a path $ \pi' $ in $ \hat{H}_j = \hat{G}_j \cup \hat{F}_j $ of weight at most $ (1 + \epsilon') \dist (u, v, \hat{G}_j) + \epsilon' n^{1/p} \Delta' / (p + 2) $ and with at most $ | \pi' | \leq (p + 1) \lceil \dist (u, v, \hat{G}_j) / \Delta' \rceil $ hops.
Since we have $ \dist (u, v, \hat{G}_j) \leq 4 h / \epsilon' $ and the algorithm sets $ \Delta' = 4 \Delta / \epsilon' $, we get
\begin{align*}
| \pi' | &\leq (p + 1) \cdot \left\lceil \frac{ \dist (u, v, \hat{G}_j)}{\Delta'} \right\rceil \\
 &\leq (p + 1) \cdot \left\lceil \frac{4 h}{\epsilon' \Delta'} \right\rceil \\
 &= (p + 1) \cdot \left\lceil \frac{h}{\Delta} \right\rceil \\
 &\leq (p + 1) \left( \frac{h}{\Delta} + 1 \right) \\
 &= \frac{(p + 1) h}{\Delta} + (p + 1) \\
 &\leq \frac{(p + 1) h}{\Delta} + 4^p \leq \frac{(p + 1) h}{\Delta} + n^{1/p} \leq \frac{(p + 1) h}{\Delta} + \frac{h}{\Delta} = \frac{(p + 2) h}{\Delta} \, .
\end{align*}

The algorithm ``scales back'' the edge weights of $ \hat{F}_j $ when adding them to $ F $ and thus $ w (u, v, F) \leq w (u, v, \hat{F}_j) \cdot \rho_j $.
We now argue that $ \dist^{(p+1) \lceil h / \Delta \rceil} (u, v, H)  \leq (1 + \epsilon') \dist^h (u, v, G) $ by bounding the weight of $ \pi' $ in $ H = G \cup F $.
For every edge $ (u, v) $ of $ \pi' $ we have $ w (u, v, H) \leq w (u, v, F) \leq w (u, v, \hat{F}_j) \cdot \rho_j $ if $ (u, v) \in \hat{F}_j $ and $ w (u, v, H) \leq w (u, v, G) \leq w (u, v, \hat{G}_j) \cdot \rho_j $ otherwise.
Thus, $ w (\pi', H) \leq w (\pi', \hat{H}_j) \cdot \rho_j $ and together with the assumption $ h \geq n^{1/p} \Delta / (p + 2) $ we get
\begin{align*}
\dist^{(p+2) h / \Delta} (u, v, H) \leq
w (\pi', H) &\leq w (\pi', \hat{H}_j) \cdot \rho_j \\
 &\leq \left( (1 + \epsilon') \dist (u, v, \hat{G}_j) + \frac{\epsilon' n^{1/p} \Delta'}{p + 2} \right) \cdot \rho_j \\
 &= (1 + \epsilon') \dist (u, v, \hat{G}_j) \cdot \rho_j + \frac{\epsilon' n^{1/p} \Delta' \rho_j}{p + 2} \\
 &= (1 + \epsilon') \dist (u, v, \hat{G}_j) \cdot \rho_j + \frac{4 \epsilon' 2^j n^{1/p} \Delta}{h (p + 2)} \\
 &\leq (1 + \epsilon') \dist (u, v, \hat{G}_j) \cdot \rho_j + 4 \epsilon' 2^j \\
 &\leq (1 + \epsilon')^2 \dist^h (u, v, G) + 4 \epsilon' \dist^h (u, v, G) \\
 &\leq (1 + 7 \epsilon') \dist^h (u, v, G) \\
 &= (1 + \epsilon) \dist^h (u, v, G) \, .
\end{align*}
\end{proof}

\subsection{Computing the Hop Set}\label{sec:computing_hop_set}

We finally explain how to repeatedly use the hop reduction of Procedure~\ref{alg:hop_reduction} to obtain an $ (n^{o(1)}, o(1)) $-hop set.
Procedure~\ref{alg:hop_reduction} computes a set of edges $ F $ that reduces the number of hops needed to approximate the distance between any pair of nodes by a factor of $ 1 / \Delta $ (where $ \Delta $ is a parameter).
Intuitively we would now like to use a large value of $ \Delta $ to compute a hop set.
However, we want to avoid large values of $ \Delta $ for two reasons.
The first reason is that $ F $ reduces the number of hops only if the shortest path has $ h \geq \Delta n^{o(1)} $ hops.
Thus, for shortest paths that already have $ h < \Delta $ hops, the hop reduction is not effective.
The second reason is efficiency.
The algorithm requires us to compute clusters for distances up to $ \Delta n^{o(1)} $, and in the models of computation we consider later on, we do not know how to do this fast enough for our purposes.

We therefore use the following iterative approach in which we repeatedly apply the hop reduction of Procedure~\ref{alg:hop_reduction} with $ \Delta = (p+2) n^{1/p} = n^{o(1)} $.
We first compute a set of edges $ F_1 $ that reduces the number of hops in $ G $ by a factor of $ 1 / \Delta $.
We then add all these edges to $ G $ and consider the graph $ H_1 = G \cup F_1 $.
We apply the algorithm again on $ H_1 $ to compute a set of edges $ F_2 $ that reduces the number of hops in $ H_1 $ by a factor of $ 1 / \Delta $.
Now observe that the set of edges $ F_1 \cup F_2 $ reduces the number of hops in $ G $ by a factor of $ 1 / \Delta^2 $.
We show that by repeating this process $ p = \Theta (\sqrt{\log{n} / \log{(\sqrt{\log n} / \epsilon)}}) $ times we can compute a set $ F $ that reduces the number of hops to $ n^{1/p} $.
Procedure~\ref{alg:hop_set} shows the pseudocode of this algorithm.

\begin{procedure}
\caption{HopSet($ G $, $ \epsilon $, $ W $)}
\label{alg:hop_set}

\KwIn{Weighted graph $ G = (V, E) $ with positive integer edge weights in $ \{ 1, \ldots, W \} $, $ 0 < \epsilon \leq 1 $}
\KwOut{ $ (n^{1/p}, \epsilon) $-hop set $ F \subseteq V^2 $ as specified in \Cref{thm:hop_set}}

\BlankLine

$ \epsilon' \gets \frac{\epsilon}{2 \sqrt{\log{n}}} $\;
$ W' \gets (1 + \epsilon) n W $\;
$ p \gets \left\lfloor \sqrt{\frac{\log{n}}{\log{(63 / \epsilon')}}} \right\rfloor $\;
$ \Delta \gets (p+2) n^{1/p} $\;
$ F \gets \emptyset $\;
$ H_0 \gets G $\;
\For{$ i = 0 $ \KwTo $ p - 1 $}{
	$ h_i \gets n^{1 - i/p} $\;
	$ F_{i+1} \gets $ \HopReduction{$H_i$, $ \Delta $, $h_i$, $ \epsilon' $, $ W' $}\;
	$ F \gets F \cup F_{i+1} $\;
	$ H_{i+1} \gets H_i \cup F_{i+1} $\;
}
\KwRet{$ F $}\;
\end{procedure}

\begin{theorem}\label{thm:hop_set}
Let $ F \subseteq V^2 $ be the set of edges computed by Procedure~\ref{alg:hop_set} for a weighted graph $ G = (V, E) $ with positive integer edge weights in $ \{ 1, \ldots, W \} $ and a parameter $ 0 < \epsilon \leq 1 $.
Then $ F $ is an $ (n^{1/p}, \epsilon) $-hop set of size $ \tilde O (p^2 n^{1 + 1/p} \log{W}) $, where $ p = \lfloor \sqrt{(\log{n}) / (\log{(126 \sqrt{\log{n}} / \epsilon)})} \rfloor $.
\end{theorem}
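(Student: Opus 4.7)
The plan is to prove by induction on the main-loop index $ i $ that after each iteration the graph $ H_{i+1} = H_i \cup F_{i+1} $ satisfies
\begin{equation*}
\dist^{h_{i+1}}(u, v, H_{i+1}) \leq (1+\epsilon')^{i+1} \dist(u, v, G)
\end{equation*}
for every pair of nodes $ u, v $. The base case $ i = 0 $ is immediate since $ H_0 = G $ and $ h_0 = n $ bounds the hop length of any simple path, so $ \dist^n(u, v, G) = \dist(u, v, G) $.

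For the inductive step I would invoke \Cref{lem:hop_reduction} on input $ H_i $ with parameters $ \Delta = (p+2) n^{1/p} $, $ h = h_i = n^{1-i/p} $, and $ \epsilon' $, noting that the lemma's internal $ p $ is driven by the same $ \epsilon' $ and therefore coincides with the $ p $ of \Cref{alg:hop_set}. The applicability condition $ h \geq n^{1/p} \Delta / (p+2) = n^{2/p} $ holds precisely when $ i \leq p-2 $, and a direct substitution yields $ (p+2) h_i / \Delta = h_i / n^{1/p} = h_{i+1} $. The lemma therefore gives $ \dist^{h_{i+1}}(u, v, H_{i+1}) \leq (1 + \epsilon') \dist^{h_i}(u, v, H_i) $, which combined with the inductive hypothesis completes the step.

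Unrolling the induction after iteration $ p-2 $ produces $ \dist^{n^{1/p}}(u, v, H_{p-1}) \leq (1+\epsilon')^{p-1} \dist(u, v, G) $, and since $ H_{p-1} \subseteq G \cup F $ the same bound transfers to $ G \cup F $. The telescoped multiplicative factor is absorbed by $ \epsilon' = \epsilon / (2 \sqrt{\log n}) $: since $ p \leq \sqrt{\log n / \log 54} $, we have $ p \epsilon' = O(\epsilon) $, and $ (1+\epsilon')^{p-1} \leq e^{(p-1) \epsilon'} \leq 1 + \epsilon $ for $ \epsilon \leq 1 $ by the elementary inequality $ \ln(1+x) \geq x/2 $ on $ [0, 1] $. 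The lower bound $ \dist(u, v, G) \leq \dist^{n^{1/p}}(u, v, G \cup F) $ required by the hop-set definition follows from a side induction showing $ \dist(u, v, H_i) = \dist(u, v, G) $ for all $ i $: each edge of $ F_{j+1} $ is inserted with weight $ \dist(u, v, H_j) = \dist(u, v, G) $, so expanding every $ F $-edge in a path of $ G \cup F $ into its underlying shortest path in $ G $ yields a walk in $ G $ of the same weight. The size bound is then obtained by summing $ |F_{i+1}| = \tilde O(p n^{1+1/p} \log (n W')) $ from \Cref{lem:hop_reduction} over the $ p $ iterations and using $ W' = (1+\epsilon) n W $, which gives $ \tilde O(p^2 n^{1+1/p} \log(n W)) $.

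The main subtlety I expect is the final iteration $ i = p-1 $, where $ h_{p-1} = n^{1/p} $ is strictly less than $ n^{2/p} $ and therefore falls outside the regime in which \Cref{lem:hop_reduction} guarantees a useful hop reduction. This is benign: the target hop count $ n^{1/p} $ has already been reached after iteration $ p-2 $, and the extra edges produced by the last call only add another $ \tilde O(p n^{1+1/p} \log (n W)) $ to $ |F| $, which is absorbed into the overall size bound.
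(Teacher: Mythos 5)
Your proof follows essentially the same route as the paper: apply \Cref{lem:hop_reduction} iteratively with $ h_i = n^{1-i/p} $, check the condition $ h_i \geq n^{2/p} $ for $ i \leq p-2 $, telescope the $ (1+\epsilon') $ factors, and absorb $ (1+\epsilon')^{p-1} $ into $ 1+\epsilon $ via the choice $ \epsilon' = \epsilon/(2\sqrt{\log n}) $ (the paper uses \Cref{lem:exponential_inequality} where you use $ \ln(1+x) \geq x/2 $, an equivalent elementary bound). The only nitpick is that hop-set edges are inserted with weight $ \rho_j \cdot \dist(u,v,\widehat{G}_j) \geq \dist(u,v,G) $ rather than exactly $ \dist(u,v,G) $ because of the rounding in \Cref{alg:hop_reduction}, but this inequality is all your lower-bound argument needs, and that direction (together with the size bound) is a point the paper leaves implicit while you spell it out.
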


\begin{proof}
The algorithm sets $ \epsilon' = \epsilon / (2 \sqrt{\log{n}}) $ and $ p = \lfloor \sqrt{(\log{n}) / (\log{(63 / \epsilon')})} \rfloor $ and uses a parameter $ h_i = n^{1-i/p} $ for each graph $ H_i $.
For every $ 0 \leq i \leq p-2 $ we set $ h_i = n^{1-i/p} \geq n^{2/p} = n^{1/p} \Delta / (p + 2) $, and thus, by \Cref{lem:hop_reduction}, for every pair of nodes $ u $ and $ v $ we have
\begin{equation*}
\dist^{h_{i+1}} (u, v, H_{i+1}) = \dist^{h_i / n^{1/p}} (u, v, H_{i+1}) = \dist^{(p+2) h_i / \Delta} (u, v, H_{i+1}) \leq (1 + \epsilon') \dist^{h_i} (u, v, H_i) \, .
\end{equation*}
By iterating this argument we get
\begin{equation*}
\dist^{h_i} (u, v, H_i) \leq (1 + \epsilon')^i \dist^{h_0} (u, v, H_0) = (1 + \epsilon')^i \dist^n (u, v, G) = (1 + \epsilon')^i \dist (u, v, G)
\end{equation*}
for every $ 1 \leq i \leq p-1 $, and now in particular for $ i = p-1 $ we have
\begin{equation*}
\dist^{n^{1/p}} (u, v, H_{p-1}) = \dist^{h_{p-1}} (u, v, H_{p-1}) \leq (1 + \epsilon')^{(p-1)} \dist (u, v, G) \, .
\end{equation*}
Finally, since $ p-1 \leq \sqrt{\log{n}} $, we have, by the inequality $ (1 + x/y)^{y/2} \leq 1 + x $ for all $ 0 \leq x \leq 1 $ and $ y > 0 $\footnote{This inequality follows from the three well-known inequalities $ (1 + 1/z)^z \leq e $ (for all $ z > 0 $), $ e^z \leq 1 / (1 - z) $ (for all $ z < 1 $), and $ 1 / (1 - z) \leq 1 + 2 z $ (for all $ 0 \leq z \leq 1/2 $), where $ e $ is Euler's constant.},
\begin{equation*}
(1 + \epsilon')^{p-1} = \left( 1 + \frac{\epsilon}{2 \sqrt{\log{n}}} \right)^{p-1} \leq \left( 1 + \frac{\epsilon}{2 \sqrt{\log{n}}} \right)^{\sqrt{\log{n}}} \leq 1 + \epsilon \, .
\end{equation*}
As $ H_{p-1} = G \cup F $, it follows that $ \dist^{n^{1/p}} (u, v, G \cup F) \leq (1 + \epsilon) \dist (u, v, G) $ and thus $ F = \bigcup_{1 \leq i \leq p-1} F_i $ is an $ (n^{1/p}, \epsilon) $-hop set.
\end{proof}

The main computational cost for constructing the hop set comes from computing the clusters in Procedure~\ref{alg:hop_reduction_additive}, which is used as a subroutine repeatedly.
Observe that in total it will perform $ O (p \log{(nW)}) $ calls to Procedure~\ref{alg:clusters} to compute clusters, each with $ p = \Theta( \sqrt{(\log{n}) / (\log{(\sqrt{\log{n}} / \epsilon)})} ) $ priorities and distance range $ R = O (p \sqrt{\log{n}} n^{2/p} / \epsilon) $ on a weighted graph of size $ \tilde O (m + p^2 n^{1 + 1/p} \log{W}) $.
Note that if $ 1/\epsilon \leq \polylog{n} $, then $ n^{1/p} = n^{o(1)} $.
Thus, Procedure~\ref{alg:hop_set} will then compute an $ (n^{o(1)}, o(1)) $-hop set of size $ O (n^{1 + o(1)} \log{W}) $ and it will perform $ \tilde O (\log{W}) $ cluster computations with $ p = \Theta (\sqrt{\log{n} / \log{\log{n}}}) $ priorities up to distance range $ O (n^{o(1)}) $ on graphs of size $ O (m^{1 + o(1)} \log{W}) $ each, where $ m $ is the number of edges of the input graph.

\section{Distributed Approximate Single-Source Shortest Paths Algorithm on Networks with Arbitrary Topology}\label{sec:dist algo}

In this section we describe a deterministic distributed algorithm for computing distances from a source node~$s$.
It consists of two parts. The first part is constructing a suitable {\em overlay network}. A {\em randomized} construction algorithm was given in \cite{Nanongkai-STOC14} 
such that it was sufficient to solve \sssp on the resulting overlay network in order to solve the same problem on the whole network. We give a {\em deterministic} version of this result in \Cref{sec:overlay deterministic}. 
The second part is a more efficient algorithm for computing \sssp on an overlay network using
Procedures \ref{alg:priorities}, \ref{alg:clusters}, \ref{alg:hop_reduction_additive}, \ref{alg:hop_reduction}, and~\ref{alg:hop_set} from before (see \Cref{sec:distributed hop set}). In \Cref{sec:distributed final}, we show how to finish the computation after combining the two parts following \cite{Nanongkai-STOC14}. 
\Cref{fig:overlay2} illustrates our approach.

\subsection{Computing an Overlay Network Deterministically}\label{sec:overlay deterministic}

An {\em overlay network} (also known as {\em landmark} or {\em skeleton}  \cite{Sommer14,LenzenP_stoc13})
 as defined in \cite{Nanongkai-STOC14} 
 is a {\em virtual} network $G' = (V', E')$ of nodes $ V' $ and ``virtual edges'' $ E' $ that is built on top of an underlying {\em real} network $G = (V, E) $; i.e., $V' \subseteq V$ and
$E' = V' \times V'$ such that the weight of an edge in $G'$ is an approximation of the distance of its endpoints in $G$ and is $\infty$ if no path exists between
them in $G$. The nodes in $V'$ are called {\em centers}.
Computing~$ G' $ means that after the computation every node in $ G' $ knows whether it is a center and knows all virtual edges to its neighbors in $ G' $ and the corresponding weights.
We show in this subsection that there is a $\tilde O(\sqrt{n} / \epsilon + D)$-time algorithm that constructs an overlay network $G'$ of $\tilde O(\sqrt{n} / \epsilon)$ nodes such that a $(1 + \epsilon/3)$-approximation to \sssp in $G'$ can be converted to a $(1 + \epsilon)$-approximation to \sssp in $G$, as stated formally below.

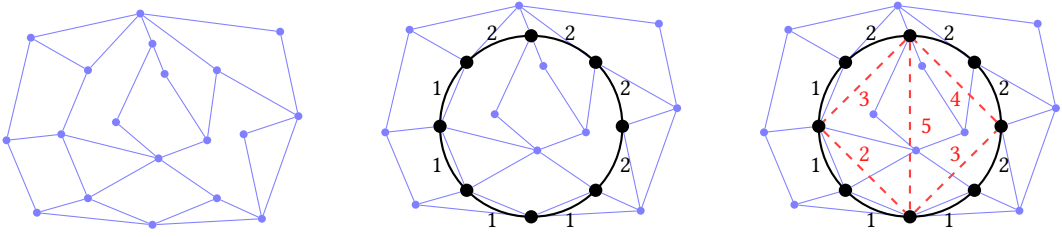
\begin{figure}
	\centering
	\begin{subfigure}[a]{0.32\textwidth}
	\begin{tikzpicture}[scale=0.8]
		\tikzstyle{node}=[circle,fill=blue!50,minimum size=3pt,inner sep=0pt,outer sep=0pt]
		\tikzstyle{edge} = [draw,color=blue!50]
		\tikzstyle{overlay-node}=[circle,fill=black,minimum size=5pt,inner sep=0pt,outer sep=0pt]
		\tikzstyle{hop-edge} = [draw,dashed,thick,color=red!75]

		\node[node] (u0) at (0.1,-0.4) {};
		\node[node] (u1) at (0.9,-0.1) {};
		\node[node] (u2) at (-0.6,0.2) {};
		\node[node] (u3) at (0.2,1.0) {};

		\node[node] (u4) at (2.4,0.3) {};
		\node[node] (u5) at (2.1,1.7) {};
		\node[node] (u6) at (-0.2,2.0) {};
		\node[node] (u7) at (-2.0,1.6) {};
		\node[node] (u8) at (-2.4,-0.1) {};
		\node[node] (u9) at (-1.9,-1.3) {};
		\node[node] (u10) at (1.8,-1.4) {};

		\draw[edge] (u0) -- (u1);
		\draw[edge] (u0) -- (u2);
		\draw[edge] (u1) -- (u3);
		\draw[edge] (u4) -- (u5);
		\draw[edge] (u4) -- (u10);
		\draw[edge] (u5) -- (u6);
		\draw[edge] (u6) -- (u7);
		\draw[edge] (u7) -- (u8);
		\draw[edge] (u8) -- (u9);

		\draw (0*360/8: 1.5cm) node[node] (v0) {};
		\draw (1*360/8: 1.5cm) node[node] (v1) {};
		\draw (2*360/8: 1.5cm) node[node] (v2) {};
		\draw (3*360/8: 1.5cm) node[node] (v3) {};
		\draw (4*360/8: 1.5cm) node[node] (v4) {};
		\draw (5*360/8: 1.5cm) node[node] (v5) {};
		\draw (6*360/8: 1.5cm) node[node] (v6) {};
		\draw (7*360/8: 1.5cm) node[node] (v7) {};

		\draw[edge] (v3) -- (v4);
		\draw[edge] (v4) -- (v5);
		\draw[edge] (v5) -- (v6);
		\draw[edge] (v6) -- (v7);

		\draw[edge] (v0) -- (u4);
		\draw[edge] (v0) -- (u10);
		\draw[edge] (v1) -- (u1);
		\draw[edge] (v1) -- (u4);
		\draw[edge] (v1) -- (u6);
		\draw[edge] (v2) -- (u2);
		\draw[edge] (v2) -- (u3);
		\draw[edge] (v2) -- (u6);
		\draw[edge] (v3) -- (u6);
		\draw[edge] (v3) -- (u7);
		\draw[edge] (v4) -- (u0);
		\draw[edge] (v4) -- (u8);
		\draw[edge] (v5) -- (u0);
		\draw[edge] (v5) -- (u9);
		\draw[edge] (v6) -- (u9);
		\draw[edge] (v6) -- (u10);
		\draw[edge] (v7) -- (u0);
		\draw[edge] (v7) -- (u10);
	\end{tikzpicture}
	\end{subfigure}
	\begin{subfigure}[a]{0.32\textwidth}
	\begin{tikzpicture}[scale=0.8]
		\tikzstyle{node}=[circle,fill=blue!50,minimum size=3pt,inner sep=0pt,outer sep=0pt]
		\tikzstyle{edge} = [draw,color=blue!50]
		\tikzstyle{overlay-node}=[circle,fill=black,minimum size=5pt,inner sep=0pt,outer sep=0pt]
		\tikzstyle{hop-edge} = [draw,dashed,thick,color=red!75]

		\node[node] (u0) at (0.1,-0.4) {};
		\node[node] (u1) at (0.9,-0.1) {};
		\node[node] (u2) at (-0.6,0.2) {};
		\node[node] (u3) at (0.2,1.0) {};

		\node[node] (u4) at (2.4,0.3) {};
		\node[node] (u5) at (2.1,1.7) {};
		\node[node] (u6) at (-0.2,2.0) {};
		\node[node] (u7) at (-2.0,1.6) {};
		\node[node] (u8) at (-2.4,-0.1) {};
		\node[node] (u9) at (-1.9,-1.3) {};
		\node[node] (u10) at (1.8,-1.4) {};

		\draw[edge] (u0) -- (u1);
		\draw[edge] (u0) -- (u2);
		\draw[edge] (u1) -- (u3);
		\draw[edge] (u4) -- (u5);
		\draw[edge] (u4) -- (u10);
		\draw[edge] (u5) -- (u6);
		\draw[edge] (u6) -- (u7);
		\draw[edge] (u7) -- (u8);
		\draw[edge] (u8) -- (u9);

		\draw (0*360/8: 1.5cm) node[overlay-node] (v0) {};
		\draw (1*360/8: 1.5cm) node[overlay-node] (v1) {};
		\draw (2*360/8: 1.5cm) node[overlay-node] (v2) {};
		\draw (3*360/8: 1.5cm) node[overlay-node] (v3) {};
		\draw (4*360/8: 1.5cm) node[overlay-node] (v4) {};
		\draw (5*360/8: 1.5cm) node[overlay-node] (v5) {};
		\draw (6*360/8: 1.5cm) node[overlay-node] (v6) {};
		\draw (7*360/8: 1.5cm) node[overlay-node] (v7) {};

		\draw[edge] (v3) -- (v4);
		\draw[edge] (v4) -- (v5);
		\draw[edge] (v5) -- (v6);
		\draw[edge] (v6) -- (v7);

		\draw[edge] (v0) -- (u4);
		\draw[edge] (v0) -- (u10);
		\draw[edge] (v1) -- (u1);
		\draw[edge] (v1) -- (u4);
		\draw[edge] (v1) -- (u6);
		\draw[edge] (v2) -- (u2);
		\draw[edge] (v2) -- (u3);
		\draw[edge] (v2) -- (u6);
		\draw[edge] (v3) -- (u6);
		\draw[edge] (v3) -- (u7);
		\draw[edge] (v4) -- (u0);
		\draw[edge] (v4) -- (u8);
		\draw[edge] (v5) -- (u0);
		\draw[edge] (v5) -- (u9);
		\draw[edge] (v6) -- (u9);
		\draw[edge] (v6) -- (u10);
		\draw[edge] (v7) -- (u0);
		\draw[edge] (v7) -- (u10);

		\draw[thick] (0, 0) circle (1.5cm);

		\draw (0.5*360/8: 1.675cm) node {\footnotesize $2$};
		\draw (1.5*360/8: 1.675cm) node {\footnotesize $2$};
		\draw (2.5*360/8: 1.675cm) node {\footnotesize $2$};
		\draw (3.5*360/8: 1.675cm) node {\footnotesize $1$};
		\draw (4.5*360/8: 1.675cm) node {\footnotesize $1$};
		\draw (5.5*360/8: 1.675cm) node {\footnotesize $1$};
		\draw (6.5*360/8: 1.675cm) node {\footnotesize $1$};
		\draw (7.5*360/8: 1.675cm) node {\footnotesize $2$};
	\end{tikzpicture}
	\end{subfigure}
	\begin{subfigure}[a]{0.32\textwidth}
	\begin{tikzpicture}[scale=0.8]
		\tikzstyle{node}=[circle,fill=blue!50,minimum size=3pt,inner sep=0pt,outer sep=0pt]
		\tikzstyle{edge} = [draw,color=blue!50]
		\tikzstyle{overlay-node}=[circle,fill=black,minimum size=5pt,inner sep=0pt,outer sep=0pt]
		\tikzstyle{hop-edge} = [draw,dashed,thick,color=red!75]

		\node[node] (u0) at (0.1,-0.4) {};
		\node[node] (u1) at (0.9,-0.1) {};
		\node[node] (u2) at (-0.6,0.2) {};
		\node[node] (u3) at (0.2,1.0) {};

		\node[node] (u4) at (2.4,0.3) {};
		\node[node] (u5) at (2.1,1.7) {};
		\node[node] (u6) at (-0.2,2.0) {};
		\node[node] (u7) at (-2.0,1.6) {};
		\node[node] (u8) at (-2.4,-0.1) {};
		\node[node] (u9) at (-1.9,-1.3) {};
		\node[node] (u10) at (1.8,-1.4) {};

		\draw[edge] (u0) -- (u1);
		\draw[edge] (u0) -- (u2);
		\draw[edge] (u1) -- (u3);
		\draw[edge] (u4) -- (u5);
		\draw[edge] (u4) -- (u10);
		\draw[edge] (u5) -- (u6);
		\draw[edge] (u6) -- (u7);
		\draw[edge] (u7) -- (u8);
		\draw[edge] (u8) -- (u9);

		\draw (0*360/8: 1.5cm) node[overlay-node] (v0) {};
		\draw (1*360/8: 1.5cm) node[overlay-node] (v1) {};
		\draw (2*360/8: 1.5cm) node[overlay-node] (v2) {};
		\draw (3*360/8: 1.5cm) node[overlay-node] (v3) {};
		\draw (4*360/8: 1.5cm) node[overlay-node] (v4) {};
		\draw (5*360/8: 1.5cm) node[overlay-node] (v5) {};
		\draw (6*360/8: 1.5cm) node[overlay-node] (v6) {};
		\draw (7*360/8: 1.5cm) node[overlay-node] (v7) {};

		\draw[edge] (v3) -- (v4);
		\draw[edge] (v4) -- (v5);
		\draw[edge] (v5) -- (v6);
		\draw[edge] (v6) -- (v7);

		\draw[edge] (v0) -- (u4);
		\draw[edge] (v0) -- (u10);
		\draw[edge] (v1) -- (u1);
		\draw[edge] (v1) -- (u4);
		\draw[edge] (v1) -- (u6);
		\draw[edge] (v2) -- (u2);
		\draw[edge] (v2) -- (u3);
		\draw[edge] (v2) -- (u6);
		\draw[edge] (v3) -- (u6);
		\draw[edge] (v3) -- (u7);
		\draw[edge] (v4) -- (u0);
		\draw[edge] (v4) -- (u8);
		\draw[edge] (v5) -- (u0);
		\draw[edge] (v5) -- (u9);
		\draw[edge] (v6) -- (u9);
		\draw[edge] (v6) -- (u10);
		\draw[edge] (v7) -- (u0);
		\draw[edge] (v7) -- (u10);

		\draw[thick] (0, 0) circle (1.5cm);

		\draw (0.5*360/8: 1.675cm) node {\footnotesize $2$};
		\draw (1.5*360/8: 1.675cm) node {\footnotesize $2$};
		\draw (2.5*360/8: 1.675cm) node {\footnotesize $2$};
		\draw (3.5*360/8: 1.675cm) node {\footnotesize $1$};
		\draw (4.5*360/8: 1.675cm) node {\footnotesize $1$};
		\draw (5.5*360/8: 1.675cm) node {\footnotesize $1$};
		\draw (6.5*360/8: 1.675cm) node {\footnotesize $1$};
		\draw (7.5*360/8: 1.675cm) node {\footnotesize $2$};

		\draw[hop-edge] (v0) -- (v2) node[midway,below=0.01pt,color=red!90] {\footnotesize $4$};
		\draw[hop-edge] (v2) -- (v4) node[midway,below=0.01pt,color=red!90] {\footnotesize $3$};
		\draw[hop-edge] (v4) -- (v6) node[midway,above=0.01pt,color=red!90] {\footnotesize $2$};
		\draw[hop-edge] (v6) -- (v0) node[midway,above=0.01pt,color=red!90] {\footnotesize $3$};
		\draw[hop-edge] (v2) -- (v6) node[midway,right=0.01pt,color=red!90] {\footnotesize $5$};
	\end{tikzpicture}
	\end{subfigure}
	\caption{An overview of the main steps of our algorithm. The left picture depicts the input graph. Thick edges and nodes (in black) in the middle picture depict a possible overlay network. Dashed edges (in red) in the right picture depict a possible hop set of the overlay network.}
	\label{fig:overlay2}
\end{figure}

\begin{table}
\centering
\begin{tabular}{c l}
$ G $						&	Graph defining underlying network, input of algorithm \\
$ G' $						&	Graph defining overlay network, output of algorithm \\
$ \epsilon $				&	Input parameter governing approximation quality \\
$ V $						&	Node set of $ G $, $ |V| = n $ \\
$ V' $						&	Node set $ G' $ called centers, $ V' \subseteq V $, $ |V'| = O(\sqrt{n} \lambda \log{(nW)} / \epsilon)$ \\
$ W $						&	Maximum edge weight of $ G $ \\
$ \diam $					&	Diameter of $ G $ \\
$ \lambda $					&	Number of bits used to represent each ID in the network $ G $ \\
$ \tilde{\epsilon} $		&	Parameter set to $ \tilde{\epsilon} = \tfrac{\epsilon}{10} $ \\
$ \hat{\epsilon} $			&	Parameter set to $ \hat{\epsilon} = \tfrac{\tilde{\epsilon}}{18 \lambda} $ \\
$ h $						&	Parameter set to $ h = \lfloor \hat{\epsilon} \sqrt{n} \rfloor $ \\
$ h' $						&	Parameter set to $ h' = (1 + 2 / \hat{\epsilon}) h $ \\
$ h^* $						&	Parameter set to $ h^* = 9 \lambda \sqrt{n} $ \\
$ k $						&	Parameter set to $ k = 2 h^* + 2 \sqrt{n} $ \\
$ k' $						&	Parameter set to $ k' = (1 + 2 / \hat{\epsilon}) k $ \\
$ j $						&	Generic index variable $ 0 \leq j \leq \lfloor \log{(n W)} \rfloor $ \\
$ \rho_j $					&	Rounding factor set to $ \rho_j = \tfrac{\hat{\epsilon} 2^j}{h} $ \\
$ \hat{G}_j $				&	Graph with edge weights $ w(u, v, \hat{G}_j) = \lceil \tfrac{w(u, v, G)}{\rho_j} \rceil $ \\
$ \varphi_j $				&	Rounding factor set to $ \varphi_j = \tfrac{\tilde{\epsilon} 2^j}{k} $ \\
$ \tilde{G}_j $				&	Graph with edge weights $ w(u, v, \tilde{G}_j) = \lceil \tfrac{w(u, v, G)}{\varphi_j} \rceil $ \\
$ \ball(v, \hat{G}_j, h') $	&	Ball of radius $ h' $ around $ v  $ in $ \hat{G}_j $ \\
$ t (v) $					&	Type of node $ v $, smallest $ j $ such that $ | \ball(v, \hat{G}_j, h') | \geq h $ \\
$ U_j $						&	Set of nodes of type $ j $, $ U_j \subseteq V $ \\
$ T_j $						&	$(2h'+1, (2h'+1) \lambda)$-ruling set $T_j$ for $\hat{G}_j$ of base set $ U_j $ \\
$ \dist^{h^*} (u, v, G) $	&	$h^*$-hop distance between $ u $ and $ v $ in $ G $ \\
$ \dist^k (u, v, G) $		&	$k$-hop distance between $ u $ and $ v $ in $ G $ \\
$ \hat{\dist} (u, v) $		&	$ (1 + \tilde{\epsilon}) $-approximation of $ \dist^k (u, v, G) $ \\
$ \tilde{\dist} (s, v) $	&	$ (1 + \epsilon/3) $-approximation of $ \dist(s, v, G') $
\end{tabular}
\caption{Overview of notation used in \Cref{sec:overlay deterministic}} \label{tab:notation subsection}
\end{table}

\begin{theorem}\label{thm:skeleton}\label{thm:overlay network}
In the broadcast \congest model, given any weighted undirected network $G = (V, E)$ with polynomially bounded positive integer edge weights and source node~$s$ and a parameter $ 0 < \epsilon \leq 1 $, there is an $ O(\sqrt{n} \lambda (\lambda + \log{(n W)}) \log{(n W)} / \epsilon + \diam) $-time deterministic distributed algorithm that computes an overlay network $G' = (V', E')$ and some additional information for every node with the following properties.  

\begin{itemize}
\item {\em Property 1:} $|V'|= O(\sqrt{n} \lambda \log{(nW)} / \epsilon)$ and $s\in V'$. 

\item {\em Property 2:} For every node $u\in V$, as soon as $ u $ receives a $(1 + \epsilon/3)$-approximation $\tilde{\dist} (s, v)$ of $\dist(s, v, G')$ for all centers $v\in V'$, it can infer a $(1 + \epsilon)$-approximation of $\dist(s, u, G)$ without any additional communication.
\end{itemize}
\end{theorem}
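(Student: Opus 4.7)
The plan is to realise the deterministic center-picking scheme sketched in the introduction by combining source detection (\Cref{lem:source detection algorithm}) with the ruling-set primitive (\Cref{lem:ruling set algorithm}). Fix $\sigma=\lceil C\sqrt{n}\log^{3}n\rceil$ for a sufficiently large constant $C$ and geometric weight scales $w_t=2^t$, $0\le t\le T$, where $T=O(\log(nW))=\tilde O(1)$. First I invoke source detection with $S=V$ and the above $\sigma$; in $\tilde O(\sqrt{n})$ rounds each node $v$ learns the distances to its $\sigma$ nearest nodes, in particular $d_v$, the distance to the $\sigma$-th closest. I assign $v$ the type $t(v)$ with $d_v\in(w_{t(v)-1},w_{t(v)}]$ and set $U_t=\{v:t(v)=t\}$.

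Next, for each type $t$ I want a set $T_t\subseteq U_t$ of centers satisfying (i) $\dist(v,v',G)>2w_t$ for all distinct $v,v'\in T_t$ and (ii) every $v\in U_t$ lies within weighted distance $O(w_t\log n)$ of some $v^*\in T_t$. I obtain $T_t$ by applying \Cref{lem:ruling set algorithm} to an auxiliary graph $G_t$ whose nodes are $U_t$ and whose adjacencies reflect weighted $G$-distance at most $2w_t$; the required adjacency information is already encoded in the $\sigma$-ball lists from Step~1, so each round of the ruling-set primitive can be simulated in \congest rounds of $G$. The size bound comes from an elementary packing argument: by~(i) and the triangle inequality the balls $B(v,w_t)$ for $v\in T_t$ are pairwise disjoint; each contains at least $\sigma$ nodes (since $d_v\le w_t$), so $|T_t|\le n/\sigma=\tilde O(\sqrt n/\log^3 n)$. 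Setting $V(G'):=\{s\}\cup\bigcup_t T_t$ and summing over the $\tilde O(1)$ scales gives $|V(G')|=\tilde O(\sqrt n)$, which is Property~1. A further invocation of source detection, this time with $S=V(G')$ and $\sigma'=|V(G')|$, lets each center learn its exact $G$-distance to every other center (defining the overlay weights $w(x,y,G'):=\dist(x,y,G)$) and every non-center $u$ learn $\dist(u,v,G)$ for all centers $v$ within its $\sigma'$-ball; in parallel I run $\sigma$ iterations of distributed Bellman--Ford from $s$ so that every $u$ stores $\dist^{\sigma}(s,u,G)$.

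For Property~2, once $u$ has received the approximations $\tilde d(s,v)$, it outputs
\[
\hat d(s,u):=\min\!\Bigl(\dist^{\sigma}(s,u,G),\ \min_{v}\bigl(\tilde d(s,v)+\dist(u,v,G)\bigr)\Bigr),
\]
where the inner minimum ranges over the centers $v$ whose distance to $u$ was discovered in Step~3. The bound $\hat d(s,u)\ge\dist(s,u,G)$ is immediate because both summands correspond to real $s$-$u$ walks in $G$ (using $w(x,y,G')=\dist(x,y,G)$ between centers). For the upper bound, let $\pi$ be a shortest $s$-$u$ path in $G$. If $|\pi|\le\sigma$ then $\dist^{\sigma}(s,u,G)=\dist(s,u,G)$. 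Otherwise I split $\pi$ into consecutive $\sigma$-hop subpaths; by pigeonhole one has weight at most $w(\pi)\sigma/|\pi|$, and any node $v$ in its interior satisfies $d_v\le w(\pi)\sigma/|\pi|$, so $v\in U_t$ with $w_t=O(w(\pi)\sigma/|\pi|)$. Property~(ii) yields a center $v^*\in T_t$ with $\dist(v,v^*,G)=O(w_t\log n)$; with the choice $\sigma=\Theta(\sqrt n\log^3 n)$ and $|\pi|>\sigma$, this detour is $o(w(\pi))$. The triangle inequality, combined with a short induction showing $\dist(s,v^*,G')\le(1+o(1))\dist(s,v^*,G)$ (which holds because the almost-hitting property applies recursively to shortest paths ending at centers), then gives $\tilde d(s,v^*)+\dist(u,v^*,G)\le(1+o(1))\dist(s,u,G)$.

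The main obstacle will be Step~2: mapping the hop-based ruling-set primitive onto a \emph{weighted}-distance requirement without blowing up the round complexity, simultaneously across all $\tilde O(1)$ scales. This is precisely where the partial distance-estimation tool of Lenzen and Patt-Shamir mentioned in the introduction enters, both to implement bounded-depth shortest-path trees from each emerging center (derandomising the random-delay technique of~\cite{Nanongkai-STOC14}) and to route auxiliary-graph adjacency information back to the endpoints efficiently.
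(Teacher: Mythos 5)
Your architecture is the same as the paper's (types from source detection, one ruling set per type, a packing bound for Property~1, partial distance estimation, and a chaining argument for Property~2), but several steps as written do not go through. The most basic issue is that you repeatedly ask for \emph{exact weighted} distances that cannot be obtained in $\tilde O(n^{1/2})$ rounds: the running time of \Cref{lem:source detection algorithm} scales with the distance range $\gamma$, and both ``the distance $d_v$ to the $\sigma$-th closest node'' and ``each center learns its exact $G$-distance to every other center'' may involve distances as large as $nW$. The paper's fix is to run detection separately on each rounded graph $G_i$ (weights $\lceil w/\rho_i\rceil$ with $\rho_i=\epsilon 2^i/h$, \Cref{thm:property of weight rounding}), where the relevant distances are $O(h')=O(n^{1/2}/\epsilon)$, paying a $(1+\epsilon)$ approximation and a restriction to $k$-hop-bounded distances. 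You flag the analogous issue only for the ruling-set step, but your Property~2 argument actually \emph{relies} on exactness (you use $\dist(s,v^*,G')=\dist(s,v^*,G)$ via exact overlay weights, and you wave at an ``induction'' instead of the explicit chaining through one near-center per $n^{1/2}$-hop block that becomes necessary once overlay weights are only $(1+o(1))$-approximate and hop-bounded).

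There is also a quantitative gap in the almost-hitting step. Your pigeonhole gives a light $\sigma$-hop subpath of weight at most $w(\pi)\sigma/|\pi|$, and the ruling set then contributes a detour of $O(w_t\log n)=O\bigl(\log n\cdot w(\pi)\sigma/|\pi|\bigr)$; when $|\pi|=\Theta(\sigma)$ this is $\Theta(w(\pi)\log n)$, not $o(w(\pi))$. The paper avoids this by decoupling the two length scales: the hitting balls have size $h=\epsilon n^{1/2}$ with $\epsilon=1/\log^2 n$, while the chaining blocks have $n^{1/2}$ edges, so each block contains a sub-block of weight $O(\epsilon\, w(\pi_i))$ (\Cref{thm:hitting paths}) and the $O(\log n)$ ruling-set stretch is absorbed, giving detour $O(\epsilon\log n\cdot w(\pi_i))=o(w(\pi_i))$ per block (\Cref{thm:ruling set}). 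Your construction needs the same $1/\polylog n$ slack between the ball-size parameter and the block length. With these repairs --- per-scale weight rounding for all detection and ruling-set computations, approximate $k$-hop overlay weights with $k=2h^*+2n^{1/2}$, the $\epsilon$ slack, and the explicit per-block chaining --- your proof becomes essentially the paper's.
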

Note that in this paper we assume that $ \lambda = O (\log{n}) $ and $ W = \poly (n) $, and hence the overlay network has size $ |V'| = \tilde O (\sqrt{n} / \epsilon) $ and the running time is $ \tilde O (\sqrt{n} / \epsilon + \diam) $.
Observe that the statement of the theorem makes our algorithm very modular by separating the tasks of (i) constructing the overlay network and (ii) computing a $(1 + \epsilon/3)$-approximation of $\dist(s, v, G')$ for all centers $v\in V'$.
In \Cref{sec:distributed hop set} we show how to perform the second task by implementing the hop set algorithm of \Cref{sec:hop_set}.
It could, however, be replaced by any other algorithm providing such a $(1 + \epsilon/3)$-approximation, as is done, for example, in the recent approximate SSSP algorithm by Becker et al.~\cite{BeckerKKL16}, which also benefits from our deterministic construction of the overlay network.

Before proving the above theorem, we first recall how similar guarantees were achieved with a randomized algorithm in \cite{Nanongkai-STOC14}\footnote{We note that \cite{Nanongkai-STOC14} proved this theorem for general parameters $\lambda$ and $\alpha$, but we will only need it for $\lambda=\alpha=\sqrt{n}$.} (see Theorem 4.2 of the arXiv version\footnote{\url{https://arxiv.org/pdf/1403.5171v2.pdf}} of~\cite{Nanongkai-STOC14} for details).
\begin{itemize}
	\item In the first step of \cite{Nanongkai-STOC14}, the algorithm selects each node to be a center with probability $\tilde \Theta(1/\sqrt{n})$ and also makes $s$ a center. By a standard ``hitting set'' argument (e.g., \cite{UllmanY91,DemetrescuFI05}), any shortest path containing $\sqrt{n}$ edges will contain a center with high probability. Also, the number of centers is $\tilde \Theta(\sqrt{n})$ with high probability.
	\item In the second step, the algorithm makes sure that every node $v$ knows $ (1 + O(\epsilon)) $-approximate $\tilde \Theta(\sqrt{n})$-hop distances between $v$ and all centers using a {\em lightweight bounded-hop single-source shortest paths algorithm} from all centers in parallel, combined with the {\em random delay} technique to avoid congestion. 
\end{itemize}
Let us now give an overview of our new approach.
We derandomize the first step as follows: In \Cref{sec:types} we assign to each node $u$ a {\em type}, denoted by $t(u)$. (To compute these types, we invoke the {\em source detection algorithm} of Lenzen and Peleg \cite{LenzenP_podc13}, as we will explain in \Cref{sec:types}.) The important property of node types is that every path $\pi$ containing $\sqrt{n}$ edges contains a special node~$u$ of a ``desired'' type, meaning that $t(u)$ is not too big compared to $w(\pi, G)$ (see \Cref{thm:hitting paths} for details). This is  comparable to the property obtained from the hitting set argument, which would be achieved if we made the special node
of \emph{every} path a center. However, this may create too many centers (we want the number of centers to be $\tilde O(\sqrt{n} / \epsilon)$). Instead we select some nodes to be centers using the {\em ruling set} algorithm, as described in \Cref{sec:ruling set}. After this, we get a small set of centers such that every node $u$ of type $t(u)$ is not far from one of the centers. Thus, while we cannot guarantee that the path $\pi$ {\em contains} a center, we can guarantee that it contains a  node that is {\em not far} from a center (see \Cref{thm:ruling set} for details). 

To derandomize the second step, we use the recent algorithm of Lenzen and Patt-Shamir \cite{LenzenP14a-distance} for the {\em partial distance estimation} problem 
together with Procedures \ref{alg:priorities}, \ref{alg:clusters}, \ref{alg:hop_reduction_additive}, \ref{alg:hop_reduction}, and~\ref{alg:hop_set}, as we will explain in \Cref{sec:compute bounded hop distance}.  

Since this part of the paper is particularly dense in notation, we summarize the notation used in this subsection in \Cref{tab:notation subsection}.

\subsubsection{Types of Nodes}\label{sec:types}

Our algorithm initially spends $ O (\diam) $ rounds to make $ n $ and $ \lambda $ (and, if necessary, $ \epsilon $) global knowledge.
Every node internally sets $ \tilde{\epsilon} = \epsilon / 10 $, $ \hat{\epsilon} = \tilde{\epsilon} / (18 \lambda) $, $ h = \lfloor \hat{\epsilon} \sqrt{n} \rfloor $, and $ h' = (1 + 2 / \hat{\epsilon}) h $. Note that $ h' \leq 3 \sqrt{n} $.

For any integer $ 0 \leq j \leq \lfloor \log{n W} \rfloor $, we let $\rho_j = \tfrac{\hat{\epsilon} 2^j}{h}$ and let $\hat{G}_j$ be the graph with the same nodes and edges as $ G $ and weight $w(u, v, \hat{G}_j)=\lceil\tfrac{w(u, v, G)}{\rho_j}\rceil$ for every edge $(u, v)$.
Note that we have chosen~$ h' $ such that $ \dist(u, v, \hat{G}_j) \leq h' $ for all pairs of nodes $ u $ and $ v $ such that $ 2^j \leq \dist (u, v, G) \leq 2^{j+1} $ by~\eqref{eq:apsp approx main two} of \Cref{thm:property of weight rounding}.
For any node $u$, let the {\em ball} of $u$ in $\hat{G}_j$ be $\ball(u, \hat{G}_j, h') = \{v\in V) \mid \dist(u, v, \hat{G}_j)\leq h'\}$.
Note that for any index~$j$ and nodes $u$ and $v$, $\dist(u, v, \hat{G}_{j+1})\leq \dist(u, v, \hat{G}_j)$; thus, $\ball(u, \hat{G}_j, h')\subseteq \ball(u, \hat{G}_{j+1}, h')$. 
Let the {\em type} $t(u)$ of $u$ be the smallest index $j$ such that $|\ball(u, \hat{G}_j, h')|\geq h$. We crucially exploit the following structural property.

\begin{lemma}\label{thm:hitting paths}
For every path $\pi$ of $ G $ consisting of $|\pi|=\sqrt{n}$ edges there is a node $u$ on $\pi$ such that $2^{t(u)} \leq 2 \hat{\epsilon} w(\pi, G)$.
\end{lemma}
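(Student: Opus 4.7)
My plan is to choose $j = \lfloor \log_2(2\epsilon w(\pi, G)) \rfloor$, so that $\epsilon w(\pi, G) < 2^j \leq 2\epsilon w(\pi, G)$, and then to exhibit a node $u$ on $\pi$ whose ball $\ball(u, G_j, h')$ already contains at least $h$ nodes. By the definition of type this immediately yields $t(u) \leq j$ and hence $2^{t(u)} \leq 2^j \leq 2\epsilon w(\pi, G)$, which is what we want. (The edge case $2\epsilon w(\pi, G) < 1$ will not arise because $w(\pi, G) \geq n^{1/2}$ and $\epsilon = 1/\log^2 n$.)

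To produce such a $u$, I will chop the $n^{1/2}$ edges of $\pi = (v_0, v_1, \ldots, v_{n^{1/2}})$ into $K = \lceil n^{1/2}/(h-1) \rceil$ consecutive subpaths, so that the first $K-1$ of them are \emph{full}, i.e., each contains exactly $h-1$ edges and therefore $h$ nodes. Since $h = \lfloor \epsilon n^{1/2} \rfloor$ and $\epsilon \leq 1/2$, a short calculation will give $K - 1 \geq 1/(2\epsilon)$. The $G$-weights of these $K-1$ full subpaths sum to at most $w(\pi, G)$, so by pigeonhole there is a full subpath $\pi^*$ with $w(\pi^*, G) \leq w(\pi, G)/(K-1) \leq 2\epsilon w(\pi, G)$. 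I will take $u$ to be one endpoint of $\pi^*$.

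To verify that every node $v$ on $\pi^*$ lies in $\ball(u, G_j, h')$, let $\sigma$ denote the subpath of $\pi^*$ from $u$ to $v$, so that $|\sigma| \leq h - 1$ and $w(\sigma, G) \leq 2\epsilon w(\pi, G)$. The rounding definition of $G_j$ then gives
\[
\dist(u, v, G_j) \;\leq\; w(\sigma, G_j) \;=\; \sum_{e \in \sigma}\left\lceil \frac{w(e, G)}{\rho_j}\right\rceil \;\leq\; \frac{w(\sigma, G)}{\rho_j} + |\sigma|.
\]
Combining $\rho_j = \epsilon 2^j/h$ with $2^j > \epsilon w(\pi, G)$ yields $\rho_j > \epsilon^2 w(\pi, G)/h$ and hence $2\epsilon w(\pi, G)/\rho_j < 2h/\epsilon$. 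Substituting, I expect to obtain
\[
\dist(u, v, G_j) \;<\; \frac{2h}{\epsilon} + (h-1) \;\leq\; \left(1 + \frac{2}{\epsilon}\right) h \;=\; h',
\]
so all $h$ nodes of $\pi^*$ lie in $\ball(u, G_j, h')$, forcing $|\ball(u, G_j, h')| \geq h$ as required.

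The only delicate part will be lining up the constants: I must arrange the partition so that the chosen full subpath $\pi^*$ has exactly $h$ nodes (so that it witnesses $|\ball(u, G_j, h')| \geq h$) while still having weight bounded by roughly $2\epsilon w(\pi, G)$, and I must absorb the ``$+|\sigma|$'' rounding slack into the single-unit gap between $2h/\epsilon + h - 1$ and $h' = (1 + 2/\epsilon) h$. Everything else reduces to substitution and the monotonicity $\ball(u, G_i, h') \subseteq \ball(u, G_{i+1}, h')$ noted just before the lemma statement.
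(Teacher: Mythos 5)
Your proposal is correct and follows essentially the same route as the paper: partition $\pi$ into $\Theta(1/\epsilon)$ consecutive subpaths of about $h$ edges, use pigeonhole to find one of weight at most $2\epsilon\, w(\pi,G)$, and then use the weight rounding at the appropriate scale $2^j$ to show its endpoint's ball in $G_j$ already contains all $h$ nodes of that subpath, forcing $t(u)\le j$. The only cosmetic differences are that you pick $j$ directly from the weight bound rather than from $\dist^h(u,v,G)$, use disjoint subpaths of $h-1$ edges instead of the paper's overlapping $h$-edge pieces, and carry out the rounding estimate explicitly rather than citing the rounding lemma.
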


\begin{proof}
Let $\ell= \lceil |\pi|/h \rceil \geq 1/\eps'$, and let $ x $ and $ y $ denote the endpoints of $ \pi $.
Partition~$\pi$ into the path~$\pi_x$ consisting of the $(\ell-1)h$ edges closest to $x$ and the path~$\pi_y$ consisting of the $|\pi| - (\ell-1)h$ edges closest to $y$.
Further partition $\pi_x$ into $\ell-1$ nonoverlapping subpaths of exactly $h$ edges, and expand the path $\pi_y$ by adding edges of $\pi_x$ to it until it has $h$ edges.
Thus, there are now $\ell$ paths of exactly $h$ edges each and total weight at most $2 w(\pi, G)$.
It follows that there exists a subpath $\pi'$ of $ \pi $ consisting of exactly $h$ edges and weight at most $2 w(\pi, G)/\ell \leq 2 \hat{\epsilon} w(\pi, G)$.
Let $u$ and~$v$ be the two endpoints of $\pi'$, and let $ j $ be the index such that $2^j \leq \dist^h(u,v,G) \leq 2^{j+1}$.
By~\eqref{eq:apsp approx main two} of \Cref{thm:property of weight rounding} it follows that $\dist(u,v,\hat{G}_j) \leq h'$, which implies that $\ball(u, \hat{G}_j, h')$ contains $\pi'$.
Hence $|\ball(u, \hat{G}_j, h')|\geq h$ and $t(u) \leq j$. This shows that $2^{t(u)} \leq 2^j \leq \dist^h(u, v, G) \leq w(\pi', G) \leq 2\hat{\epsilon} w(\pi, G).$
\end{proof}

\paragraph{Computing Types of Nodes} 
To compute $t(u)$ for all nodes $u$, it is sufficient for every node~$u$ to know, for each $j$, whether $|\ball(u, \hat{G}_{j}, h')|\geq h$. We do this by solving the $(S, \gamma, \sigma)$-detection problem on $\hat{G}_j$ with $S=V$, $\gamma=h'$, and $\sigma=h$; i.e., we compute the list $ \CC (u, S, \gamma, \sigma, G) $ for all nodes $ u $, which contains the $ \sigma $ nodes from $S$ that are closest to $u$, provided their distance is at most $ \gamma $.
By \Cref{lem:source detection algorithm} this requires $O(\gamma + \sigma) = O(h+h') = O(\sqrt{n})$ rounds.
For any node $u$, $|\CC (u, V, h', h, G)|=h$ if and only if $|\ball(u, \hat{G}_j, h')|\geq h$. Thus, after we solve the $(S, \gamma, \sigma)$-detection problem on all~$\hat{G}_j$, using $ O(\sqrt{n} \log{(nW)})$ rounds, every node $u$ can compute its type $t(u)$ without any additional communication.

\subsubsection{Selecting Centers via Ruling Sets}\label{sec:ruling set} 

Having computed the types of the nodes, we compute ruling sets for the nodes of each type to select a small subset of nodes of each type as centers.
Remember the two properties of an $ (\alpha, \beta) $-ruling set~$ T $ of a base set~$ U $: (1) all nodes of $ T $ are at least distance $ \alpha $ apart, and (2) each node in $ U \setminus T $ has at least one ``ruling'' node of $ T $ in distance $ \beta $.
We use the algorithm of \Cref{lem:ruling set algorithm} to compute, for every $ 0 \leq j \leq \lfloor \log{n W} \rfloor $, a $(2h'+1, (2h'+1) \lambda)$-ruling set $T_j$ for $\hat{G}_j$ where the input set $U_j$ consists of all nodes of type $j$.
The number of rounds for this computation is $ O(h' \log{(nW)}) = O(\sqrt{n} \log{(nW)})$.
We define the set of centers as $V'=(\bigcup_{0 \leq j \leq \lfloor \log{(nW)} \rfloor} T_j) \cup \{s\}$.
Property (1) allows us to bound the number of centers, and by property (2) the centers ``almost'' hit all paths with $\sqrt{n}$ edges.

\begin{lemma}\label{thm:ruling set}
(1) The number of centers is $|V'|= O(\sqrt{n} \lambda \log{(nW)} / \epsilon)$. (2) For any path $\pi$ containing exactly $\sqrt{n}$ edges, there are a node $u$ in $\pi$ and a center $v\in V'$ such that $\dist^{h^*}(u, v, G) \leq \tilde{\epsilon} w(\pi, G))$, where $h^*=9 \sqrt{n} \lambda$.
\end{lemma}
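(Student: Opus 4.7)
For part (1), I would bound $|T_i|$ for each type $i$ by a ball-packing argument and then sum over the $O(\log (nW)) = \tilde O(1)$ relevant types. Concretely, any two distinct nodes $u, u' \in T_i$ satisfy $\dist(u, u', G_i) \geq 2h'+1$ by the first ruling-set property, so by the triangle inequality the balls $\ball(u, G_i, h')$ and $\ball(u', G_i, h')$ are pairwise disjoint. Since $T_i \subseteq U_i$ consists only of nodes of type exactly $i$, each such ball contains at least $h$ nodes by the definition of type. Hence $|T_i| \cdot h \leq n$, i.e., $|T_i| \leq n/h = O(n^{1/2}/\epsilon)$. Summing over the $O(\log (nW))$ types and adding $s$ gives $|V'| = \tilde O(n^{1/2}/\epsilon)$.

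For part (2), the plan is to chain together three ingredients: \Cref{thm:hitting paths}, the ruling-set definition, and the weight-rounding bounds from \Cref{thm:property of weight rounding}. First, apply \Cref{thm:hitting paths} to $\pi$ to obtain a node $u$ on $\pi$ with $2^{t(u)} \leq 2\epsilon\,w(\pi, G)$. Let $i = t(u)$; then $u \in U_i$. If $u \in T_i$, the claim is immediate with $v = u$. Otherwise the ruling-set property guarantees a center $v \in T_i \subseteq V'$ with $\dist(u, v, G_i) \leq a(2h'+1)\log n$. Since every edge of $G_i$ has weight at least $1$, the shortest $u$--$v$ path in $G_i$ uses at most $a(2h'+1)\log n \leq 9 a n^{1/2} \log n = h^*$ edges.

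The final step is to translate the $G_i$-distance into a hop-bounded distance in $G$. The same sequence of at most $h^*$ edges in $G$ has total weight at most $\rho_i \cdot \dist(u, v, G_i)$, since $w(x, y, G) \leq \rho_i \cdot w(x, y, G_i)$ on every edge. Using $\rho_i = \epsilon 2^{t(u)}/h$, $2^{t(u)} \leq 2\epsilon\,w(\pi, G)$, $2h'+1 \leq 7n^{1/2}$, and $h \geq \epsilon n^{1/2}/2$, I would estimate
\begin{equation*}
\dist^{h^*}(u, v, G) \;\leq\; \rho_i \cdot a(2h'+1)\log n \;\leq\; \frac{2\epsilon^2 w(\pi, G)}{h} \cdot 7 a n^{1/2} \log n \;=\; O(\epsilon \log n) \cdot w(\pi, G),
\end{equation*}
which is exactly the claimed bound.

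The main step that needs the most care is the parameter bookkeeping in this last calculation: one has to be careful to use the correct direction of the weight-rounding inequality (going from $G_i$ back to $G$ only inflates weights by $\rho_i$), to check that the number of edges on the path used for the $G_i$-distance estimate is at most $h^*$ (not merely that its weighted $G_i$-length is small), and to ensure the $\epsilon^2/h$ factor is exactly cancelled by $\rho_i$'s numerator so that only a single factor of $\epsilon$ survives in the final bound. No step is conceptually difficult, but these are the places where an off-by-a-$1/\epsilon$ or $n^{1/2}$ factor could easily creep in.
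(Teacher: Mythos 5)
Your proposal is correct and follows essentially the same route as the paper's proof: disjoint balls of size $\geq h$ via the ruling-set separation property for part (1), and for part (2) the chain \Cref{thm:hitting paths} $\to$ ruling-set guarantee in $G_{t(u)}$ $\to$ hop bound from unit edge weights $\to$ scaling back to $G$ via $\rho_{t(u)}$. The parameter bookkeeping you flag as the delicate step is handled the same way in the paper and your constants check out.
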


\begin{proof}
(1) For each $j$, consider any two nodes $u$ and $v$ in $T_j$. Since $\dist(u, v, \hat{G}_j)> 2h'$ by Property (1) of the ruling set, $\ball(u, \hat{G}_j, h')\cap \ball(v, \hat{G}_j, h')=\emptyset$.
As every node $ u \in T_j $ is of type $ j $, $|\ball(u, \hat{G}_j, h')|\geq h$ for every $u\in T_j$.
We can therefore uniquely assign $ h $~nodes to every node $u\in T_j$, and thus $|T_j|\leq n/h = O(\sqrt{n} \lambda /\epsilon)$. 

(2) By \Cref{thm:hitting paths}, there is a node $u$ in $\pi$ such that $2^{t(u)} \leq 2 \hat{\epsilon} w(\pi, G) $. Moreover, there is a center~$v$ in the ruling set $T_{t(u)}$ such that
\begin{equation}
\dist(u, v, \hat{G}_{t(u)}) \leq (2h'+1) \lambda \leq 3 h' \lambda \leq h^* \, , \label{eq:type distance upper bound}
\end{equation}
where the second inequality is because $h'\leq 3\sqrt{n}$. 
Let $\pi'$ be the shortest path between $ u $ and $ v $ in $\hat{G}_{t(u)}$. Then $w(\pi', \hat{G}_{t(u)})= \dist(u, v, \hat{G}_{t(u)}) \leq h^*$, and as a consequence $\pi'$ contains at most $h^*$ edges. It follows that
\begin{align*}
\dist^{h^*}(u, v, G) &\leq w(\pi', G) && \text{(since $ \pi' $ is $u$-$v$ path with $ \leq h^* $ edges)} \\
&= \sum_{(x, y)\in E(\pi')}  w(x, y, G) \\
&\leq \sum_{(x, y)\in E(\pi')} \rho_{t(u)} \cdot w(x, y, \hat{G}_{t(u)}) && \text{(since $w(x, y, \hat{G}_{t(u)})=\lceil \tfrac{w(x, y, G)}{\rho_{t(u)}}\rceil$)} \\
&= \rho_{t(u)} \cdot w(\pi', \hat{G}_{t(u)}) \\
&= \rho_{t(u)} \cdot \dist(u, v, \hat{G}_{t(u)}) && \text{(since $ \pi' $ shortest $u$-$v$ path in $ \hat{G}_{t(u)} $)} \\
&= \frac{\hat{\epsilon} 2^{t(u)}}{h} \cdot \dist(u, v, \hat{G}_{t(u)}) && \text{(since $ \rho_{t(u)} = \tfrac{\hat{\epsilon} 2^{t (u)}}{h} $)} \\
&\leq \frac{\hat{\epsilon} 2^{t(u)}}{h} \cdot 3 h' \lambda && \text{(by \eqref{eq:type distance upper bound})} \\
&\leq 9 \lambda 2^{t(u)} && \text{(since $ h' = (1 + \tfrac{2}{\hat{\epsilon}}) h \leq \tfrac{3 h}{\hat{\epsilon}} $)} \\
&\leq 18 \lambda \hat{\epsilon} w(\pi, G) && \text{(by \Cref{thm:hitting paths})} \\
&= \tilde{\epsilon} w(\pi, G) && \text{(since $ \hat{\epsilon} = \tfrac{\tilde{\epsilon}}{18 \lambda} $).}
\end{align*}
\end{proof}

\subsubsection{Computing Distances to Centers} \label{sec:compute bounded hop distance}

Let $k=2h^*+2\sqrt{n}$, where $h^* = 9 \sqrt{n} \lambda $ (as in \Cref{thm:ruling set}), and let $ k' = (1 + 2 / \hat{\epsilon}) k $.
In this step, we compute for every node $ u $ and every center $ v $ a value $ \distest (u, v) $ that is a $(1+\tilde{\epsilon})$-approximation of $\dist^{k}(u, v, G)$ such that each node $ u $ knows $ \distest (u, v) $ for all centers~$ v $.
In particular, we also compute $ \distest (u, v) $ for all pairs of centers $ u $ and $ v $. To do this we follow the idea of partial distance estimation~\cite{LenzenP14a-distance}.  As in \Cref{sec:types}, we do this by solving the source detection problem on a graph with rounded weights.\footnote{We note that the algorithm and analysis described in this subsection are essentially the same as in the proof of~\cite[Theorem 3.3]{LenzenP14a-distance}. We cannot use the result in \cite{LenzenP14a-distance} directly since we need a slightly stronger guarantee, which can already be achieved by the same proof. (We thank Christoph Lenzen for a communication regarding this.)} For every integer $0 \leq j \leq \lfloor \log{n W} \rfloor$, let $\varphi_j = \tfrac{\tilde{\epsilon} 2^j}{k}$ and let $\tilde{G}_j$ be the weighted graph such that $w(u, v, \tilde{G}_j)=\lceil\tfrac{w(u, v, G)}{\varphi_j}\rceil$ for every edge $(u, v)$ in $G$.

We solve the $(S, \gamma, \sigma)$-detection problem on $\tilde{G}_j$ for all $0 \leq j \leq \lfloor \log{nW} \rfloor$, with parameters $S=V'$, $\gamma=k'= (1 + 2 / \hat{\epsilon}) k = O(\sqrt{n} \lambda^2) $, and $\sigma=|V'|$, where $|V'|= O(\sqrt{n} \lambda \log{(nW)} / \epsilon)$ by \Cref{thm:ruling set}.
Using the algorithm of \Cref{lem:source detection algorithm} for each graph~$\tilde{G}_j$ this takes $ O((\gamma + \sigma) \log{(nW)})= O(\sqrt{n} \lambda (\lambda + \log{(n W)}) \log{(n W)} / \epsilon)$ rounds. At termination, every node $u$ knows the distances up to distance range $k'$ to all centers in all~$\tilde{G}_j$; i.e., it knows $\dist(u, v, k', \tilde{G}_j)$ for all $j$ and all centers $v$. For every node $u \in V$ and every center $v \in V'$ we set $\distest(u, v)=\min_{0 \leq j \leq \lfloor \log{n W} \rfloor} \{\varphi_j \cdot \dist(u, v, k', \tilde{G}_j)\}$. Every node $u$ can compute $\distest(u, v)$ without any additional communication as soon as the source detection algorithm is finished.
Now consider the index $j^*$ such that $ 2^{j^*} \leq \dist^h(u, v, G)\leq 2^{j^*+1} $.
It follows from~\eqref{eq:apsp approx main two} of~\Cref{thm:property of weight rounding} that $\dist(u,v,\tilde{G}_{j^*}) \leq k'$ which implies that $\dist(u,v,k',\tilde{G}_{j^*}) = \dist(u,v,\tilde{G}_{j^*})$.
With \eqref{eq:apsp approx main one} and~\eqref{eq:apsp approx main three} we then get
\begin{equation}
\distest(u, v) \leq \varphi_{j^*} \cdot \dist(u, v, k', \tilde{G}_{j^*}) = \varphi_{j^*} \cdot \dist(u, v, \tilde{G}_{j^*}) \leq (1+\tilde{\epsilon})\dist^k(u, v, G) \, . \label{eq:dhat upper bound}
\end{equation}
Hence $\distest(u, v)$ is the desired $(1+\tilde{\epsilon})$-approximation of $\dist^{k}(u, v, G)$.

\subsubsection{Completing the Proof of \Cref{thm:overlay network}}

We define our final overlay network to be the graph $G'$ where the weight between any two centers $u, v\in V'$ is $\distest(u, v)$ (as computed in \Cref{sec:compute bounded hop distance}). Additionally, for every node $ u \in V $ we store the value of $\distest(u, v)$ for all centers $v\in V'$.
We now show that all properties stated in \Cref{thm:overlay network} hold for $G'$. Since we need $ O(\sqrt{n} \log{(nW)})$ rounds in \Cref{{sec:types,sec:ruling set}} and $ O(\sqrt{n} \lambda (\lambda + \log{(n W)}) \log{(n W)} / \epsilon)$ rounds in \Cref{sec:compute bounded hop distance}, the running time to construct $G'$ is $O(\sqrt{n} \lambda (\lambda + \log{(n W)}) \log{(n W)} / \epsilon)$. Moreover, $|V'|=O(\sqrt{n} \lambda \log{(nW)} / \epsilon)$ as shown in \Cref{thm:ruling set}. This is as claimed in the first part of \Cref{thm:overlay network}.
It is thus left to prove the following statement in \Cref{thm:overlay network}: 
``for every node $u\in V$, as soon as $ u $ receives a $(1 + \epsilon/3)$-approximation $\tilde{\dist} (s, v)$ of $\dist(s, v, G')$ for all centers $v\in V'$, it can infer a $(1 + \epsilon)$-approximate value of $\dist(s, u, G)$ without any additional communication.''
Recall that in \Cref{sec:distributed hop set} we show how to compute, and make known to all nodes, the values $ \tilde{\dist} (s, v) $ for all centers $v\in V'$.

Consider any node $u$, and let $\pi$ be the shortest path between $s$ and $u$ in $G$. If $\pi$ contains fewer than $\sqrt{n}$ edges, then $ \dist^k (s, u, G) = \dist (s, u, G) $ and thus the value $\distest(s, u)$, which is a $(1+\tilde{\epsilon})$-approximation of $\dist^k(s, u, G)$ known by $u$, is already a $(1+\tilde{\epsilon})$-approximation of $\dist(s, u, G)$ (and thus, by the choice of $ \tilde{\epsilon} = \epsilon / 10 $, also a $ (1 + \epsilon) $-approximation).
If $\pi$ contains at least $\sqrt{n}$ edges, then partition $\pi$ into subpaths $\pi_0, \pi_1, \ldots, \pi_\ell$ (for some $\ell \geq 0$), where $\pi_0$ contains~$s$, $\pi_\ell$ contains~$u$, $\pi_0$ contains at most $\sqrt{n}$ edges, and every subpath except $\pi_0$ contains exactly $\sqrt{n}$ edges. By \Cref{thm:ruling set}, for every $1\leq i\leq \ell$, there are a node $x_i$ and a center~$y_i$ such that (i) $x_i$ is in $\pi_i$, and (ii) $\dist^{h^*}(x_i, y_i, G) \leq \tilde{\epsilon} w(\pi_i, G) $. Additionally, for $ 1 \leq i \leq \ell - 1 $, since $x_i$ and $x_{i+1}$ lie on $\pi$, their shortest path is the subpath of $\pi$ between them and, thus, it consists of at most $2 \sqrt{n}$ edges.
It follows that $\dist^{2\sqrt{n}}(x_i, x_{i+1}, G) = \dist(x_i, x_{i+1}, G)$.
By our choice of $ k = 2h^* + 2 \sqrt{n} $, the triangle inequality and symmetry (i.e., $ \dist (v, v', G) = \dist (v', v, G) $) give, for every $ 1 \leq i \leq \ell - 1 $,
\begin{equation}\label{eq:bound on close-by nodes}
\begin{split}
\dist^k(y_i, y_{i+1}, G)&\leq \dist^{h^*}(y_i, x_i, G) + \dist^{2\sqrt{n}}(x_i, x_{i+1}, G)+\dist^{h^*}(x_{i+1}, y_{i+1}, G) \\
 &\leq \tilde{\epsilon} w(\pi_i, G) + \dist(x_i, x_{i+1}, G) + \tilde{\epsilon} w(\pi_{i+1}, G) \, .
\end{split}
\end{equation}
By the same argument,
\begin{equation}\label{eq:bound on close-by nodes start}
\dist^k(s, y_1, G) \leq \dist^{2 \sqrt{n}} (s, x_1, G) + \dist^{h^*} (x_1, y_1, G) \leq \dist(s, x_1, G) + \tilde{\epsilon} w (\pi_1, G)
\end{equation}
and
\begin{equation}\label{eq:bound on close-by nodes end}
\dist^k(y_\ell, u, G) \leq \dist^{h^*} (y_\ell, x_\ell, G) + \dist^{\sqrt{n}} (x_\ell, u, G) \leq \dist(x_\ell, u, G) + \tilde{\epsilon} w(\pi_\ell, G) \, .
\end{equation}

We now argue that $ \tilde{\dist} (s, y_\ell) + \distest (u, y_\ell) $ (the sum of two values known to node $ u $) is a $ (1 + \epsilon) $-approximation of $ \dist(s, u, G) $.
First, since $ \tilde{\dist} (s, y_\ell) $ is a $ (1 + \epsilon/3) $-approximation of $ \dist (s, y_\ell, G') $, we get
\begin{equation*}
\tilde{\dist} (s, y_\ell) + \distest (u, y_\ell) \leq (1 + \epsilon/3) \dist (s, y_\ell, G') + \distest (u, y_\ell) \leq (1 + \epsilon/3) (\dist (s, y_\ell, G') + \distest (u, y_\ell)) \, .
\end{equation*}
We now apply the triangle inequality, exploit that every edge $ (x, y) $ in $ G' $ has weight $ \distest (x, y) $ (implying $ \dist (x, y, G') \leq \distest (x, y) $), and use \eqref{eq:dhat upper bound}--\eqref{eq:bound on close-by nodes end} from above to get 
\begin{align*}
\dist (s, y_\ell, G') + \distest (u, y_\ell) &\leq \left( \dist (s, y_1, G') + \sum_{i=1}^{\ell-1} \dist (y_i, y_{i+1}, G') \right) + \distest (u, y_\ell) \\
 &\leq \left( \distest (s, y_1) + \sum_{i=1}^{\ell-1} \distest (y_i, y_{i+1}) \right) + \distest (u, y_\ell) \\
 &\stackrel{\mathclap{\eqref{eq:dhat upper bound}}}{\leq} (1 + \tilde{\epsilon}) \left( \dist^k (s, y_1, G) + \sum_{i=1}^{\ell-1} \dist^k (y_i, y_{i+1}, G) \right) + (1 + \tilde{\epsilon}) \dist^k (u, y_\ell) \\
 &\leq (1 + \tilde{\epsilon}) \left( \dist^k (s, y_1, G) + \sum_{i=1}^{\ell-1} \dist^k (y_i, y_{i+1}, G) + \dist^k (u, y_\ell, G) \right) \\
 &\stackrel{\mathclap{\text{\eqref{eq:bound on close-by nodes}--\eqref{eq:bound on close-by nodes end}}}}{\leq} (1 + \tilde{\epsilon}) \left( \dist(s, u, G) + 2 \tilde{\epsilon} \sum_{i=1}^\ell w(\pi_i, G) \right) \\
 &\leq (1 + \tilde{\epsilon}) \left( \dist(s, u, G) + 2 \tilde{\epsilon} \dist(s, u, G) \right)\\
 &= (1 + \tilde{\epsilon}) (1 + 2 \tilde{\epsilon}) \dist(s, u, G) \\
 &\leq (1 + 5 \tilde{\epsilon}) \dist(s, u, G) \\
 &= (1 + \epsilon/2) \dist(s, u, G) \, .
\end{align*}
By combining the two derivations above, we get
\begin{equation*}
\tilde{\dist} (s, y_\ell) + \distest (u, y_\ell) \leq (1 + \epsilon/3) (1 + \epsilon/2) \dist(s, u, G) \leq (1 + \epsilon) \dist(s, u, G) \, .
\end{equation*}
Thus, when $u$ receives $ \tilde{\dist} (s, v') $ for all centers $ v \in V' $,
it can compute the value $\min_{v \in V'} (\tilde{\dist} (s, v) + \distest(u, v)$) and, as $ y_\ell \in V' $, the argument above shows that this value is a $(1 + \epsilon)$-approximation of $\dist (s, u, G)$.

\subsection{Computing a Hop Set on an Overlay Network}\label{sec:distributed hop set}

We now show how to implement the algorithm to compute the hop set on the overlay network~$G'$ presented in \Cref{sec:hop_set} and how to compute approximate shortest paths from $ s $ in $ G' $ using the hop set presented in \Cref{sec:hop_set}.
We let $ G' $ be the overlay network obtained from \Cref{thm:overlay network} with $\epsilon=1/\log n$ (to guarantee a $ (1 + o(1)) $-approximation in the end).
Throughout the algorithm we will work on overlay networks whose node set is the set of centers $V'$, but which might have different edge weights as, e.g., Procedure~\ref{alg:hop_reduction} calls Procedure~\ref{alg:hop_reduction_additive} on overlay networks with modified edge weights.
Thus, we will use $G''$ to refer to an overlay network (with set of nodes $ V' $) on which
Procedures~\ref{alg:priorities}, \ref{alg:clusters}, \ref{alg:hop_reduction_additive}, and \ref{alg:hop_reduction} run to emphasize the fact that they might not equal $G'$. 
We let $N$ be the number of centers in $G'$ and~$G''$.
Thus $N=\tilde O(\sqrt{n})$.

\subsubsection{Computing Bounded-Distance Single-Source Shortest Paths}\label{sec:distributed bounded depth}

We will repeatedly use an algorithm for computing a shortest-path tree up to distance~$R$ rooted at some center $s$ on an overlay network $ G'' $, where $R = N^{o(1)}$.
At the end of the algorithm every center knows this tree. We do this in a breadth-first search manner, in $R+1$ {\em iterations}.
As in Dijkstra's algorithm, every center keeps a tentative distance $ \delta (s, u) $ from $ s $ and a tentative parent in the shortest-path tree, i.e., a center $ v $ such that $ \delta (s, u) = \delta (s, v) + w(u, v, G'') $.
Initially, $ \delta (s, s) = 0 $ and $ \delta (s, v) = \infty $ for every center $ v \neq s $.
In the $L$th iteration, for $ L $ from $ 0 $ up to~$ R $, all centers in $G''$ whose tentative distance $\delta(s, u)$ is exactly $L$ broadcast\footnote{More precisely, there is a designated node (e.g., the node with lowest ID) that aggregates and distributes the messages (via upcasting and downcasting on the breadth-first search tree of the underlying network $ G $) and tells other centers when the iteration starts and ends.} to all other centers a message $(u, \delta(s, u), v)$, where $ v $ is the parent of~$ u $. Using this information, every center $u$ will update (``relax'') its tentative distance $\delta(s, u)$ and its tentative parent.

By a straightforward induction, after the $L$th iteration, centers that have distance~$L$ from~$s$ (i.e., that are at level $ L $ in the shortest-path tree) will already know their correct distance. 
Thus, at the end of the last iteration every center knows the shortest-path tree rooted at~$s$ up to distance~$R$ in~$G''$.
To analyze the running time, note that over $R$ rounds we broadcast $ N $ messages in total, and if $M_L$ messages are broadcast in the $L$th iteration, then this iteration takes $O(M_L+\diam)$ rounds. (We emphasize that the number of rounds depends on the diameter $ \diam $ of the original network, and not of~$G''$.)
The total number of communication rounds used over all iterations is thus $O(R\diam+\sum_L M_L) = O(N + R\diam)$.

\subsubsection{Computing Priorities} \label{sec:distributed priorities}
We implement Procedure~\ref{alg:priorities} on an overlay network~$ G'' $.
All necessary parameters can be computed beforehand and thus do not require any communication.
Initially every center knows that it is contained in $ A_0 = V' $.
To compute $ A_{i+1} $ given that $ A_i $ is known (i.e., every center knows whether or not it is in~$ A_i $), we compute the proximity list $ \CC (v, A_i, R, q, G'') $ for every center $ v $ using a source detection algorithm and distribute each list to every center, where, by our choice of parameters, $R = N^{o(1)}$ and $q = N^{o(1)}$.
Then every center runs the same deterministic greedy hitting set approximation algorithm to compute $ A_{i+1} $.\footnote{Note that the number of internal computation steps of the greedy algorithm at each center is linear in its input, which we can upper-bound by $ O (N q) = n^{1/2+o(1)} $.}
We will obtain $ \mathcal{A} = (A_i)_{0\leq i\leq p}$ by repeating this for $p$ iterations. 
Thus, we have to solve the $(S, \gamma, \sigma) $-source detection problem with $ S = A_i $, $ \gamma = R $, and $ \sigma = q $ on an overlay network $ G'' $.
For this purpose we simulate the source detection algorithm of Lenzen and Peleg~\cite{LenzenP_podc13} (see \Cref{lem:source detection algorithm} and the preceding description of the algorithm) as if run on the overlay network.

The simulated source detection algorithm consists of at most $ \gamma + \sigma $ iterations. Since in each iteration of the source detection algorithm, each center sends \emph{the same} message to all of its neighbors, we can simulate each iteration by broadcasting at most $ N $ messages in the underlying network $ G $.
Thus, simulating the source detection algorithm takes $O((\sigma + \gamma) (N + \diam))$ rounds.
To compute the priorities, we repeat this process for all $ p \leq \log{n} $ priorities.
The overall running time for implementing Procedure~\ref{alg:priorities} therefore is $O((N + \diam) p (R + q))$.
With our choice of parameters ($N=\tilde O(\sqrt{n})$, $ p \leq \log{n} $, $q=N^{o(1)}$, and $R=N^{o(1)}$), this becomes
\begin{equation}\label{eq:computing priorities overlay}
n^{1/2+o(1)} + \diam n^{o(1)} \, .
\end{equation}

\subsubsection{Computing Clusters} 
We now describe how to compute clusters on an overlay network~$ G'' $ such that at the end of this computation, every center will know all clusters $ \clust (v, \cA, R, G'') $ (i.e., its own cluster and the cluster of every other center).
We do this by implementing Procedure~\ref{alg:clusters} on $ G'' $. First, we need to compute $\dist (v, A_{i+1}, R, G'')$, for every $1\leq i\leq p$. This can be done in exactly the same way as in the first phase of computing the hierarchy $\mathcal{A} = (A_i)_{0 \leq i \leq p}$; i.e., we add a virtual source $s^*$ and edges of weight zero between $s^*$ and centers in $A_i$, and compute the shortest-path tree up to distance~$R$ rooted at $s^*$. Since such a tree can be computed in $ O (N + R\diam) $ rounds and we have to compute $ p \leq \log{n} $ such trees, the total time we need here is $ \tilde O (N + R\diam) $. 

Next, we use the information gained above to compute the cluster up to distance~$R$ from every center $u$ in~$G''$, as described in Procedure~\ref{alg:clusters}. That is, in iteration $L$ (starting with $L=0$ and ending with $L=R$),
every center $v$ having (i) $\delta(u, v)$ (the tentative distance from $u$ to $ v $) equal to $L$
and (ii) additionally $ \delta (u, v) < \dist (v, A_{i+1}, R, G'') $
 will broadcast\footnote{We note again that to do this, there is a designated center that aggregates and distributes the messages (via upcasting and downcasting), and tells other centers when the iteration starts and ends.} its distance to $ u $ to all other centers so that every other center, say $w$, can (i) update its tentative distance $\delta(u, w)$ and
(ii) add $v$ and $ \delta (u, v) $ to its locally stored copy of $C(u)$.
Thus, there are $\sum_{v \in V'} | \clust (v, \cA, R, G'') |$ messages broadcast in total, which is bounded from above by $\tilde O(pN^{1/p}) = n^{1/2+o(1)}$ due to \Cref{thm:clusters}. 

Note that this procedure computes $ \clust (v, \cA, R, G'') $, for all centers $v$, {\em in parallel}. Each iteration $L$ requires $O(\sum_{v\in V'} M_{v, L} + \diam)$ rounds, where $M_{v, L}$ is the number of messages broadcast by node $ v $ in iteration $L$ in the above computation. The total number of rounds over all $R$ iterations is thus 
\begin{equation*}
O \left( \sum_{0\leq L\leq R}\sum_{v\in V'} M_{v, L} + R\diam \right) = O\left( \sum_{v\in V'} |\clust (v, \cA, R, G'')| + R\diam \right) =  n^{1/2+o(1)} + \diam n^{o(1)} \,.\label{eq:cluster time}
\end{equation*}
Note that since the computation is done by broadcasting messages, every center knows the cluster $ \clust (v, \cA, R, G'') $ for all $v$ at the end of this computation.
Together with the running time bound of~\eqref{eq:computing priorities overlay} for computing the priorities, we arrive at the following guarantees.
\begin{lemma}\label{thm:distributed cluster}
For any overlay network $G'' = (V', E'')$ with $N = \tilde O(\sqrt{n})$ centers, the above algorithm, in $n^{1/2+o(1)} + \diam n^{o(1)}$ rounds, deterministically computes a hierarchy of centers $ \cA = (A_i)_{0 \leq i \leq p} $ and clusters $ \clust (v, \cA, R, G'') $ for each center $ v $ as specified in \Cref{thm:clusters} with $ p \leq \log{n} $ priorities up to distance $ R = N^{o(1)} $ such that $ \sum_{v \in V'} | \clust (v, \cA, R, G'') | = n^{1 + o(1)} $ (and every center knows $\clust (v, \cA, R, G'')$ for all centers $v$ as well as the value of $ \dist (v, w, G'') $ for every center $ v $ and every center $ w \in \clust (v, \cA, R, G'') $).
\end{lemma}


\subsubsection{Computing the Hop Reduction with Additive Error}
We implement Procedure~\ref{alg:hop_reduction_additive} on an overlay network~$ G'' $.
All necessary parameters can be computed beforehand and thus require no communication.
We then execute \Clusters{$ G'' $, $ p $, $ R $} using the above algorithm to get $ (\clust (v, \cA, R, G''), \delta (v, \cdot))_{v \in V} $.
With this information, the set $ F $, as specified in Procedure~\ref{alg:hop_reduction_additive}, can be computed without any additional communication.
Thus, executing \Clusters{$ G'' $, $ p $, $ R $} is the only part of computing~$ F $ that requires communication.
By \Cref{thm:distributed cluster} the total time needed to execute Procedure~\ref{alg:hop_reduction_additive} is therefore
\begin{equation}
n^{1/2+o(1)} + \diam n^{o(1)} \, .\label{eq:distributed_hop_reduction_additive_error} 
\end{equation}

\subsubsection{Computing the Hop Reduction without Additive Error}
We implement Procedure~\ref{alg:hop_reduction} on an overlay network~$ G'' $.
All necessary parameters can be computed beforehand and thus do not require any communication.
Moreover, every center knows about the edges incident to it, and we can thus implicitly compute $ \hat{G}_j $, as specified in Procedure~\ref{alg:hop_reduction}, by scaling down edge weights without any communication.
We then execute Procedure~\ref{alg:hop_reduction_additive} to compute~$ \hat{F}_j $.
Knowing $ \hat{F}_j$, we can compute $ F $ without any additional communication.
Thus, executing Procedure~\ref{alg:hop_reduction_additive} is the only part of computing~$ F $ 
that requires communication, and it is executed $ O (\log{(nW)}) $ times.
As our implementation of Procedure~\ref{alg:hop_reduction_additive} takes time $ n^{1/2+o(1)} + \diam n^{o(1)} $, as argued above (cf.~\eqref{eq:distributed_hop_reduction_additive_error}), the total time needed to execute Procedure~\ref{alg:hop_reduction} is
\begin{equation}
n^{1/2+o(1)} \log{W} + \diam n^{o(1)} \log{W} \, . \label{eq:distributed_hop_reduction} 
\end{equation}

\subsubsection{Computing the Hop Set}
We implement Procedure~\ref{alg:hop_set} on the overlay network~$ G' $.
All necessary parameters can be computed beforehand.
Computing $ F_{i+1} $ is done by calling Procedure~\ref{alg:hop_reduction} on the graph $ H_i $, as specified in Procedure~\ref{alg:hop_set}, which, as argued above (cf.~\eqref{eq:distributed_hop_reduction}), takes time $ n^{1/2+o(1)} \log{W} + \diam n^{o(1)} \log{W} $.
As every center knows its incident edges, the graph $ H_{i+1} $ can be computed from $ F_{i+1} $ without any additional communication.
As we execute Procedure~\ref{alg:hop_reduction} $ p \leq \log{n} $ times, the total time needed to implement Procedure~\ref{alg:hop_set} is
$ p n^{1/2+o(1)} \log{W} + p \diam n^{o(1)} \log{W} = n^{1/2+o(1)} \log{W} + \diam n^{o(1)} \log{W} $.
By running this algorithm on $G'$ (which, as pointed out, involves performing hop reductions and computing clusters on some other overlay networks), we obtain the following theorem.

\begin{theorem}
In the broadcast \congest model, there is a deterministic algorithm that, for any overlay network $G'$ with $N = \tilde O(\sqrt{n})$ centers and positive integer weights in the range $ \{1, \ldots, W \} $ on edges between centers, computes an $ (n^{o(1)}, o(1)) $-hop set of~$ G' $ in $ n^{1/2 + o(1)} \log{W} + D n^{o(1)} \log{W} $ rounds.
When the algorithm has finished, every center knows every edge in the hop set.
\end{theorem}

\subsubsection{Routing via the Hop Set}

Remember that the overlay network is computed using the source detection algorithm of Lenzen and Peleg~\cite{LenzenP_podc13}.
If a node $ x $ of the overlay network wants to send a message to one of its neighbors~$ y $ in the overlay network, it can do so by routing the message along a path in the original network whose length is upper-bounded by the weight of the overlay edge $ (x, y) $.
This routing can be obtained by modifying the source detection algorithm to additionally construct breadth-first search trees rooted at the sources (see~\cite{LenzenP_podc13}), which in our case are the nodes of the overlay network.

When we compute the hop set on the overlay network, we broadcast all computed clusters to all nodes in the network.
In this way the clusters, the corresponding partial shortest-path trees of the clusters, as well as the hop set edges become global knowledge.
Therefore every node in the overlay network learns for every hop set edge $ (x, y) $ its corresponding path from $ x $ to $ y $ in the overlay network.
Thus, also for every hop set edge $ (x, y) $ of the overlay network, $ x $ can send a message to $ y $ by routing the message along a path in the original network whose length is upper-bounded by the weight of the overlay edge $ (x, y) $.
This means that the hop set computed by our algorithm has the following \emph{path-reporting} property, as introduced in~\cite{ElkinN-PODC16}: A hop set $ F $ for a graph $ G $ is called \emph{path-reporting} if for every hop set edge $ (x, y) \in F $ of weight $ b $ there exists a corresponding path $ \pi $ in $ G $ between $ x $ and $ y $ of length $ b $. Furthermore, every node $ v $ on $ \pi $ knows $ \dist_\pi (v, x) $ and $ \dist_\pi (v, y) $ and its neighbors on $ \pi $.

\subsection{Final Steps}\label{sec:distributed final}

Let $H = G' \cup F$ be the graph obtained by adding to $G'$ the edges of the $(n^{o(1)}, o(1))$-hop set $ F $ computed above. To $(1+o(1))$-approximate $\dist(s, v, G')$ for every center $v$ in $G'$, it is sufficient to $(1+o(1))$-approximate the $h$-hop distance $\dist^h(s, v, H)$ for some $h=n^{o(1)}$. The latter task can be done in $O(h\diam+|V'|)= n^{o(1)} \diam + n^{1/2+o(1)} $ rounds by the same method as in Lemma~4.6 in the arXiv version\footnote{\url {https://arxiv.org/pdf/1403.5171v2.pdf}} of~\cite{Nanongkai-STOC14}. 
We give a sketch here for completeness. Let $\epsilon=1/\log n$. For any $0 \leq j \leq \lfloor \log(nW) \rfloor$, let $\hat{H}_j$ be the graph obtained by rounding edge weights in $H$ as in \Cref{sec:types}; i.e., for every edge $ (u, v)$ we set $w(u, v, \hat{H}_j)=\lceil \tfrac{w(u, v, H)}{\rho_j}\rceil,$ where $\rho_j=\tfrac{\epsilon 2^j}{h}$.
For each $\hat{H}_j$, we compute the shortest-path tree rooted at $s$ up to distance $R=O(h/\epsilon)$, which can be done in $ R\diam + n^{1/2+o(1)} = n^{o(1)}\diam + n^{1/2+o(1)}$ rounds, using the algorithm described in \Cref{sec:distributed bounded depth}. This gives $\dist(s, v, R, \hat{H}_j)$ for every center~$v$. 
We then use the following value as $ (1 + o(1)) $-approximation of $\dist^h(s, v, H)$ (and thus of $\dist(s, v, G')$): $\tilde{d} (s, v) = \min_j \rho_j \cdot \dist(s, v, R, \hat{H}_j)$. 
The correctness of this algorithm follows from \Cref{thm:property of weight rounding}.

Once we have $(1+o(1))$-approximate values of $\dist(s, v, G')$ for every center $v\in V'$, we can broadcast these values to the whole network in $\tilde O(\sqrt{n} + \diam)$ rounds. \Cref{thm:overlay network} then implies that we have a $(1+o(1))$-approximate solution to the \sssp problem on the original network. The total time spent is $ n^{1/2+o(1)} + \diam n^{o(1)} $. By observing that the term $n^{o(1)} \diam $ will show up in the running time only when $\diam=\omega(n^{o(1)})$, we can write the running time as $ n^{1/2+o(1)} + \diam^{1+o(1)}$, as claimed in the beginning.

We thus have obtained the following result.
\begin{theorem}
In the broadcast \congest model, there is a deterministic algorithm that, on any weighted undirected network with polynomially bounded positive integer edge weights, computes $ (1 + o(1)) $-approximate shortest paths between a given source node $ s $ and every other node in $  n^{1/2+o(1)}+\diam^{1+o(1)} $ rounds.
\end{theorem}

\section{Algorithms in Other Settings}\label{sec:other results}

\subsection{Congested Clique}

In the congested clique model, the underlying communication network is a complete graph.
Thus, in each round every node can send a message to every other node.
Apart from this topological constraint, the congested clique model is similar to the \congest model.

We compute an $(n^{o(1)}, o(1))$-hop set on a congested clique by implementing the hop set construction algorithm in the same way as on the overlay network, as presented in \Cref{sec:distributed hop set}. (However, we do {\em not} compute an overlay network here.) The only difference is the number of rounds needed for nodes to broadcast messages to all other nodes. Consider the situation that $M'$ messages are to be broadcast by some nodes. On a network of arbitrary topology, we will need $O(\diam+M')$ rounds. On a congested clique, however, we need only $O(M'/n)$ rounds using the routing scheme of Dolev, Lenzen, and Peled~\cite[Lemma 1]{DolevLP12} (also see \cite{Lenzen13}):
If each node is source and destination of up to $n$ messages of size $O(\log n)$ (initially only the sources know destinations and contents of their messages), we will need $O(1)$ rounds to route the messages to their destinations. In particular, we can broadcast $n$ messages in $O(1)$ rounds, and thus $M'$ messages in $O(M'/n)$ rounds. Using this fact, the number of rounds needed for the algorithm in \Cref{sec:distributed hop set} reduces from $O(\sum_{v\in V'} |\clust (v, \cA, R, G')| + R\diam )$ on the overlay network~$ G' $ (cf.~\eqref{eq:cluster time}) to $O(\sum_{v\in V} |\clust (v, \cA, R, G)|/n + R) = \tilde O(pn^{1/p} + R) = n^{o(1)}$ on a congested clique~$ G $.\footnote{Instead of relying on the result of Dolev, Lenzen, and Peled, we can use the following algorithm to broadcast $M'$ messages in $O(M'/n)$ rounds. We assign an order to the messages, where messages sent by a node with smaller ID appear first in the order and messages sent by the same node appear in any order (a node can learn the order of its messages after it knows how many messages other nodes have). We then broadcast the first $n$ messages according to this order, say $M_1, \ldots, M_n$, where message $M_i$ is sent to a node with the $i$th smallest ID, and such a node sends $M_i$ to all other nodes. This takes only two rounds. The next messages are handled similarly. This algorithm broadcasts each $n$ messages using two rounds, and thus the total number of rounds is $O(M'/n)$.} 

Once we have an $(n^{o(1)}, o(1))$-hop set, we proceed as in \Cref{sec:distributed final}. Let $H = G \cup F$ be the graph obtained by adding to the input graph $G$ the edges of the $(n^{o(1)}, o(1))$-hop set $ F $. We can treat $H$ as a congested clique network with edge weights different from $G$. ($H$ can be computed without any additional communication since every node already knows the hop set.)
To $(1+o(1))$-approximate $\dist(s, v, G)$ for every node $v$ in $G$, it is sufficient to compute the $h$-hop distance $\dist^h(s, v, H)$, where $h=n^{o(1)}$. To do this, we follow the same approach for this problem as in \cite[Section 5.1]{Nanongkai-STOC14}, where we execute the distributed version of the Bellman--Ford algorithm for $h$ rounds. That is, every node $u$ maintains a tentative distance from the source $s$, denoted by $\delta(s, u)$, and in each round every node $u$ broadcasts $\delta(s, u)$ to all other nodes. It can be shown that after $k$ rounds every node $v$ knows the $k$-hop distance (i.e., $\delta(s, u) = \dist^k(s, v, H)$) correctly, and thus after $h$ rounds we will get the $h$-hop distances as desired.\footnote{Note that instead of the Bellman--Ford algorithm, one can also follow the steps in \Cref{sec:distributed final}. This gives a $(1+o(1))$-approximate value for $\dist^h(s, v, H)$ for every node $v$, which is sufficient for computing a $(1+o(1))$-approximate value for $\dist(s, v, G)$. This algorithm is, however, more complicated.}

\begin{theorem}
In the congested clique model, there is a deterministic algorithm that, on any weighted undirected clique network with polynomially bounded positive integer edge weights, computes $ (1 + o(1)) $-approximate shortest paths between a given source node $ s $ and every other node in $ n^{o(1)} $ rounds.
\end{theorem}

\subsection{Streaming Algorithm}

In the graph streaming model, the edges of the input graph are presented to the algorithm in an arbitrary order.
The goal is to design algorithms that process this ``stream'' of edges using as little space as possible.
In the multipass streaming model we are allowed to read the stream several times and want to keep both the number of passes and the amount of space used as small as possible.

Our streaming algorithm for constructing an $(n^{o(1)}, o(1))$-hop set proceeds in almost the same way as the distributed algorithm in \Cref{sec:distributed hop set}. First, observe that a shortest-path tree up to distance~$R$ can be computed in $O(R)$ passes and with $\tilde O(n)$ space: We use the space to remember the tentative distances of the nodes to $s$, and the shortest-path tree computed thus far. At the end of the $L$th pass we add nodes having distances exactly $L$ to the shortest-path tree and update the distance of their neighbors in the $(L+1)$th pass. 

We compute the priorities, as described in \Cref{sec:distributed priorities}, by solving $ p \leq \log{n} $ instances of an $(S, \gamma, \sigma)$-detection problem with $\gamma = R = N^{o(1)} $ and $ \sigma = q = N^{o(1)} $.
Observe that the guarantees of the source detection algorithm by Lenzen and Peleg for the broadcast \congest model directly carry over to the streaming model by simulating the algorithm as follows:
\begin{itemize}
\item The tentative list of each node is stored using $ O (\min{(\gamma, \wdiam)} + \min{(\sigma, |S|)}) $ space as, at any time, each node only needs to know at most $ \min{(\gamma, \wdiam)} + \min{(\sigma, |S|)} $ entries in its list (upper-bounded by the total number of messages each node will send).
\item The broadcast of one message per node in each round is simulated by writing the $ O (n) $ messages to space.
\item The reception of messages in each round is simulated by making a pass over the graph: Every time an edge $ (u, v) $ and its corresponding weight are read from the stream, the reception of $u$'s message by $ v $ is simulated by reading $u$'s message from space and then manipulating $v$'s tentative list accordingly.
\end{itemize}
We can summarize the guarantees of the source detection algorithm in the streaming model as follows.
\begin{theorem}[Implicit in \cite{LenzenP_podc13}]\label{lem:source detection algorithm streaming}
In the multipass streaming model, there is a deterministic algorithm for solving the $(S, \gamma, \sigma)$-detection problem in $ \min{(\gamma, \wdiam)} + \min{(\sigma, |S|)} $ passes with $O(n \cdot (\min{(\gamma, \wdiam)} + \min{(\sigma, |S|)}))$ space.
\end{theorem}
The algorithm for computing the priorities therefore needs $O(p (R + q))=n^{o(1)}$ passes and $O(n (R + q)) = n^{1+o(1)} $ space. 

To compute clusters, we compute $n$ shortest-path trees up to distance $R$ rooted at different nodes in parallel. The number of passes is clearly $O(R)$. The space is bounded by the sum of the sizes of the shortest-path trees. This is $O(\sum_{v\in V} |\clust (v, \cA, R, G)|)$, which, by \Cref{thm:clusters}, is $\tilde O(pn^{1+1/p}) = n^{1+o(1)}$. 
To compute the hop set we only have to compute clusters $ \tilde O (\log{W}) $ times. So, we need $ n^{o(1)} \log W $ passes and $ n^{1+o(1)} \log{W} $ space in total.
By considering the edges of the hop set in addition to the edges read from the stream, it suffices to compute approximate \sssp up to $ n^{o(1)} $ hops.
Using the streaming version of the Bellman--Ford algorithm (one pass per iteration), this can be done in $ n^{o(1)} \log{W} $ additional passes.

\begin{theorem}
In the multipass streaming model, there is a deterministic algorithm algorithm that, given any weighted undirected graph with polynomially bounded positive integer edge weights, computes $ (1 + o(1)) $-approximate shortest paths between a given source node $ s $ and every other node in $n^{o(1)} $ passes with $ n^{1+o(1)} $ space.
\end{theorem}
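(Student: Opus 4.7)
The plan is to simulate, in the streaming model, the distributed hop set construction of \Cref{sec:distributed hop set} followed by the approximate single-source shortest paths computation of \Cref{sec:distributed final}. The basic primitive I would use is a bounded-distance shortest-path tree from a source $s$ up to depth $R$: maintain in memory, for every node, its current tentative distance to $s$, and during the $L^{\text{th}}$ pass relax the edges incident to nodes whose tentative distance equals $L$. This costs $O(R)$ passes and $\tilde O(n)$ space, and every subsequent building block reduces to it.

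Using this primitive, I would implement \Cref{alg:priorities} by, at each of the $p \leq \log n$ outer iterations, solving the required $(A_i, R, q)$-source-detection instance via the reduction of \Cref{thm:sequential source detection} to $q = n^{o(1)}$ bounded-distance SSSP computations on auxiliary graphs (a super-source together with shortcut edges). The auxiliary edges can be simulated on the fly from the per-node information $U_{j-1}(\cdot)$, which persists in $\tilde O(n)$ space between phases. Next, the clusters of \Cref{alg:clusters} are computed by launching $n$ bounded-distance shortest-path-tree computations in parallel --- one per root $u \in V$ --- sharing the same $O(R)$ passes, each pruned at vertices $v$ with $\delta(u, v) \geq \dist(v, A_{i+1}, R, G)$ (the latter precomputed by a single super-source SSSP over $A_{i+1}$). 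By \Cref{thm:clusters} the aggregate space for all trees is $\sum_v |C(v)| = \tilde O(p n^{1 + 1/p}) = O(n^{1 + o(1)})$. Embedding this cluster computation inside the weight-scaling loop of \Cref{alg:hop_reduction} (factor $\tilde O(\log W)$) and the hierarchical outer loop of \Cref{alg:hop_set} (factor $p \leq \log n$) then yields, via \Cref{thm:hop_set}, an $(n^{o(1)}, o(1))$-hop set $F$ of size $O(n^{1 + o(1)} \log W)$, using $O(n^{o(1)} \log W)$ passes and $O(n^{1 + o(1)} \log W)$ space. Once built, $F$ is kept in memory.

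Finally, to approximate $\dist(s, v, G)$ I would compute, on $H = G \cup F$, an approximation of $\dist^{n^{o(1)}}(s, v, H)$; by the definition of a hop set this approximates $\dist(s, v, G)$ within a $(1 + o(1))$ factor. Applying the weight-rounding of \Cref{thm:property of weight rounding} to each of the $O(\log(nW))$ scales and running, on each rounded graph, the bounded-distance SSSP primitive up to depth $O(n^{o(1)}/\epsilon)$ --- with edges of $F$ served from memory and edges of $G$ read from the stream --- adds another $O(n^{o(1)} \log W)$ passes while remaining within the claimed space budget. The subtle point I expect to handle most carefully is the space accounting for the $n$ parallel cluster computations: the bound $\tilde O(\sum_v |C(v)|)$ is a final-size bound, so during the pruned BFS one must free the frontier entry for a vertex as soon as it is either admitted to a cluster or excluded, ensuring that live memory never exceeds $\tilde O(\sum_v |C(v)| + n)$ at any intermediate moment.
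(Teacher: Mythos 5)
Your proposal is correct and follows essentially the same route as the paper: simulate the hop set construction via the bounded-distance shortest-path-tree primitive ($O(R)$ passes, $\tilde O(n)$ space), run $n$ pruned tree computations in parallel for the clusters with space charged to $\sum_v |C(v)|$, iterate over the weight scales and hierarchy levels, and finish with weight-rounded bounded-hop SSSP on $G \cup F$. Your closing remark on keeping the live memory of the parallel cluster BFS within $\tilde O(\sum_v |C(v)| + n)$ is a sensible refinement the paper leaves implicit, but it does not change the argument.
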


\section{Conclusion and Open Problems}\label{sec:conslusion}

We present deterministic distributed $(1+o(1))$-approximation algorithms for solving the \sssp problem on distributed weighted networks and other settings. The efficiencies of our algorithms match the known lower bounds up to an $n^{o(1)}$ factor. Important tools are a deterministic hop set construction and a deterministic process that replaces the well-known (randomized) hitting set argument.

In the conference version of this paper~\cite{HenzingerKN-STOC16}, we left as an open problem whether the factor of~$ n^{o(1)} $ in our bounds could be eliminated, and in particular we asked whether this can be done by constructing a $ (\polylog{n}, o(1)) $-hop set of size $ \tilde O (n) $.
Such a hop set construction, however can be ruled out by a recent lower bound of Abboud, Bodwin, and Pettie~\cite{AbboudBP17}.
Our open problem was solved nonetheless by Becker et al.~\cite{BeckerKKL16}, who, using tools from continuous optimization, showed that, in all the models that we considered above, a $ (1 + \epsilon)$-approximation can be obtained with an overhead of $ \epsilon^{-O(1)} \polylog{n} $ compared to known lower bounds.

Our deterministic replacement of the hitting set argument works only when the input graph is undirected. Our second open problem is thus how to derandomize algorithms on directed graphs (where edge directions do not affect the communication; see \cite{Nanongkai-STOC14,Nanongkai-SIROCCO16} for more details). In particular, it is known that \sssp can be $(1+\epsilon)$-approximated on directed weighted graphs in $\tilde O(\sqrt{n D} + D)$ time~\cite{Nanongkai-STOC14}, and single-source reachability can be computed in $\tilde O(\sqrt{n} D^{1/4}+D)$ time~\cite{GhaffariU15}. However, these results are obtained by {\em randomized} algorithms, and whether there are sublinear-time {\em deterministic} algorithms for these problems is still open. 

Finally, while our paper essentially settles the running time for computing single-source shortest paths approximately, the best running time for solving this problem {\em exactly} is $ O ((n \log{n})^{2/3} \diam^{1/3} + (n \log{n})^{5/6}) $~\cite{Elkin17}, a recent result obtained after the conference version of our paper appeared.
This leaves a gap to the $ \tilde \Omega (\sqrt{n} + \diam) $ lower bound~\cite{Elkin06}, and it is therefore natural to ask for an improved upper or lower bound.
In fact, in the past few years we have much better understood how to {\em approximately} solve basic graph problems, such as minimum cut, \sssp, all-pairs shortest paths, and maximum flows, on distributed networks (e.g., \cite{NanongkaiS14_disc,GhaffariK13,GhaffariKKLP15}). However, when it comes to solving these problems {\em exactly}, almost nothing is known. Understanding the complexity of exact algorithms is an important open problem. 

We refer the reader to \cite{Nanongkai-STOC14,Nanongkai-SIROCCO16} for further open problems.

\appendix
\section*{Appendix}
\section{Proof of \Cref{thm:property of weight rounding}}\label{sec:proof of property of weight rounding}

To prove~\eqref{eq:apsp approx main one}, let $ \pi_i $ be a shortest path between $ u $ and $ v $ in $ G_i $.
Observe that if we consider this path in $ G $ (with the corresponding edge weights), then its total weight is at least the distance between $ u $ and $ v $ in $ G $, i.e., $ w (\pi_i, G) \geq \dist (u, v, G) $, because no path in $ G $ can have weight less than the shortest path in $ G $.
We therefore get
\begin{multline*}
\rho_i \cdot \dist(u, v, G_i) = \rho_i \cdot \sum_{(x,y) \in \pi_i} w (x, y, G_i) = \sum_{(x,y) \in \pi_i} \rho_i \cdot \left\lceil \frac{w (x, y, G)}{\rho_i} \right\rceil \\
 \geq \sum_{(x,y) \in \pi_i} w (x, y, G) = w (\pi_i, G) \geq \dist (u, v, G) \, .
\end{multline*}

To prove~\eqref{eq:apsp approx main three}, let $ \pi $ be a shortest $h$-hop path from $ u $ to $ v $ in $ G $.
Observe that $ w (\pi, G_i) \geq \dist (u, v, G_i) $, as again no path has smaller weight than the shortest path in~$ G_i $.
By additionally exploiting the assumption $ \dist^h (u, v, G) \geq 2^i $, we get
\begin{align*}
\dist (u, v, G_i) \cdot \rho_i &\leq w (\pi, G_i) \cdot \rho_i = \sum_{(x, y) \in \pi} w (x, y, G_i) \cdot \rho_i
 = \sum_{(x, y) \in \pi} \left\lceil \frac{w (x, y, G)}{\rho_i} \right\rceil \cdot \rho_i \\
 &\leq \sum_{(x, y) \in \pi} (w (x, y, G) + \rho_i)
 = w (\pi, G) + | \pi | \cdot \rho_i
 = \dist^h (u, v, G) + | \pi | \cdot \rho_i \\
 &\leq \dist^h (u, v, G) + h \cdot \rho_i
 = \dist^h (u, v, G) + \epsilon 2^i
 \leq \dist^h (u, v, G) + \epsilon \dist^h (u, v, G) \\
 &= (1 + \epsilon) \dist^h (u, v, G) \, .
\end{align*}
To prove~\eqref{eq:apsp approx main two}, we combine~\eqref{eq:apsp approx main three} with the assumption $ \dist^h (u, v, G) \leq 2^{i+1} $:
\begin{equation*}
\dist (u, v, G_i) \leq \frac{(1 + \epsilon) \dist^h (u, v, G)}{\rho_i} = \frac{(1 + \epsilon) h \dist^h (u, v, G)}{\epsilon 2^i} \leq \frac{(1 + \epsilon) h 2^{i+1}}{\epsilon 2^i} = (2 + 2 / \epsilon) h \, .
\end{equation*}

\section{Proof of \Cref{lem:existence of hitting set}}\label{sec:proof of existence of hitting set}

We prove the claim by the probabilistic method.
Consider a sampling process that determines a set~$ T \subseteq U $ by adding each element of $ U $ to $ T $ independently with probability $ 1 / (2 x) $.
Let $ E_0 $ denote the event that $ | T | > |U| / x $, and for every $ 1 \leq j \leq k $ let $ E_j $ denote the event that $ T \cap S_j = \emptyset $.
First, observe that the size of $ T $ is $ |U| / (2 x) $ in expectation.
By Markov's inequality, we can bound the probability that the size of $ T $ is at most twice the expectation by at least $ 1/2 $ and thus $ \Pr [E_0] = \Pr [|T| > |U| / x] \leq 1/2 $.
Furthermore, for every $ 1 \leq j \leq k $, the probability that $ S_j $ contains no node of $ T $ is
\begin{equation*}
\Pr [E_j] = \left( 1 - \frac{1}{2x} \right)^{|S_j|} \leq 1 - \left( 1 - \frac{1}{2x} \right)^{2 x  \ln{3k}} \leq  \frac{1}{e^{\ln{3 k}}} = \frac{1}{3 k} \, .
\end{equation*}
The set $ T $ fails to have the desired properties of a small hitting set if at least one of the events $ E_j $ occurs.
By the union bound we have
\begin{equation*}
\Pr \left[ \bigcup_{0 \leq j \leq k} E_j \right] \leq \sum_{0 \leq j \leq k} \Pr [E_j] \leq \frac{1}{2} + k \cdot \frac{1}{3 k} = \frac{1}{2} + \frac{1}{3} < 1 \, .
\end{equation*}
It follows that the sampling process constructed a hitting set $ T $ for $ \cC = \{ S_1, \ldots, S_k \} $ of size at most $ | T | \leq |U| / x $ with nonzero probability.
Therefore a set $ T $ with these properties must really exist.
This finishes the proof of \Cref{lem:existence of hitting set}.

\section{Ruling Set Algorithm}\label{sec:ruling set algorithm}

For each node $v$, we represent its ID by a binary number $v_1 v_2 \ldots v_{\lambda}$. Initially, we set $T_0=U$. The algorithm proceeds for $b$ iterations. 

In the $i$th iteration, we consider $u_i$ for every node $u\in T_{i-1}$. If $u_i=0$, $v$ remains in $T_i$ and sends a ``beep'' message to every node within distance $c-1$. This takes $c-1$ rounds as beep messages from different nodes can be combined. If $u_i=1$, it remains in $T_i$ if there is no node $v\in T_{i-1}$ such that $\dist(u, v, G)\leq c$ and $v_i=0$; in other words, it remains in $T_i$ if it does not hear any beep after $c-1$ rounds. The output is $T=T_\lambda$. 
The running time of the above algorithm is clearly $O(c \lambda)=O(c\log n)$. Also, the distance between every pair of nodes in $T$ is at least $c$ since for every pair of nodes $u$ and~$v$ of distance less than $c$, there is an $i$ such that $u_i\neq v_i$, and in the $i$th iteration if both $u$ and~$v$ are in~$T_{i-1}$, then one of them will send a beep and the other one will not be in $T_i$.
Finally, it can be shown by induction that after the $i$th round every node in~$U$ is at distance at most $i$ from some node in $T_i$; thus it follows that every node in~$U$ is at distance at most $c \lambda$ from some node in $T$.

\section*{Acknowledgments}
The authors thank Michael Elkin, Stephan Friedrichs, and Christoph Lenzen for their comments and questions on the previous version of this paper. D.~Nanongkai thanks Michael Elkin for bringing the notion of hop set to his attention.
The authors thank the anonymous reviewers of SICOMP for their detailed comments.

\printbibliography[heading=bibintoc] 

\end{document}